\newcommand\blfootnote[1]{%
	\begingroup
	\renewcommand\thefootnote{}\footnote{#1}%
	\addtocounter{footnote}{-1}%
	\endgroup
}
\begin{document}
%\title{Some Combinatorial Problems in Two Power-law Graphs with Identical Power Exponent}
\title{Some Combinatorial Problems in Power-law Graphs}
\author{Che~Jiang\thanks{}} %\email{16307110013@fudan.edu.cn}
\author{Wanyue~Xu} %\email{xuwy@fudan.edu.cn}
\author{Xiaotian~Zhou} %\email{16300180049@fudan.edu.cn}
\author{Zhongzhi~Zhang} \email{zhangzz@fudan.edu.cn}
\author{Haibin~Kan} %\email{hbkan@fudan.edu.cn}
\affiliation{Shanghai Key Laboratory of Intelligent Information
	Processing, School of Computer Science, Fudan University, Shanghai 200433, China; \\
Shanghai Engineering Research Institute of Blockchain,  Fudan University, Shanghai, 200433, China}
\shortauthors{C. Jiang, W. Xu, X. Zhou, Z. Zhang and H. Kan}
% The paper headers

% make the title area
\keywords{Maximum matching, Maximum independence set, Minimum dominating set, Matching number, Independence Number, Domination Number, Scale-free network,  Complex network}

\begin{abstract}
The power-law behavior is ubiquitous in a majority of real-world networks, and it was shown to have a strong effect on various combinatorial, structural, and dynamical properties of graphs. For example, it has been shown that in real-life power-law networks, both the matching number and the domination number are relatively smaller, compared with homogeneous graphs. In this paper, we study analytically several combinatorial problems for two power-law graphs with the same number of vertices, edges, and the same power exponent. For both graphs, we determine exactly or recursively their matching number, independence number, domination number, the number of maximum matchings, the number of maximum independent sets, and the number of minimum dominating sets. We show that power-law behavior itself cannot characterize the combinatorial properties of a heterogenous graph. Since the combinatorial properties studied here have found wide applications in different fields, such as structural controllability of complex networks, our work offers insight in the applications of these combinatorial problems in power-law graphs.
\end{abstract}

\maketitle

\section{Introduction}\label{sec:introduction}

Let  $\mathcal{G}=(\mathcal{V} ,\mathcal{E})$ be a connected unweighted graph with vertex set $\mathcal{V}$  and edge set $\mathcal{E}$. A matching of  graph $\mathcal{G}$ is a subset of edge set $\mathcal{E}$, where no two edges are incident to a common vertex. A  matching of maximum cardinality is called a maximum matching. The matching number of graph $\mathcal{G}$ is the cardinality of a maximum matching. An independent set of a graph $\mathcal{G}$ is a subset $\mathcal{I}$ of vertex set $\mathcal{V}$, such that each pair of vertices in  $\mathcal{I}$ is not adjacent in $\mathcal{G}$. A maximum independent set (MIS) is  an independent set  $\mathcal{I}$ with the largest cardinality.  The cardinality of a MIS for graph $\mathcal{G}$ is called its independent number. Graph $\mathcal{G}$ is called a unique independence graph if it has a unique MIS~\cite{HoSt85}.  A dominating set of a graph $\mathcal{G}$ is a subset $\mathcal{D}$ of vertex set $\mathcal{V}$, such that every vertex in $\mathcal{V} \setminus \mathcal{D}$ is connected to at least one vertex in set $\mathcal{D}$. A dominating set $\mathcal{D}$ is called a minimum dominating set (MDS) if it has the least cardinality. The cardinality of a MDS for graph $\mathcal{G}$ is called its domination number.

\blfootnote{*Currently at: Department of Physics, Fudan University, Shanghai 200433, China}

The aforementioned combinatorial problems have been applied to numerous aspects in various disciplines or practical areas. For example, the size and the number of maximum matchings have found applications in physics~\cite{Mo64}, chemistry~\cite{Vu11}, computer science~\cite{LoPl86}; the MIS problem is associated with many fundamental graph problems, being equivalent to the minimum vertex cover problem~\cite{Ka72} in the same graph and the maximum clique problem in its complement graph~\cite{PaXu94}, and has been widely used to collusion detection in voting pools~\cite{ArFaDoSiKo11} and wireless networking schedules~\cite{JoLiRySh16}; while the MDS problem is closely related to multi-document summarization in sentence graphs~\cite{ShLi10}, routing on ad hoc wireless networks~\cite{Wu02}, and controllability in protein interaction networks~\cite{Wu14}. Of particular interest is the connection of maximum matchings and MDS to structural controllability of complex networks~\cite{LiBa16}, in the contexts of  vertex~\cite{LiSlBa11} and edge~\cite{NeVi12} dynamics, respectively.

In view of the intrinsic relevance in both theoretical and practical scenarios, the above combinatorial problems have received considerable attention from the scientific community of theoretical computer science, theoretical physics, discrete mathematics, among others. In the past decade, these problems have become very active and have been popular research objects. Many authors have devoted their efforts to developing algorithms for the problems associated with maximum matchings~\cite{YaZh05,YaZh08, ChFrMe10,Yu13, ZhWu15, LiZh17}, MISs~\cite{XiNa13,HoKlLiLiPoWa15,ChEn16}, as well as MDSs~\cite{FoGrPySt08, HeIs12,NaAk12,GaHaK15,ShLiZh17}. Although  scientists have made a concerted effort, solving these problems is an important challenge and often computationally difficult. For example, finding a MDS~\cite{HaHeSl98} or a MIS~\cite{Ro86,HaRa97} of a general graph is NP-hard; while enumerating maximum matchings, or MISs, or MDSs  in a graph is more difficult, which is \#P-complete even in a bipartite graph~\cite{Va79TCS,Va79SiamJComput}. Thus, it makes sense to construct or seek special graph classes for which these combinatorial problems can be exactly solved~\cite{LoPl86}, in order to achieve a particular goal.

On the other hand, extensive empirical study~\cite{Ne03} has uncovered that a majority of real-world networks are typically scale-free~\cite{BaAl99}, characterized by a power-law distribution $P(k) \sim k^{-\gamma}$ ($2< \gamma \leq 3$) for their vertex degree. This nontrivial scale-free structure is a fundamental concept in the study of the emerging network sciences. Many previous studies have shown that the scale-free topology plays an important role in various structural~\cite{ChLu02}, combinatorial~\cite{LiSlBa11,ZhWu15, NaAk12, GaHaK15, ShLiZh17}, and dynamical~\cite{AlJeBa00,ChWaWaLeFa08,YiZhPa19} properties of a graph. In the context of combinatorial aspect, it has been shown that compared with non-scale-free graphs, in scale-free networks, both the matching number~\cite{LiSlBa11} and the number of maximum matchings~\cite{ZhWu15} are significantly smaller. It is the same with the domination number and the number of MDSs~\cite{ NaAk12, GaHaK15, LiZh17}. In addition, scale-free architecture also strongly affects the MIS problem~\cite{ShLiZh18} and its related optimization algorithms~\cite{FePaPa08}. As is well known, in addition to the scale-free topology, many real networks show simultaneously some other remarkable properties, e.g., self-similarity~\cite{SoHaMa05}. Thus, it is difficult to separate the role or effect of a specific structural property in the performance of a network. Then, an interesting question is raised naturally: whether the scale-free structure is a unique ingredient characterizing the above combinatorial problems in power-law graphs?

In this paper, we study several combinatorial problems in two self-similar scale-free networks with the same power exponent: one is fractal but not small-world~\cite{ZhLiWuZo11}, the other is small-world but not fractal~\cite{HiBe06}. For both graphs, by using the decimation technique based on their self-similarity, we determine exactly the matching number, the independence number, and the domination number. Moreover, we determine exactly or recursively the number of maximum matchings, the number of MISs, and the number of MDSs. We show that the two networks differ in the studied quantities, which implies that scale-free topology alone cannot determine the combinatorial properties of power-law graphs, including maximum matchings, MISs and MDSs. Moreover, our exact results are instrumental for testing heuristic or stochastic algorithms associated with related combinatorial problems.

\section{Constructions and structural properties of self-similar scale-free networks}

In this section, we give a brief introduction to constructions and their structural properties of two self-similar scale-free networks, with the same number of vertices,  the same number of edges, and the same power exponent.   One is fractal but not small-world~\cite{ZhLiWuZo11}, the other is small-world but not fractal~\cite{HiBe06}.

\subsection{Constructions and structural properties of fractal scale-free networks}

We first introduce the fractal scale-free networks under consideration,  which are generated by an iterative way.
Let $\mathcal{G}_n$, $n \geq 0$, denote the fractal scale-free network after $n$ iterations. Then, $\mathcal{G}_n$ is constructed as follows: For $n=0$, $\mathcal{G}_0$ is the complete graph $\mathcal{K}_2$ with two vertices connected by an iterative edge. For $n\geq 1$, $\mathcal{G}_n$ is obtained from $\mathcal{G}_{n-1}$  by performing the following operation: replace each iterative edge by the connected cluster   on the right-hand side (rhs) of the arrow  in Fig.~\ref{fcons0}.

%%%%%%%%%%%%%%%%%%%%%%%%%%%%%%%%%%%%%%%%%
\begin{figure}
\centering
\includegraphics[width=0.3\textwidth]{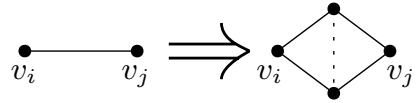}
\caption{Construction method for the fractal scale-free networks. To obtain network of next iteration, each iterative edge $(v_i,v_j)$ of current iteration is replaced by two parallel paths of two iterative edges (solid lines) on the rhs of the arrow, with $v_i$ and $v_j$ being the end vertices of two paths, and then link the two new vertices other than $v_i$ and $v_j$ by a new non-iterative edge (dotted line). }
\label{fcons0}
\end{figure}
%%%%%%%%%%%%%%%%%%%%%%%%%%%%%%%%%%%%%%%%%

Figure~\ref{fcons} illustrates the construction process of the first several iterations.

%%%%%%%%%%%%%%%%%%%%%%%%%%%%%%%%%%%%%%%%%
\begin{figure}
\centering
\includegraphics[width=0.45\textwidth]{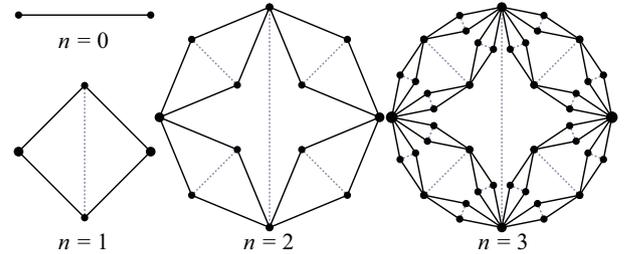}
\caption{The first three iterations of the fractal scale-free networks.}
\label{fcons}
\end{figure}
%%%%%%%%%%%%%%%%%%%%%%%%%%%%%%%%%%%%%%%%%

The fractal scale-free networks are self-similar, which can be easily seen from an alternative construction approach~\cite{ZhLiWuZo11} as shown in Fig.~\ref{fcons2}.  For $\mathcal{G}_n$, $n\geq0$, we call the two vertices in $\mathcal{G}_0$ as initial vertices, and denote them as $X_n$ and $Y_n$; while call the two vertices generated at iteration $1$ as hub vertices, and denote them as $W_n$ and $Z_n$. Then, given  the network $\mathcal{G}_{n}$, $n\geq 1$, $\mathcal{G}_{n+1}$ can be obtained by  merging four copies of $\mathcal{G}_{n}$ at their initial vertices. Let $\mathcal{G}_{n}^{(\theta)}$, $\theta=1,2,3,4$, be four replicas of $\mathcal{G}_{n}$, and denote  the two  initial vertices of $\mathcal{G}_{n}^{(\theta)}$  by $X_{n}^{(\theta)}$ and $Y_{n}^{(\theta)}$, respectively. Then, $\mathcal{G}_{n+1}$ can be obtained by merging $\mathcal{G}_{n}^{(\theta)}$, $\theta=1,2,3,4$,  with $X_{n}^{(1)}$
($Y_{n}^{(2)}$) and $X_{n}^{(3)}$ ($Y_{n}^{(4)}$) being identified as the initial vertex $X_{n+1}$ ($Y_{n+1}$) in $\mathcal{G}_{n+1}$, while $Y_{n}^{(1)}$
($Y_{n}^{(3)}$) and $X_{n}^{(2)}$ ($X_{n}^{(4)}$) being identified as the hub vertex $W_{n+1}$ ($Z_{n+1}$) in $\mathcal{G}_{n+1}$.  After the joining process, we link the two hub vertices $W_{n+1}$ and $Z_{n+1}$ by a non-iterative edge and get $\mathcal{G}_{n+1}$.

%%%%%%%%%%%%%%%%%%%%%%%%%%%%%%%%%%%%%%%%%
\begin{figure}
\centering
\includegraphics[width=0.45\textwidth]{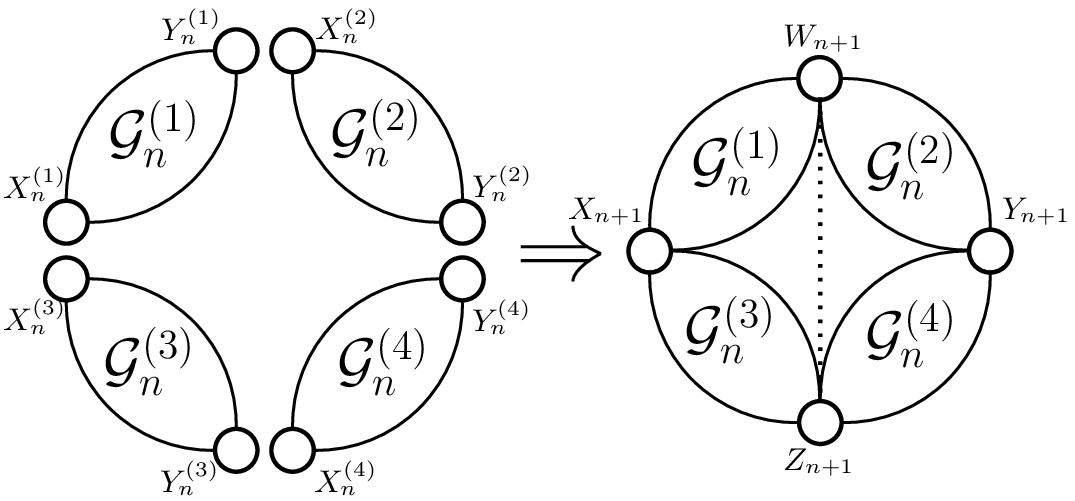}
\caption{Second approach for the construction of $\mathcal{G}_{n+1}$.}
\label{fcons2}
\end{figure}
%%%%%%%%%%%%%%%%%%%%%%%%%%%%%%%%%%%%%%%%%

Let $N_n$ and $E_n$, respectively, stand for the number of vertices and  the number of edges in $\mathcal{G}_n$. By the second construction rules, $N_n$ and $E_n$ satisfy relations $N_n = 4N_{n-1}-4$ and $E_n = 4E_{n-1}+1$. With the initial conditions $N_1 = 4$ and $E_1 = 5$, we have $N_n = \frac{2}{3}\left(4^{n}+2 \right)$ and $E_n=\frac{1}{3}\left(4^{n+1}-1 \right)$. Then the average degree of all vertices in $\mathcal{G}_n$  is $\frac{2  E_n}{N_n}=\frac{2(4^{n+1}-1)}{2 \times 4^n + 4}$, which is asymptotically equal to  $4$ for large $n$.

The resulting graph  $\mathcal{G}_n$ is  scale-free, since the degree of its vertices obeys a power-law distribution $P(k) \propto k^{-3}$. Moreover, it is fractal with a fractal dimension being $2$~\cite{ZhLiWuZo11}. However, it is not small-world, since for  large $n$, the average distance $\bar{d}_n$ of  $\mathcal{G}_n$ grows as a power function of  $N_n $, that is, $\bar{d}_n \sim (N_n)^{1/2}$.

\subsection{Constructions and structural properties of non-fractal scale-free networks}

The second networks we consider are non-fractal and scale-free, which are also constructed iteratively.  Let $\mathcal{G}'_n$, $n \geq 0$, denote the network after $n$ iterations. Then, $\mathcal{G}'_n$ is  built as follows. For $n=0$, $\mathcal{G}'_0$ is the complete graph $\mathcal{K}_2$ with two vertices connected by an iterative edge. For $n\geq 1$, $\mathcal{G}'_n$ is obtained from $\mathcal{G}'_{n-1}$  by performing the following operation: replace each iterative edge by the connected cluster   on the rhs of the arrow  in Fig.~\ref{nfcons0}.

%%%%%%%%%%%%%%%%%%%%%%%%%%%%%%
\begin{figure}
\centering
\includegraphics[width=0.3\textwidth]{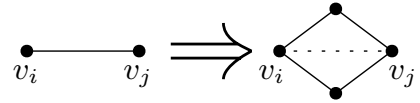}
\caption{Construction method for the non-fractal scale-free networks. To obtain network of next iteration, each iterative edge $(v_i,v_j)$ of current iteration is replaced by two parallel paths of two iterative edges (solid lines) on the rhs of the arrow, with $v_i$ and $v_j$ being the end vertices of two paths, and then link $v_i$ and $v_j$ by a new non-iterative edge (dotted line).}
\label{nfcons0}
\end{figure}
%%%%%%%%%%%%%%%%%%%%%%%%%%%%%%%%%%%%%%%%

Figure~\ref{nfcons} illustrates the construction process of the first several iterations.

%%%%%%%%%%%%%%%%%%%%%%%%%%%%%%%%%%%%%%%%%
\begin{figure}
\centering
\includegraphics[width=0.45\textwidth]{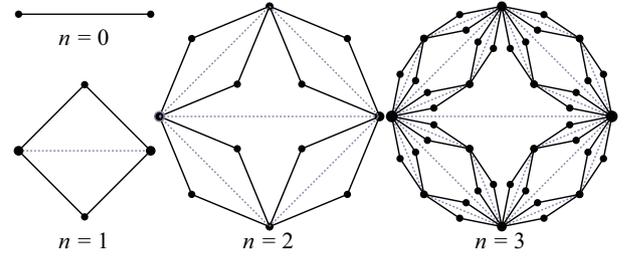}
\caption{The first three iterations of the non-fractal scale-free networks.}
\label{nfcons}
\end{figure}
%%%%%%%%%%%%%%%%%%%%%%%%%%%%%%%%%%%%%%%%%

The non-fractal scale-free network $\mathcal{G}'_n$ is also self-similar, which suggests another construction approach highlighting its self-similarity~\cite{HiBe06} as shown in Fig.~\ref{nfcons2}.  For $\mathcal{G}'_n$, $n\geq0$, we call the two vertices in $\mathcal{G}'_0$ as hub vertices, and denote them as $X'_n$ and $Y'_n$; while call the two vertices generated at iteration $1$ as border vertices, and denote them as $W'_n$ and $Z'_n$. Then, given  the network $\mathcal{G}'_{n}$, $n\geq 1$, $\mathcal{G}'_{n+1}$ can be obtained by  merging four copies of $\mathcal{G}'_{n}$ at their hub vertices. Let $\mathcal{G}_{n}^{'(\theta)}$, $\theta=1,2,3,4$, be four replicas of $\mathcal{G}'_{n}$, and denote  the two hub vertices of $\mathcal{G}_{n}^{'(\theta)}$  by $X_{n}^{'(\theta)}$ and $Y_{n}^{'(\theta)}$, respectively. Then, $\mathcal{G}'_{n+1}$ can be obtained by merging $\mathcal{G}_{n}^{'(\theta)}$, $\theta=1,2,3,4$,  with $X_{n}^{'(1)}$
($Y_{n}^{'(2)}$) and $X_{n}^{'(3)}$ ($Y_{n}^{'(4)}$) being identified as the hub vertex $X'_{n+1}$ ($Y'_{n+1}$) in $\mathcal{G}'_{n+1}$, while $Y_{n}^{'(1)}$
($Y_{n}^{'(3)}$) and $X_{n}^{'(2)}$ ($X_{n}^{'(4)}$) being identified as the bounder  vertex $W'_{n+1}$ ($Z'_{n+1}$) in $\mathcal{G}'_{n+1}$.  After the joining process, we link the two hub vertices $X'_{n+1}$ and $Y'_{n+1}$ by a non-iterative edge and get $\mathcal{G}'_{n+1}$.

%%%%%%%%%%%%%%%%%%%%%%%%%%%%%%%%%%%%%%%%%
\begin{figure}
\centering
\includegraphics[width=0.45\textwidth]{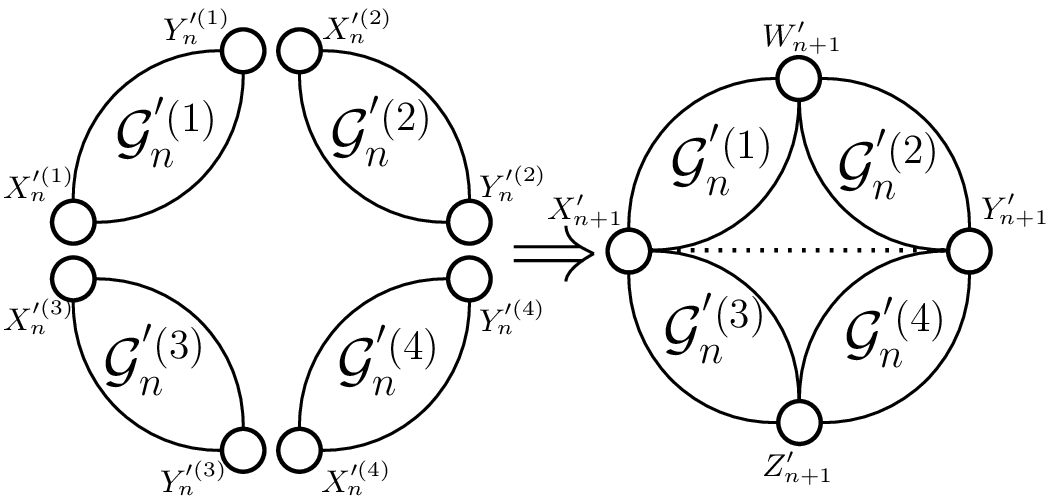}
\caption{Second  approach for the construction of $\mathcal{G}'_{n+1}$.}
\label{nfcons2}
\end{figure}
%%%%%%%%%%%%%%%%%%%%%%%%%%%%%%%%%%%%%%%%%

By construction, $\mathcal{G}'_n$ has the same number of vertices $N_n$, the same number of edge $E_n$, and thus the same average degree as those of  $\mathcal{G}_n$.  Moreover, the $\mathcal{G}'_n$ is also scale-free with the same power exponent 3 as that of $\mathcal{G}_n$. However, different from $\mathcal{G}_n$,  $\mathcal{G}'_n$ is non-fractal since its fractal dimension is infinite, but is small-world, with its average distance  average distance $\bar{d}_n$  growing logarithmically with the number of vertices  $N_n$.

After introducing the construction and topological properties of the two self-similar scale-free networks,  in what follows, by using their self-similarity we will study some combinatorial problems for these two networks, including the matching number, the independence number, the domination number, the number of maximum matchings, the number of MISs, and the number of MDSs. We will show that for the studied quantities, the two networks exhibit  quite different behaviors. We note that in the process of the following computation or proof, we employ the same notation  for $\mathcal{G}_n$ and $\mathcal{G}'_n$ in the case without inducing  confusion.

\section{Matching number and the number of maximum matchings}

In this section, we study the matching number and the number of maximum matchings in the self-similar scale-free networks.

\subsection{Matching number and the number of maximum matchings in fractal scale-free networks}

We first study the matching number and the number of maximum matchings in graph $\mathcal{G}_n$.

\subsubsection{Matching number}

Let $\beta_n$ denote the matching number of graph $\mathcal{G}_n$. In order to determine $\beta_n$, we define some intermediate quantities. Note that according to the number of covered initial vertices,  all the matchings of $\mathcal{G}_n$ can be classified into three types: $\Omega^0_n$, $\Omega^1_n$ and $\Omega^2_n$, where $\Omega^k_n$, $k=0,1,2$, represent the set of matchings with each  covering exactly $k$ initial vertices  of $\mathcal{G}_n$.  Let $\Theta^k_n$, $k=0,1,2$, be the subset of $\Omega^k_n$, where each matching has the largest  cardinality, denoted by $\beta^k_n$, $k=0,1,2$. Then,  $\beta_n={\rm max}\{\beta^0_n, \beta^1_n,\beta^2_n\}$.

\begin{theorem}
The matching number of graph $\mathcal{G}_n$ is $\beta_n=\frac{4^{n}+2}{3}$.
\end{theorem}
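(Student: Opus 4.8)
The plan is to set up a system of recursions for the three quantities $\beta_n^0, \beta_n^1, \beta_n^2$ defined just before the statement, using the self-similar decomposition of $\mathcal{G}_{n+1}$ into four copies $\mathcal{G}_n^{(\theta)}$ of $\mathcal{G}_n$ glued at initial vertices (Fig.~\ref{fcons2}), and then solve the recursion. First I would note the base case: $\mathcal{G}_1$ is a $4$-cycle on $\{X_1, W_1, Y_1, Z_1\}$ with an extra edge $W_1 Z_1$, so by inspection $\beta_1^0 = 2$ (match $X_1W_1$ and $Y_1Z_1$ — wait, those cover initial vertices, so rather match $W_1Z_1$ together with... actually in $\mathcal{G}_1$ a matching covering $0$ initial vertices can only use edge $W_1Z_1$, giving $\beta_1^0 = 1$; a matching covering both initial vertices has size $2$, e.g. $\{X_1W_1, Y_1Z_1\}$, so $\beta_1^2 = 2$; and $\beta_1^1 = 1$). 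One must compute these small cases carefully from the actual figure.

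Next I would derive the recursions. In the merge, $X_{n+1}$ is the identification of $X_n^{(1)}$ and $X_n^{(3)}$; $Y_{n+1}$ is $Y_n^{(2)} = Y_n^{(4)}$; the hub $W_{n+1}$ is $Y_n^{(1)} = X_n^{(2)}$ and $Z_{n+1}$ is $Y_n^{(3)} = X_n^{(4)}$; finally a new edge $W_{n+1}Z_{n+1}$ is added. A matching $M$ of $\mathcal{G}_{n+1}$ restricts to a matching on each copy, but consistency is required at the four shared vertices, and one may additionally use the edge $W_{n+1}Z_{n+1}$. I would organize the count by which of the six "interface" vertices are covered and by whether the new edge is used, writing each of $\beta_{n+1}^0,\beta_{n+1}^1,\beta_{n+1}^2$ as a maximum over admissible combinations of $\beta_n^0,\beta_n^1,\beta_n^2$ (one contribution per copy) plus a $0/1$ term for the new edge. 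For instance, I expect something of the shape $\beta_{n+1}^2 = \max$ over valid assignments where copies $1,3$ each cover their $X$-side initial vertex; the cleanest route is to conjecture from the small cases that $\beta_n^0 = \beta_n^1 = \beta_n^2 - 1$ eventually, or some similarly rigid relation, and verify it is preserved by the recursion — this both simplifies the algebra and yields a single scalar recursion for $\beta_n := \beta_n^2$.

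Then I would solve that scalar recursion. With four copies contributing and the structure of the merge, I anticipate a relation like $\beta_{n+1} = 4\beta_n - c$ for a small constant $c$ (matching the vertex recursion $N_{n+1} = 4N_n - 4$, since $\beta_n$ should track $N_n$ up to an additive constant: indeed $\frac{4^{n}+2}{3} = \tfrac12 N_n - 1$, consistent with each copy losing its two shared initial vertices from the count). Solving $\beta_{n+1} = 4\beta_n - 2$ with the correct initial value gives the closed form $\beta_n = \frac{4^n+2}{3}$; I would then confirm $\beta_n = \max\{\beta_n^0,\beta_n^1,\beta_n^2\}$ is attained (presumably by $\beta_n^2$) to conclude.

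The main obstacle will be the bookkeeping in the recursion step: correctly enumerating the admissible joint states of the six interface vertices, ensuring no double counting of the identified vertices across copies, and handling the optional new edge $W_{n+1}Z_{n+1}$ — in particular proving the "$\le$" direction (that no matching of $\mathcal{G}_{n+1}$ can beat the recursive bound) requires that an arbitrary maximum matching, when cut along the interfaces, decomposes into pieces each of which is sub-optimal for its copy by a controlled amount. Getting the invariant relating $\beta_n^0,\beta_n^1,\beta_n^2$ exactly right, and checking it survives the induction, is where the real care is needed; once that is in hand the closed-form computation is routine.
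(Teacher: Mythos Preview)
Your plan is essentially the paper's own approach: set up recursions for $\beta_n^0,\beta_n^1,\beta_n^2$ via the four-copy decomposition, compute the base case $\beta_1^0=\beta_1^1=1$, $\beta_1^2=2$, and solve to get $\beta_n^0=\beta_n^1=\frac{4^n-1}{3}$, $\beta_n^2=\frac{4^n+2}{3}$ (so your conjectured invariant $\beta_n^0=\beta_n^1=\beta_n^2-1$ and the scalar recursion $\beta_{n+1}=4\beta_n-2$ are both exactly right). One small correction to your heuristic aside: in fact $\frac{4^n+2}{3}=\tfrac12 N_n$, not $\tfrac12 N_n-1$, so $\mathcal{G}_n$ has a perfect matching.
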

\begin{proof}
Since $\beta_n={\rm max}\{\beta^0_n, \beta^1_n,\beta^2_n\}$,  we next evaluate the three quantities $\beta^0_n$, $\beta^1_n$ and $\beta^2_n$, all of which can be determined graphically.

Figures~\ref{mfo0},~\ref{mfo1}, and~\ref{mfo2} show, respectively, all the available configurations of maximum matchings of graph $\mathcal{G}_{n+1}$ belonging to $\Omega^k_{n+1}$,  $k=0,1,2$, which contains all the matchings in $\Theta^k_{n+1}$. In Figs.~\ref{mfo0},~\ref{mfo1}, and~\ref{mfo2}, only the initial vertices $X_{n}^{(\theta)}$ and $Y_{n}^{(\theta)}$ of $\mathcal{G}_{n}^{(\theta)}$, $\theta=1,2,3,4$, forming $\mathcal{G}_{n+1}$ are shown explicitly, with filled circles representing covered vertices and empty circles representing vacant vertices. Note that in Figs.~\ref{mfo0},~\ref{mfo1}, and~\ref{mfo2}, if both of the hub vertices, $W_{n+1}$ and  $Z_{n+1}$, of $\mathcal{G}_{n+1}$ are vacant, then the non-iterative edge connecting them is  included in the matching in order to maximize its cardinality. From these three figures, we establish the following recursion relations for $\beta^0_n$, $\beta^1_n$, and $\beta^2_n$:
\begin{align} %\label{mfl}
\beta^0_{n+1}&={\rm max}\{4\beta^0_n+1 ,3\beta^0_n+\beta^1_n ,2\beta^0_n+2\beta^1_n\}\,, \label{beta0F}\\
\beta^1_{n+1}&={\rm max}\{3\beta^0_n+\beta^1_n+1 ,3\beta^0_n+\beta^2_n ,2\beta^0_n+2\beta^1_n ,  \notag\\
&\quad \quad\quad\quad
2\beta^0_n+\beta^1_n+\beta^2_n, \beta^0_n+3\beta^1_n \}\,,  \label{beta1F}\\
\beta^2_{n+1}&={\rm max}\{2\beta^0_n+2\beta^1_n+1 ,2\beta^0_n+\beta^1_n+\beta^2_n ,\beta^0_n+3\beta^1_n ,  \notag\\
&\quad \quad\quad\quad
2\beta^0_n+2\beta^2_n,\beta^0_n+2\beta^1_n+\beta^2_n ,4\beta^1_n \}\,.  \label{beta2F}
\end{align}
With initial condition $\beta^0_1=1$, $\beta^1_1=1$, and $\beta^2_1=2$, the above equations are solved to yield $\beta^0_n=\frac{4^{n}-1}{3}$, $\beta^1_n=\frac{4^{n}-1}{3}$, and $\beta^2_n=\frac{4^{n}+2}{3}$.
\end{proof}

Since the number of vertices $N_n$ in graph $\mathcal{G}_{n}$ is $N_n = \frac{2}{3}\left(4^{n}+2 \right)$, which is exactly twice as large as the matching number $\beta_n=\frac{4^{n}+2}{3}$,   there are perfect matchings in $\mathcal{G}_{n}$ for all $n\geq 0$.

%%%%%%%%%%%%%%%%%%%%%%%%%%%%%%%%%%%%%%%%%
\begin{figure}
    \centering
    \begin{minipage}[c]{0.5\textwidth}
        \centering
        \includegraphics[width=0.27\textwidth]{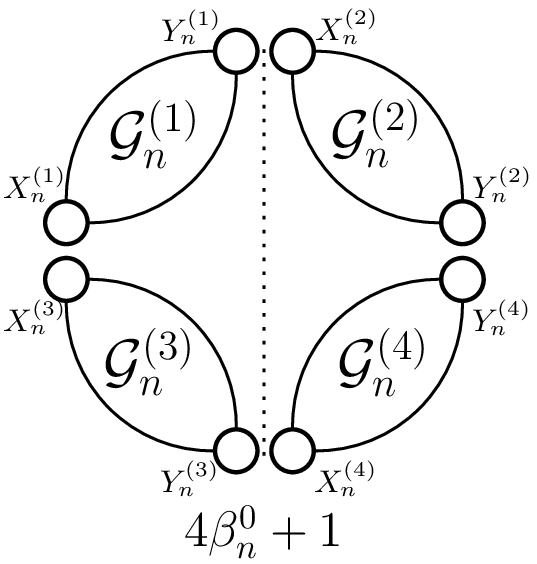}
        \includegraphics[width=0.3\textwidth]{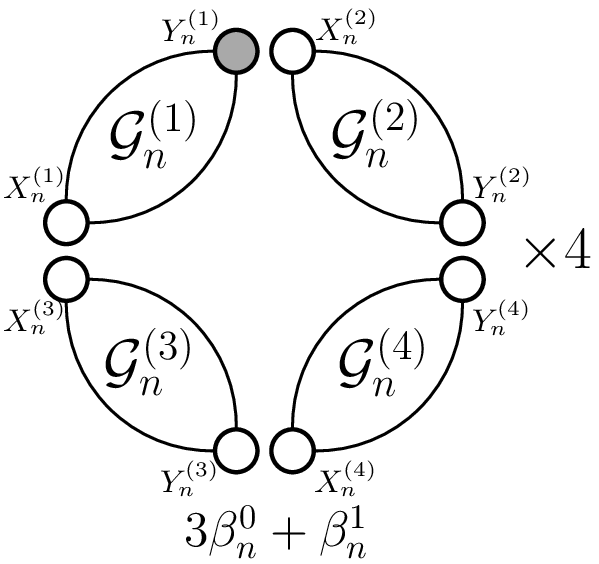}
        \includegraphics[width=0.3\textwidth]{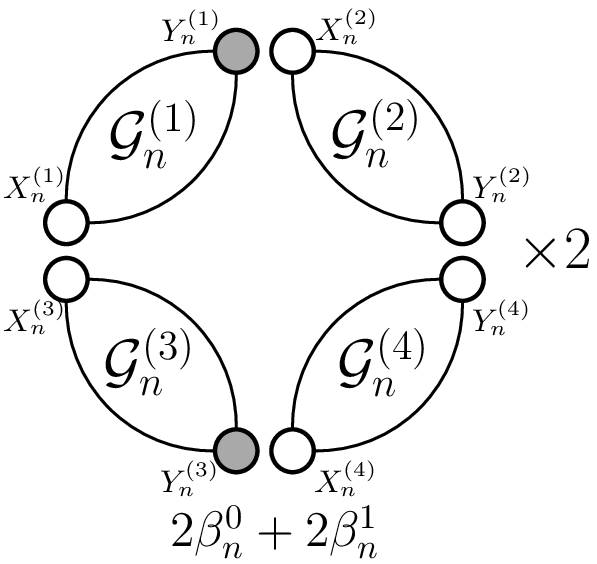}
        \\ \hspace*{\fill} \\
    \end{minipage}
    \begin{minipage}[c]{0.5\textwidth}
        \centering
        \includegraphics[width=0.3\textwidth]{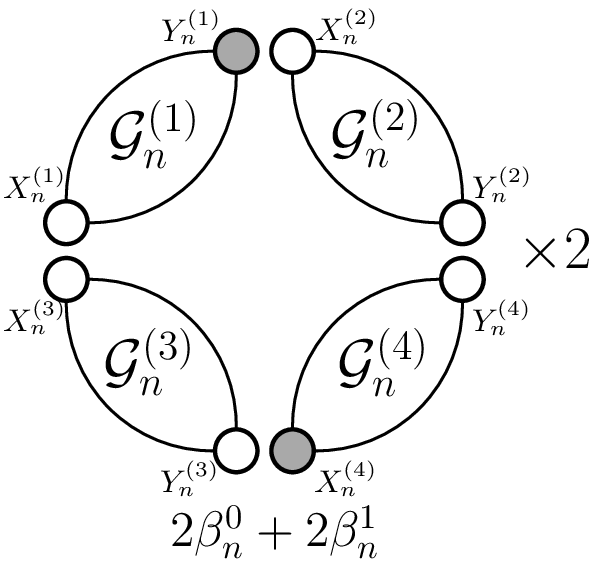}
    \end{minipage}
\caption{Illustration of all possible configurations and their sizes of matchings for graph $\mathcal{G}_{n+1}$ belonging to $\Omega^0_{n+1}$, which contain all matchings in $\Theta^0_{n+1}$. }
\label{mfo0}
\end{figure}
%%%%%%%%%%%%%%%%%%%%%%%%%%%%%%%%%%%%%%%%%

\begin{figure}
    \centering
    \begin{minipage}[c]{0.5\textwidth}
        \centering
        \includegraphics[width=0.3\textwidth]{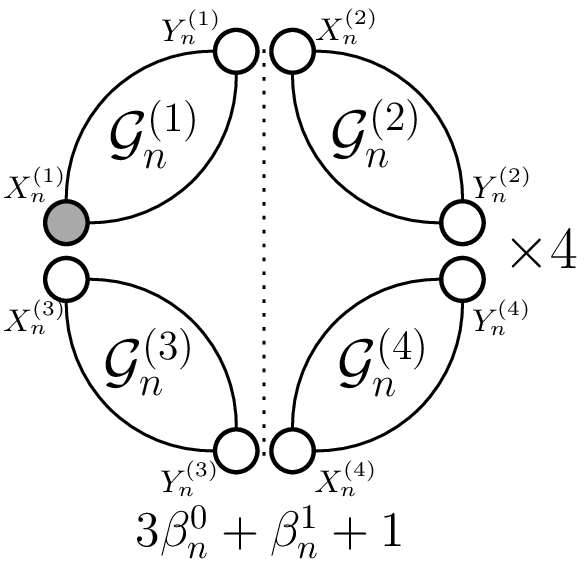}
        \includegraphics[width=0.3\textwidth]{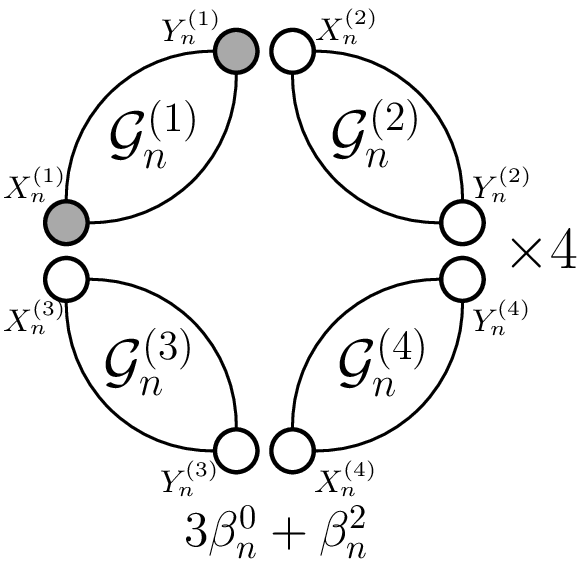}
        \includegraphics[width=0.3\textwidth]{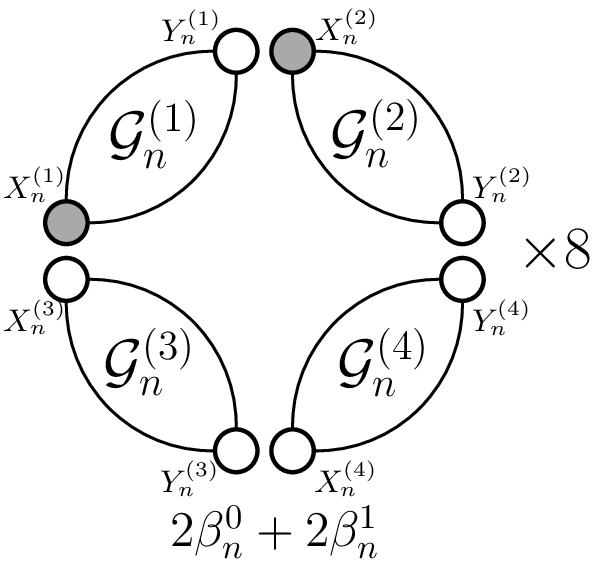}
        \\ \hspace*{\fill} \\
    \end{minipage}
    \begin{minipage}[c]{0.5\textwidth}
        \centering
        \includegraphics[width=0.3\textwidth]{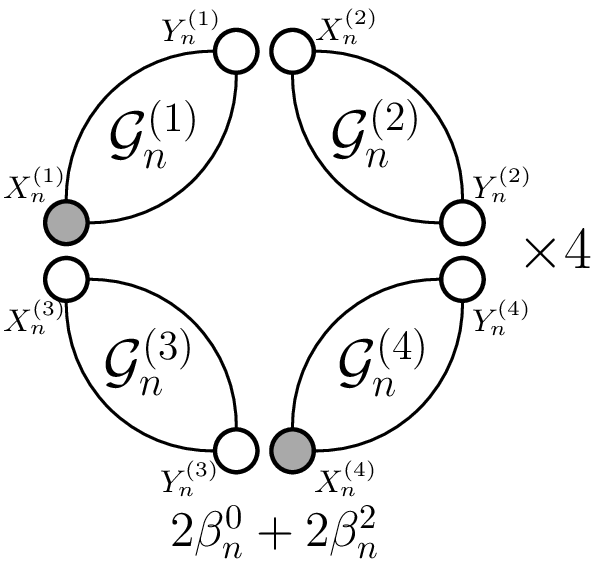}
        \includegraphics[width=0.3\textwidth]{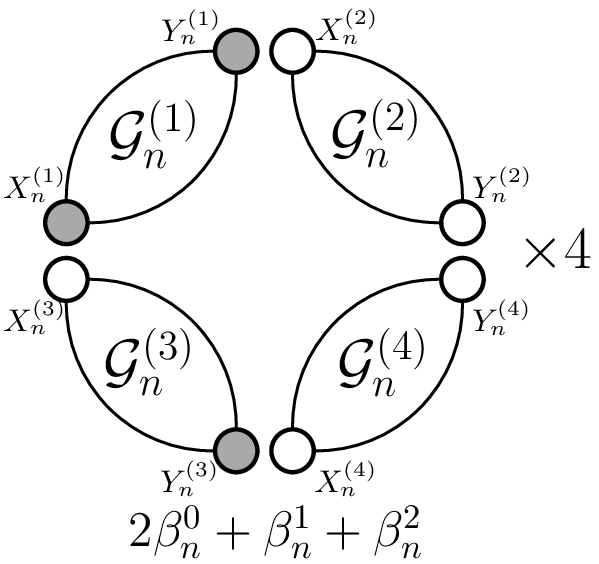}
        \includegraphics[width=0.3\textwidth]{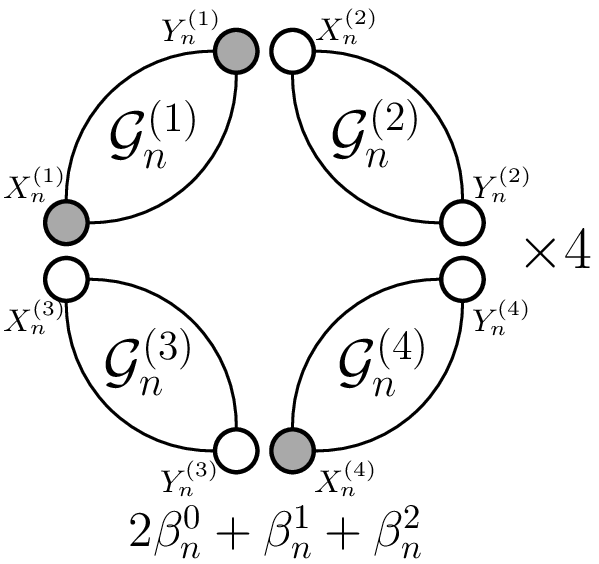}
        \\ \hspace*{\fill} \\
    \end{minipage}
    \begin{minipage}[c]{0.5\textwidth}
        \centering
        \includegraphics[width=0.3\textwidth]{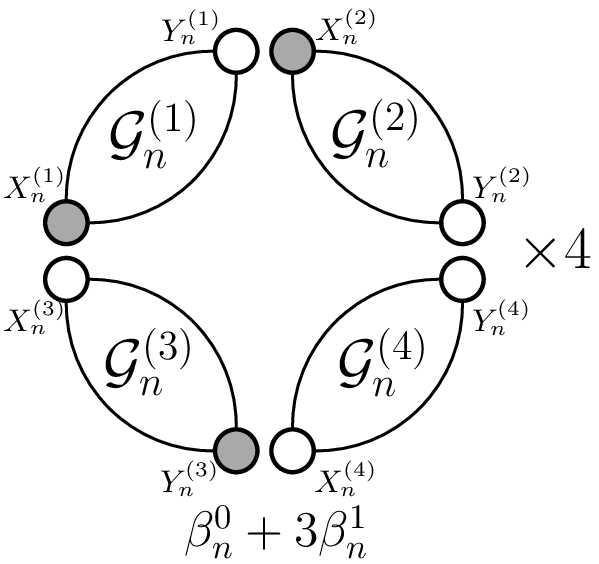}
    \end{minipage}
\caption{Illustration of all possible configurations and their sizes of matchings for graph $\mathcal{G}_{n+1}$ belonging to $\Omega^1_{n+1}$, which contain all matchings in $\Theta^1_{n+1}$. }
\label{mfo1}
\end{figure}

\begin{figure}
%\begin{subfigure}[b]{\textwidth}
    \centering
    \begin{minipage}[c]{0.5\textwidth}
    \centering
        \includegraphics[width=0.3\textwidth]{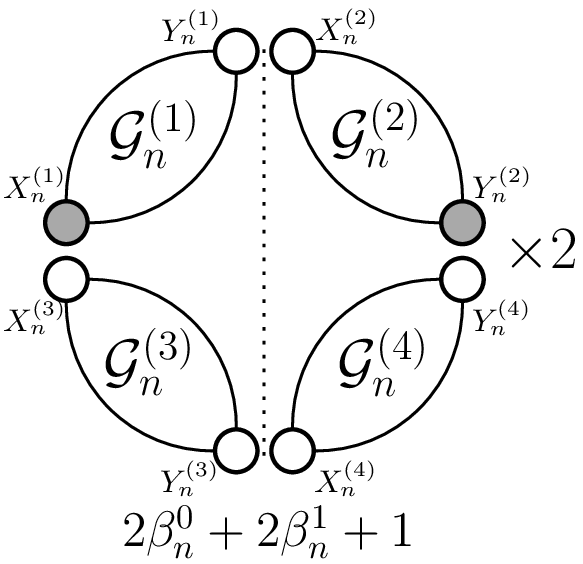}
        \includegraphics[width=0.3\textwidth]{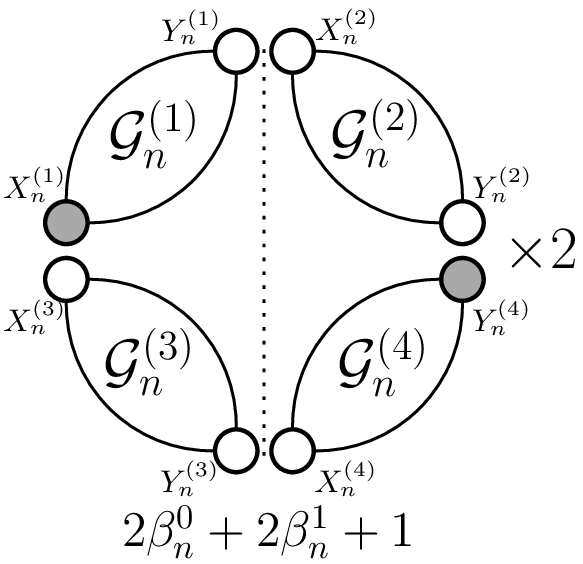}
        \includegraphics[width=0.3\textwidth]{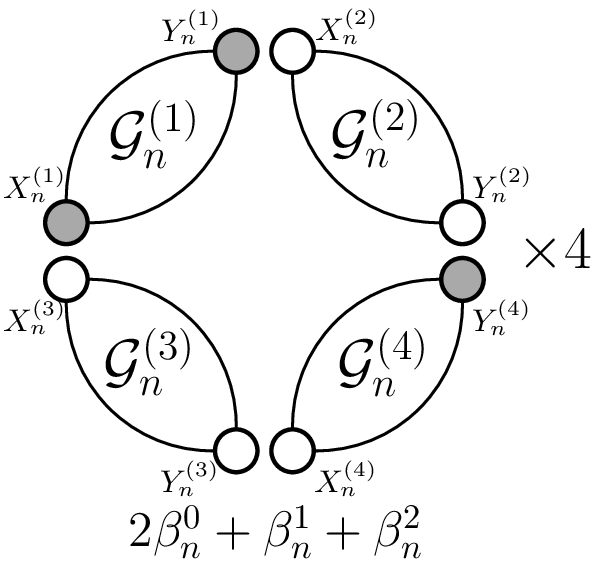}
        \\ \hspace*{\fill} \\
    \end{minipage}
    \begin{minipage}[c]{0.5\textwidth}
    \centering
        \includegraphics[width=0.3\textwidth]{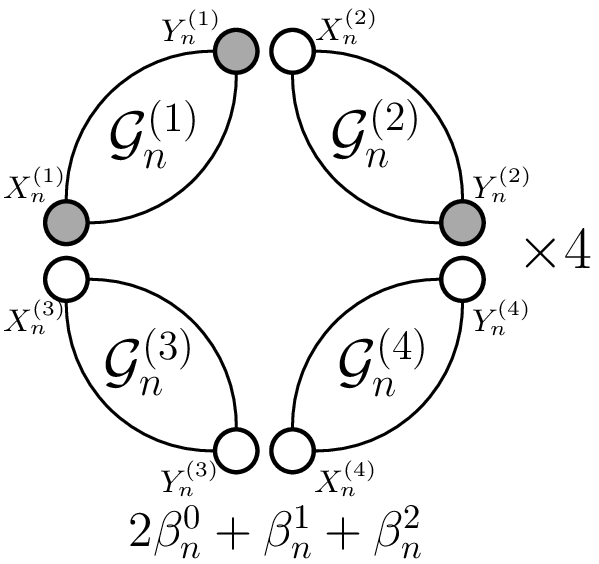}
        \includegraphics[width=0.3\textwidth]{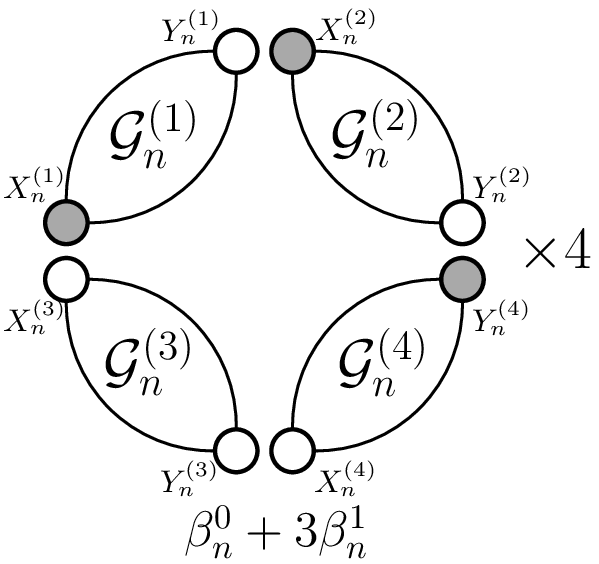}
        \includegraphics[width=0.3\textwidth]{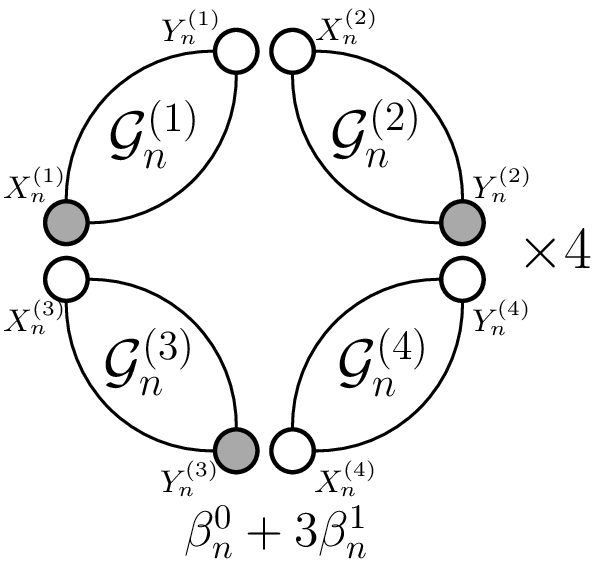}
        \\ \hspace*{\fill} \\
    \end{minipage}
    \begin{minipage}[c]{0.5\textwidth}
        \centering
        \includegraphics[width=0.3\textwidth]{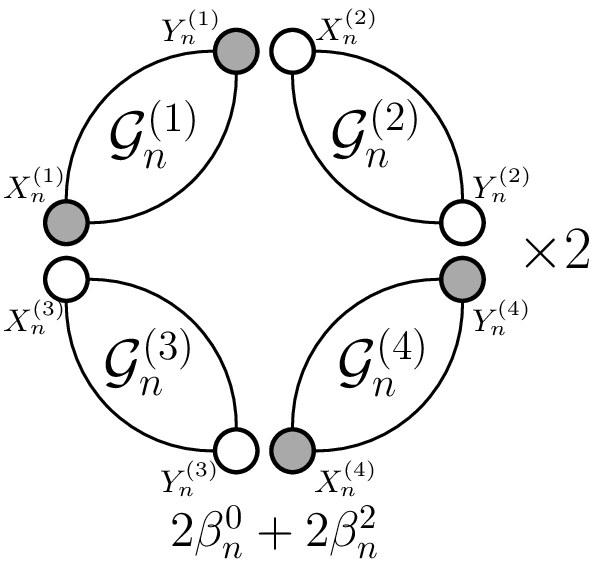}
        \includegraphics[width=0.3\textwidth]{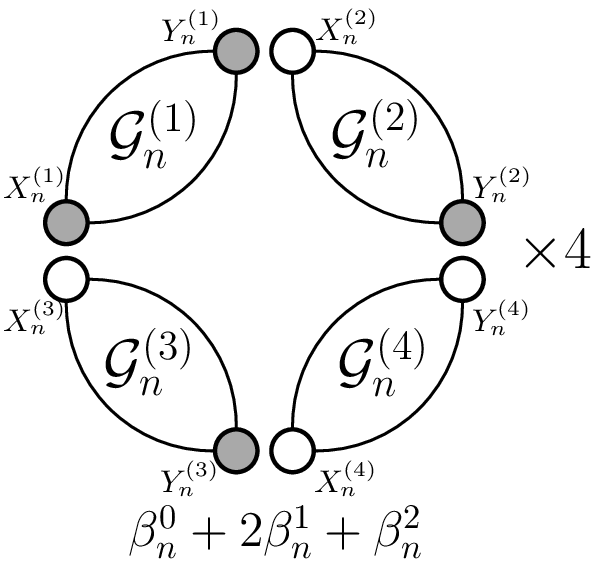}
        \includegraphics[width=0.3\textwidth]{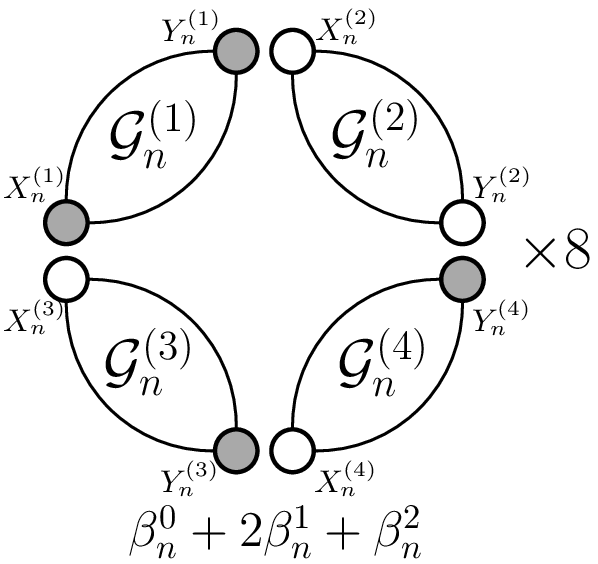}
        \\ \hspace*{\fill} \\
    \end{minipage}
    \begin{minipage}[c]{0.5\textwidth}
        \centering
        \includegraphics[width=0.3\textwidth]{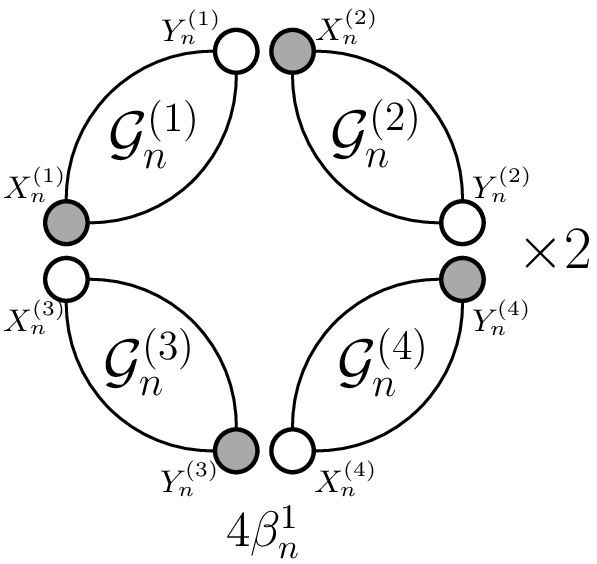}
    \end{minipage}
\caption{Illustration of all possible configurations and their sizes of matchings for graph $\mathcal{G}_{n+1}$ belonging to $\Omega^2_{n+1}$, which contain all matchings in $\Theta^2_{n+1}$. }
\label{mfo2}
\end{figure}

\subsubsection{Number of maximum matchings}

Let $\theta_n$ denote the number of maximum matchings or perfect matchings  in $\mathcal{G}_n$. To calculate $\theta_n$, we introduce an additional quantity  $\phi_n$, which denotes  the number of maximum matchings in $\Omega^0_n$, satisfying that each matching is maximum among all the matchings of $\mathcal{G}_n$ with both initial vertices $X_n$ and $Y_n$ being vacant.

\begin{theorem}\label{mnf}
The number of maximum matchings of $\mathcal{G}_n$, $n \geq 1$, is $2^{2^n-1}$.
\end{theorem}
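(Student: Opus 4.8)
The plan is to track, alongside the matching number data from the previous theorem, the number of maximum matchings of each of the three types, and derive a product-form recursion. Concretely, I would introduce $\theta^k_n$ for $k=0,1,2$ as the number of matchings in $\Theta^k_n$ (the maximum matchings covering exactly $k$ initial vertices), so that $\theta_n$ is the count of those $\Theta^k_n$ whose common cardinality $\beta^k_n$ equals $\beta_n$; by the computation $\beta^0_n=\beta^1_n=\frac{4^n-1}{3}<\beta^2_n=\frac{4^n+2}{3}=\beta_n$, only type $2$ is globally maximum, so $\theta_n=\theta^2_n$. The quantity $\phi_n$ is exactly $\theta^0_n$ with its own cardinality $\beta^0_n$. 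The key point is that each configuration appearing in Figs.~\ref{mfo0}--\ref{mfo2} that realizes the optimum $\beta^k_{n+1}$ prescribes, for each of the four copies $\mathcal{G}_n^{(\theta)}$, which type of (internally maximum) matching it must carry; since the four copies only share their identified initial vertices and the choices inside distinct copies are independent, the number of maximum matchings with a given configuration is a product of four factors drawn from $\{\theta^0_n,\theta^1_n,\theta^2_n\}$, and $\theta^k_{n+1}$ is the sum of such products over all configurations attaining $\beta^k_{n+1}$.

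First I would revisit each of the optimal configurations identified in the proof of the matching-number theorem and read off which ones are tight: e.g. for $\beta^2_{n+1}=\frac{4^{n+1}+2}{3}$ one checks which of the six options $\{2\beta^0_n+2\beta^1_n+1,\ 2\beta^0_n+\beta^1_n+\beta^2_n,\ \beta^0_n+3\beta^1_n,\ 2\beta^0_n+2\beta^2_n,\ \beta^0_n+2\beta^1_n+\beta^2_n,\ 4\beta^1_n\}$ equal $\beta^2_{n+1}$ given $\beta^0_n=\beta^1_n=\frac{4^n-1}{3}$, $\beta^2_n=\frac{4^n+2}{3}$; numerically $2\beta^0_n+2\beta^2_n$ and $\beta^0_n+2\beta^1_n+\beta^2_n$ (and possibly one or two more after accounting for the non-iterative hub edge) are the survivors. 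Translating these into products gives a recursion of the shape $\theta^2_{n+1}=c_1(\theta^0_n)^2(\theta^2_n)^2+c_2\,\theta^0_n(\theta^1_n)^2\theta^2_n+\cdots$, where the small integer coefficients $c_i$ count symmetric placements (left/right copies, inclusion of the $W$--$Z$ edge). I would do the same for $\theta^0_{n+1}$ (this is $\phi_{n+1}$) and $\theta^1_{n+1}$, obtaining a coupled polynomial system in $(\theta^0_n,\theta^1_n,\theta^2_n)$. Then, using the base values at $n=1$ read from $\mathcal{G}_1$ directly (a $4$-cycle with a chord, giving $\theta^0_1$, $\theta^1_1$, $\theta^2_1$ small explicit numbers), I would solve the system. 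I expect $\theta^0_n,\theta^1_n,\theta^2_n$ to all be powers of $2$ of the form $2^{a_n}$ with $a_n$ satisfying a linear recursion $a_{n+1}=2a_n+(\text{const})$; matching the target $\theta_n=\theta^2_n=2^{2^n-1}$ forces the exponent recursion $a_{n+1}=2a_n+1$ with $a_1=1$, whose solution is $a_n=2^n-1$.

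The main obstacle I anticipate is bookkeeping rather than ideas: correctly enumerating which configurations in the figures attain the optimum and, crucially, not overcounting or undercounting the multiplicity coefficients $c_i$ — in particular handling the non-iterative edge between the hub vertices (which is forced in when both hubs are vacant) and the left--right reflection symmetry of the construction without double-counting matchings that happen to be symmetric. A second, subtler point is that the figures are asserted to contain \emph{all} configurations that could possibly be maximum; I would want to confirm that within each $\Omega^k_{n+1}$ no maximum matching is missed, i.e. that restricting a maximum matching to each copy yields a matching that is maximum \emph{within its own type class}, which follows from an exchange/cut-and-paste argument: if some copy carried a non-maximum type-$j$ matching, swapping in a maximum type-$j$ matching of that copy would not decrease the total, so we may assume each copy is internally optimal. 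Once that reduction is in place, the product structure and the resulting clean recursion $a_{n+1}=2a_n+1$ deliver $\theta_n=2^{2^n-1}$.
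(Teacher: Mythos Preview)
Your plan is essentially the paper's: classify by the number of covered initial vertices, pick out the configurations that attain each $\beta^k_{n+1}$, and multiply counts across the four copies. Two small sharpenings the paper makes that you should incorporate: with $\beta^0_n=\beta^1_n$ and $\beta^2_n=\beta^0_n+1$, the \emph{only} survivor for $\beta^2_{n+1}$ is the term $2\beta^0_n+2\beta^2_n$ (your second candidate $\beta^0_n+2\beta^1_n+\beta^2_n$ falls short by one), and the only survivor for $\beta^0_{n+1}$ is $4\beta^0_n+1$; hence $\theta^1_n$ never enters the recursion, and the paper tracks only $\theta_n=\theta^2_n$ and $\phi_n=\theta^0_n$, obtaining $\phi_{n+1}=\phi_n^4$ (so $\phi_n\equiv 1$) and $\theta_{n+1}=2\theta_n^2\phi_n^2=2\theta_n^2$, which is exactly your exponent recursion $a_{n+1}=2a_n+1$ with $a_1=1$.
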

\begin{proof}
Note that for $n=1$,  $\theta_1 = 2$ and $\phi_1 = 1$. For $n \geq 1$, we first establish the  following recursion relations for the two quantities $\theta_n$ and $\phi_n$ associated with graph $\mathcal{G}_n$:
\begin{align}
\theta_{n+1}=&2\theta_n^2\phi_n^2, \label{thetaF}\\
\phi_{n+1}=&\phi_n^4.\label{phiF}
\end{align}

We only prove Eq.~\eqref{phiF}. Since $\beta^0_n=\beta^1_n$,   Eq.~\eqref{beta0F} and Fig.~\ref{mfo0} show that, the cardinality of each matching in  $\Omega^0_{n+1}$ is maximized if and only if all the matchings of the four copies  $\mathcal{G}_n^{(\theta)}$, $\theta=1,2,3,4$,  are in $\Omega^0_n$. Then, we  establish $\phi_{n+1}=\phi_n^4$.

By using  Eq.~\eqref{beta0F}  and Fig.~\ref{mfo2}, Eq.~\eqref{thetaF}  can be proved analogously.

Equations~\eqref{thetaF} and~\eqref{phiF}, together with the initial conditions $\theta_1 = 2$ and $\phi_1 = 1$, are solved to yield $\theta_n=2^{2^n-1}$ and $\phi_{n}=1$ for all $n \geq 1$.
\end{proof}

Note that both the matching number and the number of maximum matchings for graph $\mathcal{G}_n$ have been previously obtained in~\cite{ZhWu15} by using the technique of Pfaffian orientations, which is more complicated than the approach used here.

%Theorem~\ref{mnot} shows that we can calculate the number of maximum matchings of network $\mathcal{G}_n$ in $\mathcal{O}(\ln N_g)$ time.

\subsection{Matching number and the number of maximum matchings in non-fractal scale-free networks}

We continue to study the matching number and the number of maximum matchings in graph $\mathcal{G}'_n$.

\subsubsection{Matching number}
%In the case without inducing  confusion, we employ the same notation as those for $\mathcal{G}_n$ studied in the former subsection.

Let $\Omega^k_n$,  $k=0,1,2$, represent matchings covering exactly $k$ hub vertices  of $\mathcal{G}'_n$. Let $\Theta^k_n$, $k=0,1,2$, be the subset of $\Omega^k_n$, where each matching has the largest cardinality among all matchings in $\Omega^k_n$, with the largest cardinality being  denoted by $\beta^k_n$, $k=0,1,2$. Then,  the matching number $\beta_n$ of $\mathcal{G}'_n$ can be expressed as  $\beta_n={\rm max}\{\beta^0_n, \beta^1_n,\beta^2_n\}$.

\begin{theorem}
The matching number of graph $\mathcal{G}'_n$ is $\beta_n=\frac{2^{2n-1}+4}{3}$.
\end{theorem}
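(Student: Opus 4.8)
The plan is to follow the same route as for the fractal graph $\mathcal{G}_n$. For $k=0,1,2$ let $\beta^k_n$ be the largest cardinality of a matching of $\mathcal{G}'_n$ in $\Omega^k_n$, that is, one covering exactly $k$ of the two hub vertices $X'_n,Y'_n$; then $\beta_n=\max\{\beta^0_n,\beta^1_n,\beta^2_n\}$, and since $\mathcal{G}'_n$ is symmetric under interchanging $X'_n$ and $Y'_n$ it is immaterial which single hub is covered when $k=1$, so these three numbers carry all the information we need. First I would invoke the self-similar construction of Fig.~\ref{nfcons2}: $\mathcal{G}'_{n+1}$ is assembled from four copies $\mathcal{G}_{n}^{'(\theta)}$ ($\theta=1,2,3,4$) by the identifications $X_{n}^{'(1)}=X_{n}^{'(3)}=X'_{n+1}$, $Y_{n}^{'(2)}=Y_{n}^{'(4)}=Y'_{n+1}$, $Y_{n}^{'(1)}=X_{n}^{'(2)}=W'_{n+1}$, $Y_{n}^{'(3)}=X_{n}^{'(4)}=Z'_{n+1}$, together with the non-iterative edge $X'_{n+1}Y'_{n+1}$. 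A matching of $\mathcal{G}'_{n+1}$ restricts to a matching on each copy, and the only interaction among these four restrictions occurs at the four identified vertices $X'_{n+1},Y'_{n+1},W'_{n+1},Z'_{n+1}$ --- each of which lies in at most one matching edge --- and through the edge $X'_{n+1}Y'_{n+1}$, which can be included only when both $X'_{n+1}$ and $Y'_{n+1}$ are left uncovered by the copies and which one then always includes so as to maximise the cardinality.

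The core step is the finite case analysis, carried out graphically as in Figs.~\ref{mfo0}--\ref{mfo2}: for each $k$ I would list every admissible boundary configuration of a matching in $\Theta^k_{n+1}$ --- which hub(s) of each copy its restriction covers, which of the two copies meeting $W'_{n+1}$ (respectively $Z'_{n+1}$) actually covers it, and whether the edge $X'_{n+1}Y'_{n+1}$ is used --- and record the size as an expression in $\beta^0_n,\beta^1_n,\beta^2_n$. This should produce recursions in the spirit of Eqs.~\eqref{beta0F}--\eqref{beta2F}; I expect them to be $\beta^0_{n+1}=\max\{4\beta^0_n,\,3\beta^0_n+\beta^1_n,\,2\beta^0_n+2\beta^1_n\}$, $\beta^1_{n+1}=\max\{2\beta^0_n+2\beta^1_n,\,2\beta^0_n+\beta^1_n+\beta^2_n,\,\beta^0_n+3\beta^1_n\}$, and $\beta^2_{n+1}=\max\{2\beta^0_n+2\beta^1_n+1,\,\beta^0_n+2\beta^1_n+\beta^2_n,\,2\beta^0_n+2\beta^2_n,\,4\beta^1_n\}$. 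The genuinely new feature relative to $\mathcal{G}_n$ is the term $2\beta^0_n+2\beta^1_n+1$ in $\beta^2_{n+1}$: here the non-iterative edge joins the two hubs of $\mathcal{G}'_{n+1}$ directly, so it can cover both at once while each of the four copies sits in $\Omega^0_n$ or $\Omega^1_n$ --- impossible in the fractal graph, where the non-iterative edge is disjoint from the initial vertices.

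It then remains to solve the recursions. Inspecting $\mathcal{G}'_1$, which is $\mathcal{K}_4$ with the single edge $W'_1Z'_1$ removed, gives $\beta^0_1=0$, $\beta^1_1=1$, $\beta^2_1=2$. A straightforward induction using the recursions yields $\beta^0_n=\frac{2^{2n-1}-2}{3}$, $\beta^1_n=\frac{2^{2n-1}+1}{3}$, $\beta^2_n=\frac{2^{2n-1}+4}{3}$; these satisfy $\beta^1_n=\beta^0_n+1$ and $\beta^2_n=\beta^0_n+2$, so evaluating the maxima at each inductive step is immediate and $\beta^2_n$ is always the largest of the three. Hence $\beta_n=\beta^2_n=\frac{2^{2n-1}+4}{3}$. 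As a sanity check, since $N_n=\frac{2}{3}(4^n+2)$ one gets $\beta_n/N_n\to\frac14$, so $\mathcal{G}'_n$ has no perfect matching for $n\ge1$, in sharp contrast with $\mathcal{G}_n$.

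The main obstacle I anticipate is establishing that the boundary-configuration enumeration is complete: one must argue both that every matching in $\Theta^k_{n+1}$ restricts to one of the listed patterns and that each listed pattern is realised by an actual matching of the claimed size, all while making sure no identified vertex is covered twice and that the free non-iterative edge is counted in exactly those configurations that admit it. Once the figures are fixed, the derivation of the recursions, their closed-form solution, and the inequalities $\beta^2_n\ge\beta^1_n\ge\beta^0_n$ are all mechanical.
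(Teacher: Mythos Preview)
Your proposal is correct and follows essentially the same route as the paper: reduce to the three boundary quantities $\beta^0_n,\beta^1_n,\beta^2_n$, carry out a graphical case analysis over the self-similar decomposition of Fig.~\ref{nfcons2} (with the non-iterative edge now incident to the hubs, producing the extra ``$+1$'' terms in $\beta^2_{n+1}$), and solve from the initial data $\beta^0_1=0$, $\beta^1_1=1$, $\beta^2_1=2$. The paper's full enumeration in Eqs.~\eqref{beta0g}--\eqref{beta2g} contains a few more configurations than the ones you list (for instance $3\beta^0_n+\beta^2_n$ in $\beta^1_{n+1}$ and $4\beta^0_n+1$, $3\beta^0_n+\beta^1_n+1$ in $\beta^2_{n+1}$), but each of these is dominated by a term already in your shorter lists once $\beta^0_n\le\beta^1_n\le\beta^2_n$, so both yield the same closed forms and the same matching number.
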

\begin{proof}
In order to find $\beta_n$, we can alternatively  evaluate the three quantities $\beta^0_n$, $\beta^1_n$ and $\beta^2_n$ by using the self-similar structure of graph $\mathcal{G}'_n$. We now graphically compute $\beta^k_n$, $k=0,1,2$.  Figures~\ref{mnfo0},~\ref{mnfo1}, and~\ref{mnfo2} show, respectively, all the possible configurations of matchings in $\Omega^0_{n+1}$, $\Omega^1_{n+1}$ and $\Omega^2_{n+1}$, which contain $\Theta^0_{n+1}$, $\Theta^1_{n+1}$ and  $\Theta^2_{n+1}$. In Figs.~\ref{mnfo0},~\ref{mnfo1}, and~\ref{mnfo2}, only the hub vertices $X_{n}^{'(\theta)}$ and $Y_{n}^{'(\theta)}$ of $\mathcal{G}_{n}^{'(\theta)}$, $\theta=1,2,3,4$, forming $\mathcal{G}'_{n+1}$
 are shown explicitly, with filled circles denoting covered vertices and empty circles denoting vacant vertices. Note that in Fig.~\ref{mnfo2},  the iterative  edge linking the  two hub vertices  $X'_{n+1}$ and $Y'_{n+1}$ of $\mathcal{G}'_{n+1}$ will  be included in the matching if both of the two hub vertices of $\mathcal{G}'_{n+1}$ are vacant after joining process.  From Figs.~\ref{mnfo0},~\ref{mnfo1}, and~\ref{mnfo2}, we  establish recursive relations governing  $\beta^0_n$, $\beta^1_n$, and $\beta^2_n$:
\begin{align} %\label{mnfl}
\beta^0_{n+1}&={\rm max}\{4\beta^0_n ,3\beta^0_n+\beta^1_n ,2\beta^0_n+2\beta^1_n\}\,, \label{beta0g}\\
\beta^1_{n+1}&={\rm max}\{3\beta^0_n+\beta^1_n ,3\beta^0_n+\beta^2_n ,2\beta^0_n+2\beta^1_n ,\notag\\
&\quad \quad\quad\quad
2\beta^0_n+\beta^1_n+\beta^2_n, \beta^0_n+3\beta^1_n \}\,, \label{beta1g}\\
\beta^2_{n+1}&={\rm max}\{2\beta^0_n+2\beta^1_n ,2\beta^0_n+\beta^1_n+\beta^2_n, \beta^0_n+3\beta^1_n , \notag\\
&\quad \quad\quad\quad
 2\beta^0_n+2\beta^2_n,\beta^0_n+2\beta^1_n+\beta^2_n,4\beta^1_n,4\beta^0_n+1,\notag\\
&\quad \quad\quad\quad
3\beta^0_n+\beta^1_n +1,2\beta^0_n+2\beta^1_n+1\}\,. \label{beta2g}
\end{align}
With initial condition $\beta^0_1=0$, $\beta^0_1=1$, and $\beta^0_1=2$, the above equations are solved to yield $\beta^0_n=\frac{2^{2n-1}-2}{3}$, $\beta^1_n=\frac{2^{2n-1}+1}{3}$, and $\beta^2_n=\frac{2^{2n-1}+4}{3}$.
\end{proof}

\begin{figure}
    \centering
    \begin{minipage}[c]{0.5\textwidth}
        \centering
        \includegraphics[width=0.27\textwidth]{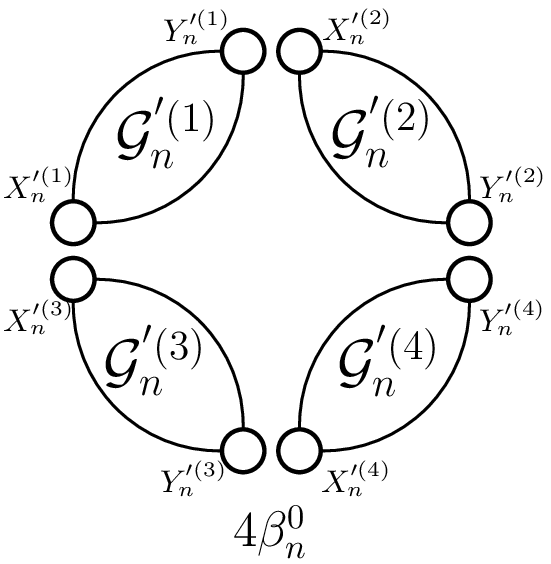}
        \includegraphics[width=0.3\textwidth]{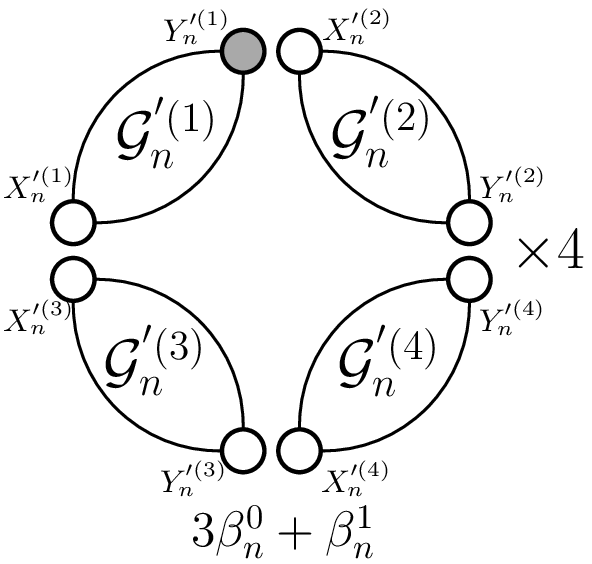}
        \includegraphics[width=0.3\textwidth]{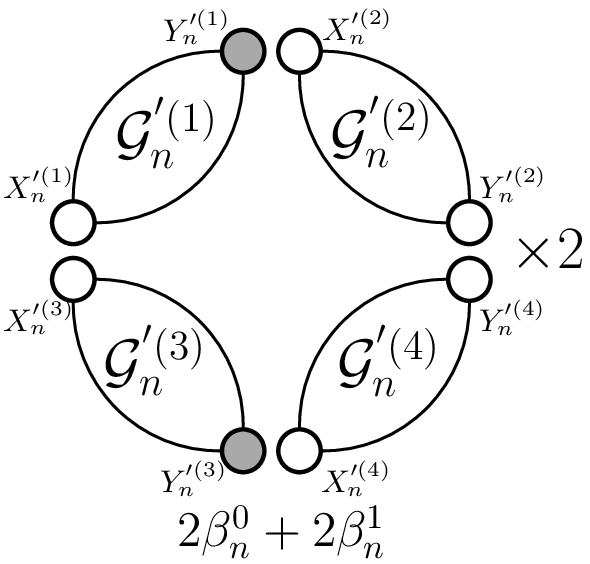}
        \\ \hspace*{\fill} \\
    \end{minipage}
    \begin{minipage}[c]{0.5\textwidth}
        \centering
        \includegraphics[width=0.3\textwidth]{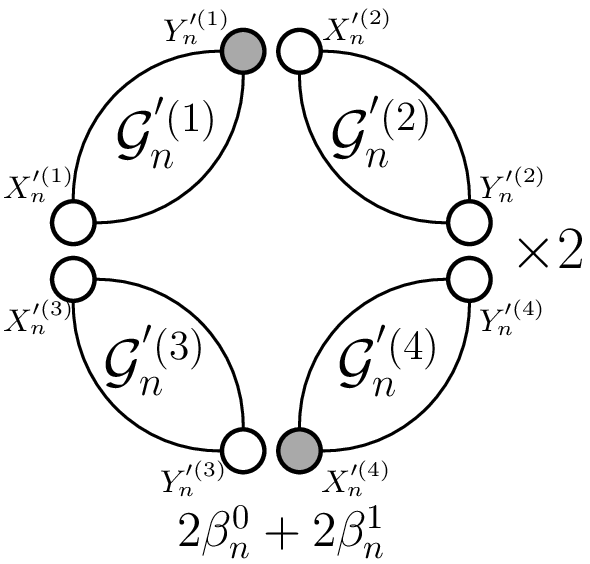}
    \end{minipage}
\caption{Illustration of all possible configurations and their sizes of matchings for graph $\mathcal{G}'_{n+1}$ belonging to $\Omega^0_{n+1}$, which contain all matchings in $\Theta^0_{n+1}$. }
\label{mnfo0}
\end{figure}

\begin{figure}
    \centering
    \begin{minipage}[c]{0.5\textwidth}
        \centering
        \includegraphics[width=0.3\textwidth]{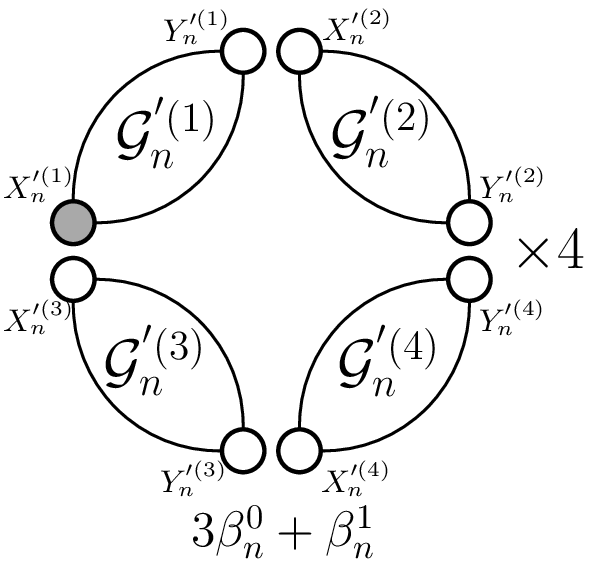}
        \includegraphics[width=0.3\textwidth]{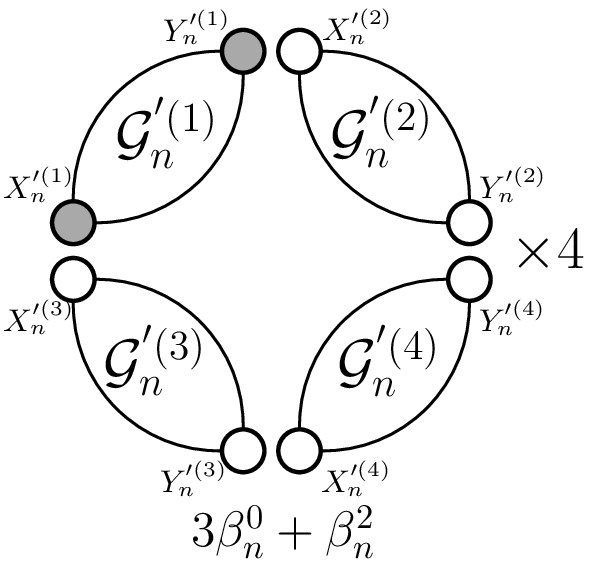}
        \includegraphics[width=0.3\textwidth]{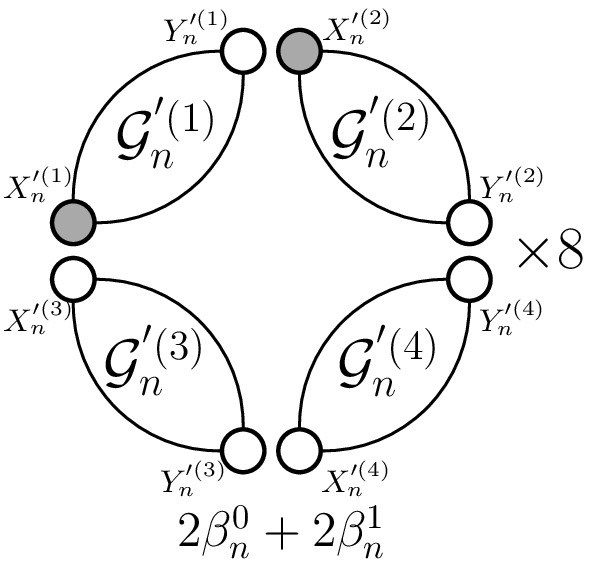}
        \\ \hspace*{\fill} \\
    \end{minipage}
    \begin{minipage}[c]{0.5\textwidth}
        \centering
        \includegraphics[width=0.3\textwidth]{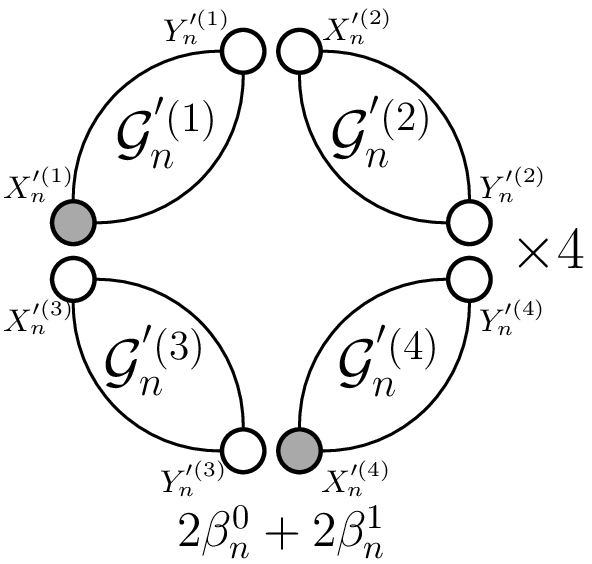}
        \includegraphics[width=0.3\textwidth]{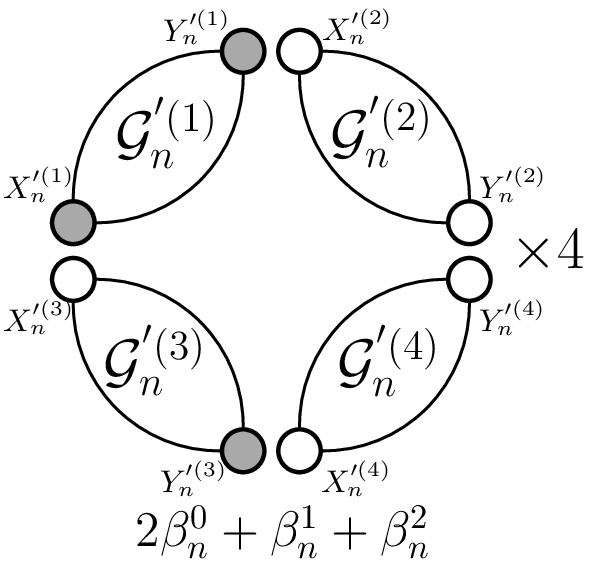}
        \includegraphics[width=0.3\textwidth]{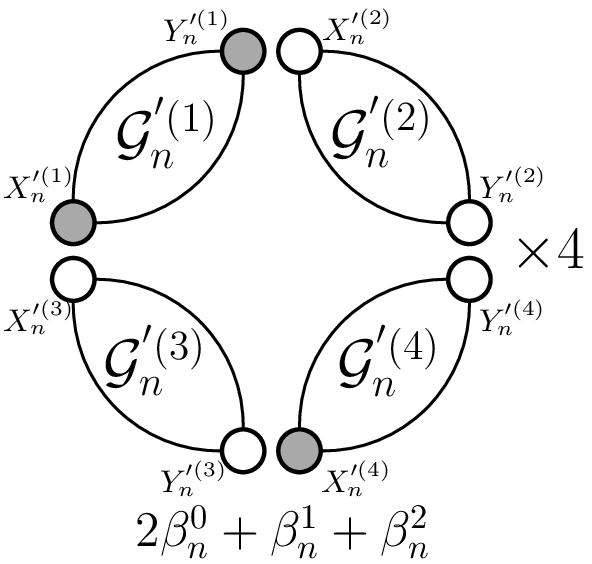}
        \\ \hspace*{\fill} \\
    \end{minipage}
    \begin{minipage}[c]{0.5\textwidth}
        \centering
        \includegraphics[width=0.3\textwidth]{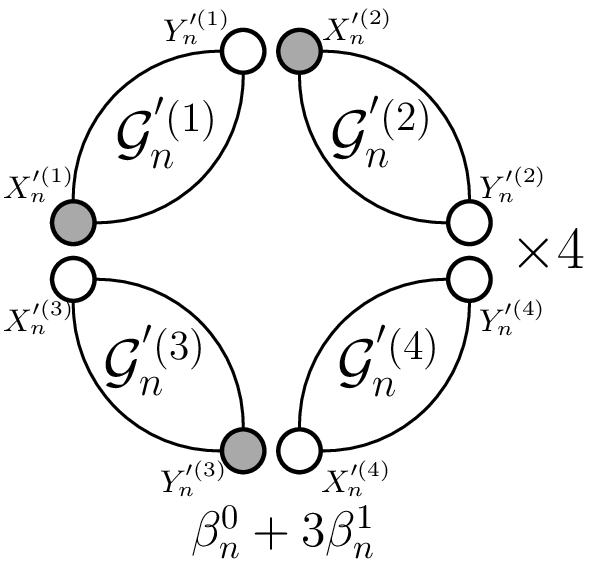}
    \end{minipage}
\caption{Illustration of all possible configurations and their sizes of matchings for graph $\mathcal{G}'_{n+1}$ belonging to $\Omega^1_{n+1}$, which contain all matchings in $\Theta^1_{n+1}$.  }
\label{mnfo1}
\end{figure}

\begin{figure}
    \centering
    \begin{minipage}[c]{0.5\textwidth}
        \centering
        \includegraphics[width=0.3\textwidth]{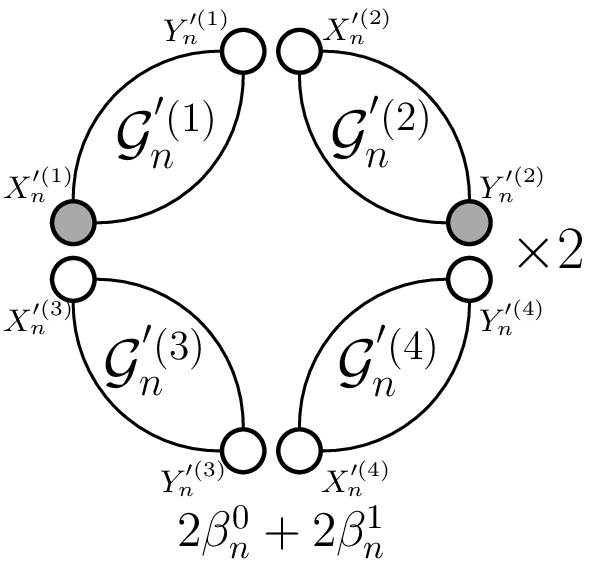}
        \includegraphics[width=0.3\textwidth]{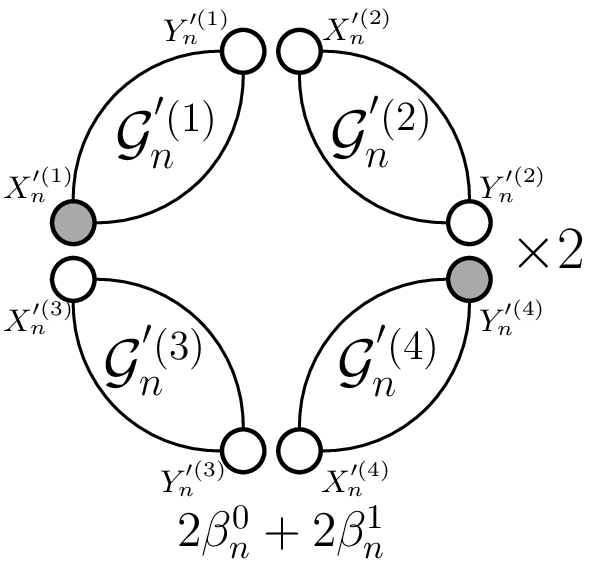}
        \includegraphics[width=0.3\textwidth]{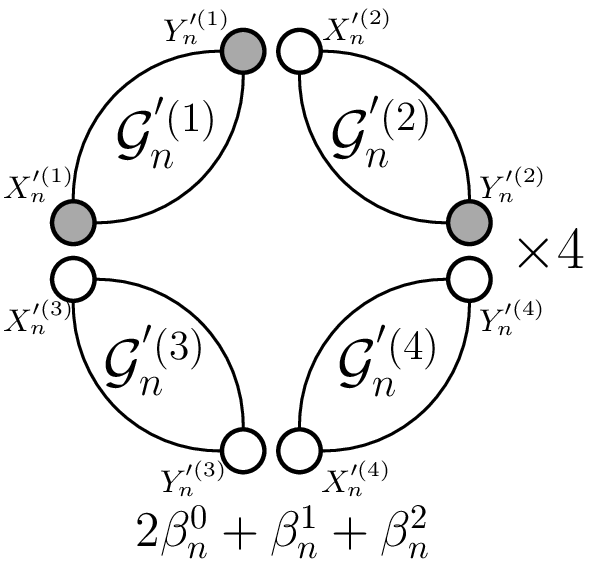}
        \\ \hspace*{\fill} \\
    \end{minipage}
    \begin{minipage}[c]{0.5\textwidth}
        \centering
        \includegraphics[width=0.3\textwidth]{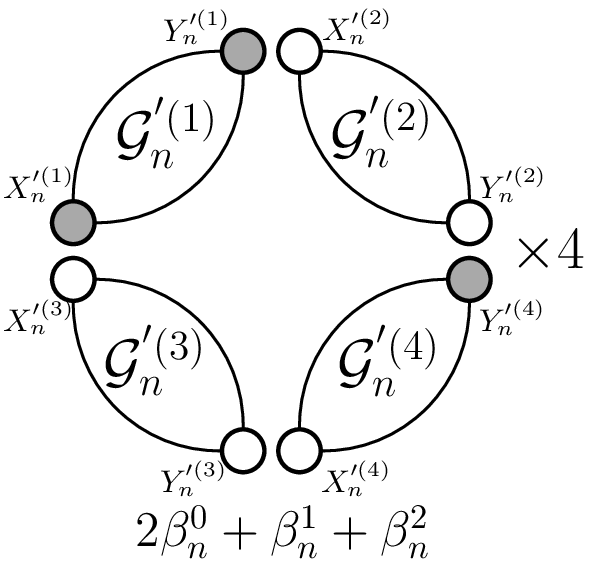}
        \includegraphics[width=0.3\textwidth]{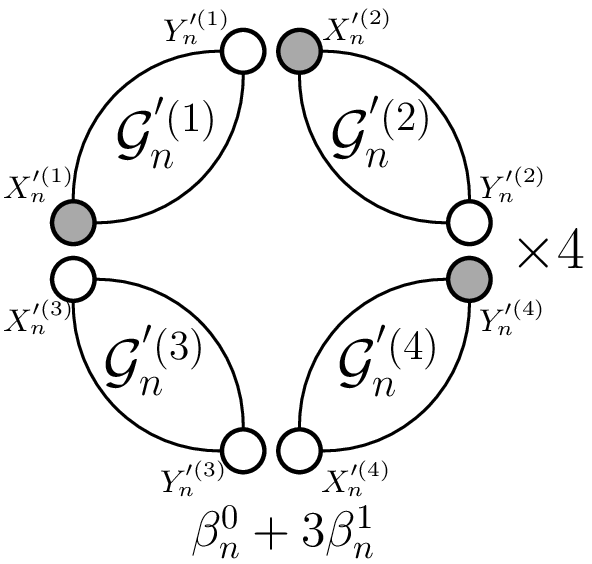}
        \includegraphics[width=0.3\textwidth]{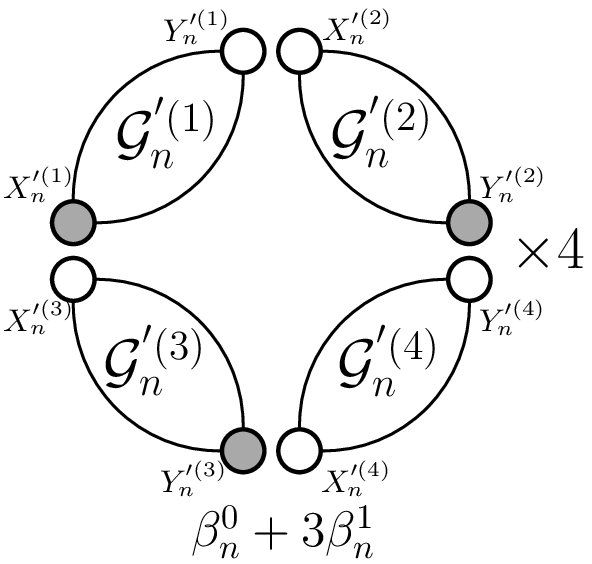}
        \\ \hspace*{\fill} \\
    \end{minipage}
    \begin{minipage}[c]{0.5\textwidth}
        \centering
        \includegraphics[width=0.3\textwidth]{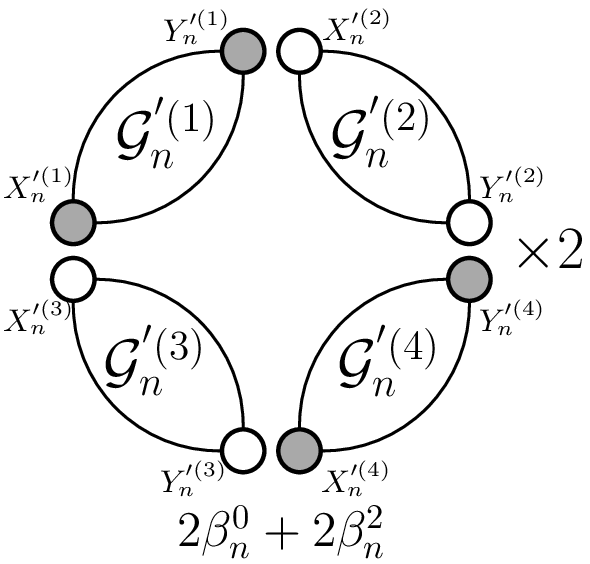}
        \includegraphics[width=0.3\textwidth]{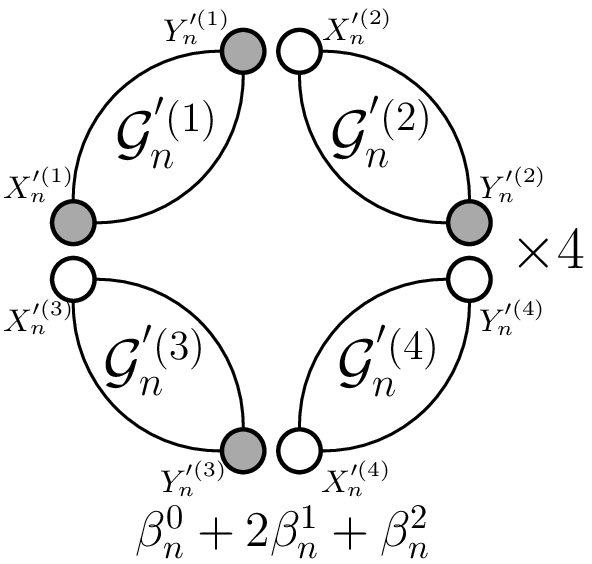}
        \includegraphics[width=0.3\textwidth]{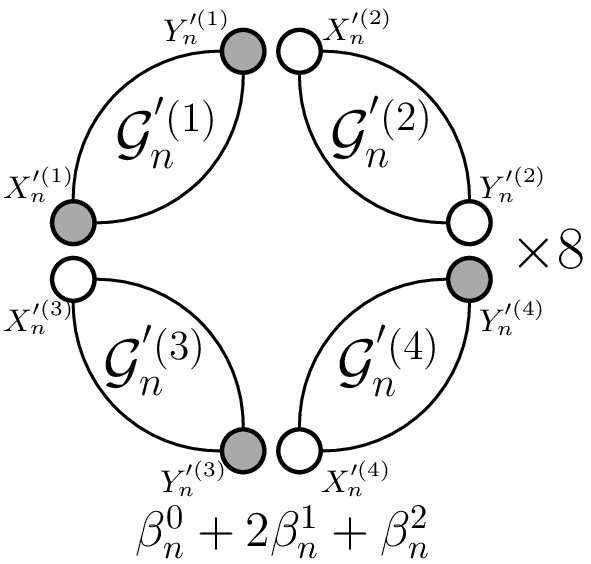}
        \\ \hspace*{\fill} \\
    \end{minipage}
    \begin{minipage}[c]{0.5\textwidth}
        \centering
        \includegraphics[width=0.3\textwidth]{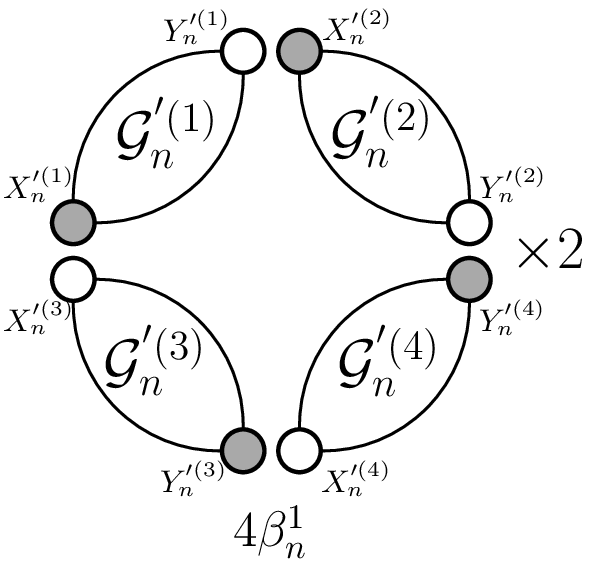}
        \includegraphics[width=0.3\textwidth]{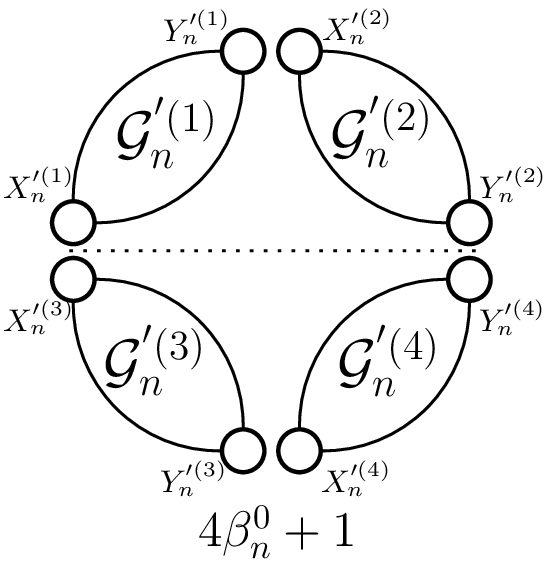}
        \includegraphics[width=0.3\textwidth]{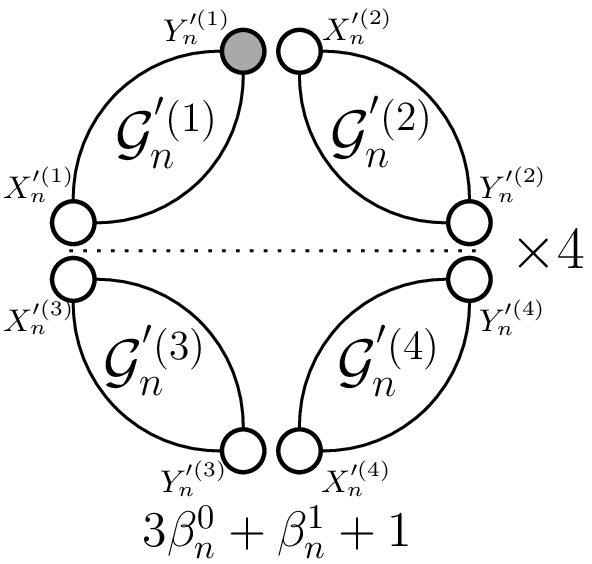}
        \\ \hspace*{\fill} \\
    \end{minipage}
    \begin{minipage}[c]{0.5\textwidth}
        \centering
        \includegraphics[width=0.3\textwidth]{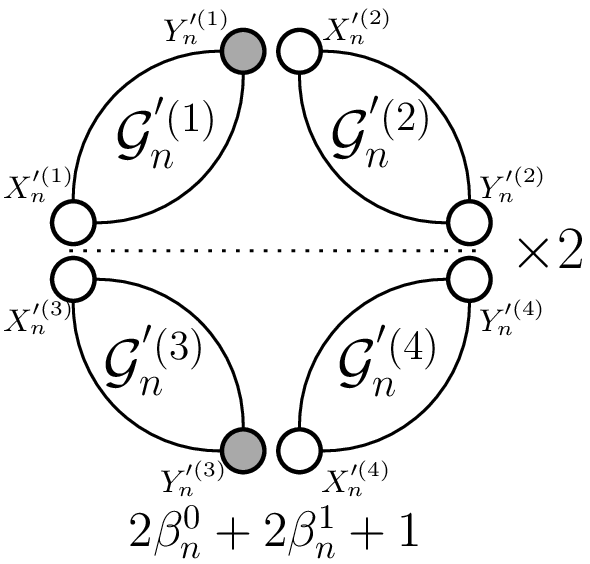}
        \includegraphics[width=0.3\textwidth]{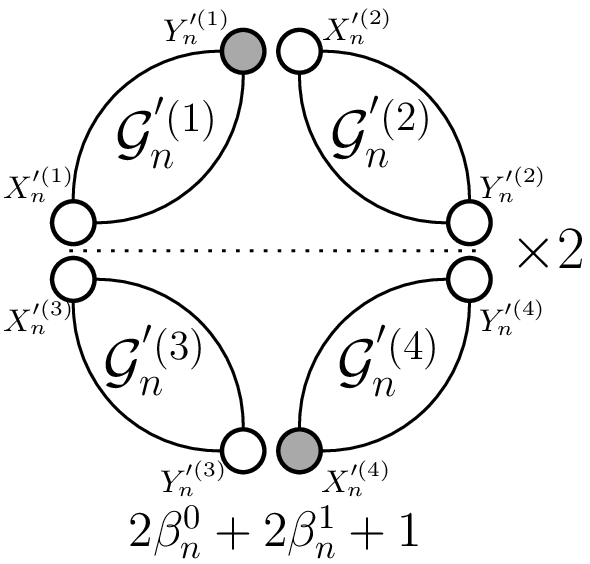}
    \end{minipage}
\caption{Illustration of all possible configurations and their sizes of matchings for graph $\mathcal{G}'_{n+1}$ belonging to $\Omega^2_{n+1}$, which contain all matchings in $\Theta^2_{n+1}$. }
\label{mnfo2}
\end{figure}

\subsubsection{Number of matchings}

Let $\theta_n$ denote the number of maximum matchings of $\mathcal{G}'_n$. To calculate $\theta_n$, we introduce two additional quantities.  Let $\phi_n$ be the number of maximum matchings in $\Omega^0_n$, and let $\varphi_n$ be the number   of maximum matchings in $\Omega^1_n$. For small $n$, quantities $\phi_n$, $\varphi_n$ and $\theta_n$ can be easily determined by using a computer. For example, for $n=1$, $\theta_1 = 2$, $\phi_1 = 1$ and $\varphi_1=2$. For large $n$, they can be determined recursively as follows.

\begin{theorem}\label{mnnot}
For graph $\mathcal{G}_n$, $n\geq1$, the three quantities $\theta_n$, $\phi_n$ and $\varphi_n$ can be calculated recursively according to the following relations:
\begin{align}
\theta_{n+1}=&2\theta_n^2\phi_n^2+2\varphi_n^4+12\theta_n\phi_n\varphi_n^2, \label{thetag} \\
\phi_{n+1}=&4\phi_n^2\varphi_n^2, \label{phig} \\
\varphi_{n+1}=&4\theta_n\phi_n^2\varphi_n+4\phi_n\varphi_n^3. \label{varphig}
\end{align}
with initial conditions $\theta_1 = 2$, $\phi_1 = 1$, and $\varphi_1=2$.
\end{theorem}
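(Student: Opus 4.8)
The plan is to mirror the structure of the proof of Theorem~\ref{mnf}: classify maximum matchings of $\mathcal{G}'_{n+1}$ by how they restrict to the four copies $\mathcal{G}_n^{'(\theta)}$, and use the recursions for $\beta^0_n,\beta^1_n,\beta^2_n$ just established to identify which combinations of copy-matchings actually yield a globally maximum matching. From $\beta^0_n=\frac{2^{2n-1}-2}{3}$, $\beta^1_n=\frac{2^{2n-1}+1}{3}$, $\beta^2_n=\frac{2^{2n-1}+4}{3}$ we read off the key arithmetic facts $\beta^1_n=\beta^0_n+1$ and $\beta^2_n=\beta^0_n+2=\beta^1_n+1$; these are exactly the relations that make only a few of the configurations in Figs.~\ref{mnfo0}--\ref{mnfo2} attain the maximum. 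First I would recompute $\beta^0_{n+1}$, $\beta^1_{n+1}$, $\beta^2_{n+1}$ symbolically from~\eqref{beta0g}--\eqref{beta2g} using these relations, and record for each of the three quantities the precise list of ``winning'' configurations (the argmax set), since the counting recursion is a sum over exactly those winners.

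Next I would translate each winning configuration into a product of the copy-level counts $\theta_n$ (a copy whose matching is maximum and covers both its hub vertices, i.e.\ lies in $\Theta^2_n$ with maximum global size), $\phi_n$ (a copy whose matching is maximum among those in $\Omega^0_n$, covering neither hub), and $\varphi_n$ (a copy whose matching is maximum among those in $\Omega^1_n$, covering exactly one hub). Each of the four copies contributes independently one such factor, and a configuration that leaves a given copy's hub vertex uncovered forces that copy into the $\phi$- or $\varphi$-type, while a covered shared vertex forces a $\theta$- or $\varphi$-type; the non-iterative edge between $X'_{n+1},Y'_{n+1}$ is handled exactly as in the matching-number proof (included iff both hubs end up vacant). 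Summing the monomials over the winning configurations for $\Omega^0_{n+1}$ gives~\eqref{phig}; over $\Omega^1_{n+1}$ gives~\eqref{varphig}; over all configurations attaining $\beta_{n+1}=\beta^2_{n+1}$ gives~\eqref{thetag}, where the three terms $2\theta_n^2\phi_n^2$, $2\varphi_n^4$, and $12\theta_n\phi_n\varphi_n^2$ should correspond to the distinct symmetry classes of winning configurations (the combinatorial factors $2$, $2$, $12$ counting the number of ways to place the copy-types among the four labelled positions consistently with the merging identifications). I would verify the base case $n=1$ against $\theta_1=2,\phi_1=1,\varphi_1=2$ and, as a sanity check, confirm that $\phi_{n+1}=4\phi_n^2\varphi_n^2$ together with $\phi_1=1,\varphi_1=2$ reproduces a consistent sequence.

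The main obstacle will be the bookkeeping in the last step: unlike the fractal case, where $\beta^0_n=\beta^1_n$ collapsed everything so that the maximum in $\Omega^0_{n+1}$ forced all four copies into $\Omega^0_n$, here $\beta^0_n,\beta^1_n,\beta^2_n$ are three distinct consecutive values, so several genuinely different configurations tie for the maximum and one must be careful (i) to enumerate \emph{all} of them from the figures rather than a proper subset, and (ii) to get the multiplicities $2,2,12$ right, accounting for the identifications $X_n^{'(1)}\!=\!X_n^{'(3)}$, $Y_n^{'(2)}\!=\!Y_n^{'(4)}$, etc., which break the naive $\binom{4}{\cdot}$ symmetry. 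A secondary subtlety is ensuring that a copy counted by $\varphi_n$ (maximum in $\Omega^1_n$) really does combine to a globally maximum matching of $\mathcal{G}'_{n+1}$ — this is where the inequalities $\beta^1_n=\beta^0_n+1$ and $\beta^2_n-\beta^1_n=1$ must be invoked to rule out dominated configurations and to confirm that no configuration outside the listed ones can reach $\beta_{n+1}$. Once the argmax sets are pinned down, each recursion is an immediate consequence of independence across the four copies.
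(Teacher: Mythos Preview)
Your proposal is correct and follows essentially the same approach as the paper: use the strict inequalities $\beta^0_n<\beta^1_n<\beta^2_n$ together with Eqs.~\eqref{beta0g}--\eqref{beta2g} and Figs.~\ref{mnfo0}--\ref{mnfo2} to isolate the argmax configurations, then count each as a product of $\theta_n,\phi_n,\varphi_n$ over the four copies with multiplicities coming from the symmetry of $\mathcal{G}'_{n+1}$. The paper's proof is terser---it works out only Eq.~\eqref{phig} explicitly and declares the other two analogous---whereas you spell out the bookkeeping more fully, but the method is the same.
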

\begin{proof}
We only prove Eq.~\eqref{phig},  because the other two equations can be proved analogously. Since for $n\geq 1$, $\beta^0_n<\beta^1_n<\beta^2_n$, according to  Eq.~\eqref{beta0g} and Fig.~\ref{mnfo0},  all maximum matchings with size $\beta^0_{n+1}$ in $\Omega^0_{n+1}$ are those matchings having size $2\beta^0_n+2\beta^1_n$, which together with the symmetry of graph $\mathcal{G}'_{n+1}$, yields Eq.~\eqref{phig}.
\end{proof}

%Theorem~\ref{mnnot} shows that we can calculate the number of maximum matchings of network $\mathcal{G}'_n$ in $\mathcal{O}(\ln N_g)$ time. However, it is still very difficult to obtain closed-form expressions for these quantities.

\section{Independence number and the number of maximum independence sets}

In this section, we study the independence number and the number of  MISs in the two studied self-similar scale-free networks.

\subsection{Independence number and the number of maximum independence sets in fractal scale-free networks}

We first study the independence number and the number of  MISs in fractal scale-free graph  $\mathcal{G}_{n}$.

\subsubsection{Independence number}

Let $\alpha_n$ denote   independence number  of graph $\mathcal{G}_n$.  To find   $\alpha_n$,  we  define some intermediate quantities. Note that all the independent sets of $\mathcal{G}_n$ can be classified into three types: $\Psi^0_n$, $\Psi^1_n$ and $\Psi^2_n$, where $\Psi^k_n$, $k=0,1,2$, represent the set of independent sets, each  including exactly $k$ initial vertices  of $\mathcal{G}_n$. Let $\Phi^k_n$, $k=0,1,2$, be the subset of $\Psi^k_n$, where each independent set has the largest cardinality, denoted as $\alpha^k_n$, $k=0,1,2$. Then, $\alpha_n$ can be represented  as $\alpha_n={\rm max}\{\alpha^0_n, \alpha^1_n,\alpha^2_n\}$.

\begin{theorem}\label{InnumF}
The independence number of graph $\mathcal{G}_n$, $n\geq 1$, is $\alpha_n=2^{2n-2}$.
\end{theorem}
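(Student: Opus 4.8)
The plan is to imitate the strategy already used for the matching number. I would classify every independent set of $\mathcal{G}_n$ according to how many of the two initial vertices $X_n,Y_n$ it contains, so that $\Psi^0_n,\Psi^1_n,\Psi^2_n$ partition all independent sets and $\Phi^k_n\subseteq\Psi^k_n$ collects those of largest size $\alpha^k_n$. Since $\alpha_n=\max\{\alpha^0_n,\alpha^1_n,\alpha^2_n\}$, it suffices to determine the three numbers $\alpha^0_n,\alpha^1_n,\alpha^2_n$, and for this I would use the self-similar construction: $\mathcal{G}_{n+1}$ is four copies $\mathcal{G}_{n}^{(1)},\dots,\mathcal{G}_{n}^{(4)}$ glued along their initial vertices at the four ``junction'' vertices $X_{n+1},W_{n+1},Z_{n+1},Y_{n+1}$, plus the single non-iterative edge $W_{n+1}Z_{n+1}$, with $\mathcal{G}_{n}^{(1)}$ spanning $X_{n+1}$--$W_{n+1}$, $\mathcal{G}_{n}^{(2)}$ spanning $W_{n+1}$--$Y_{n+1}$, $\mathcal{G}_{n}^{(3)}$ spanning $X_{n+1}$--$Z_{n+1}$, and $\mathcal{G}_{n}^{(4)}$ spanning $Z_{n+1}$--$Y_{n+1}$. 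I would also record the automorphism of $\mathcal{G}_n$ that swaps $X_n$ and $Y_n$, which makes $\alpha^1_n$ unambiguous (it does not matter which initial vertex is the included one), and the fact that for $n\ge1$ the two initial vertices are non-adjacent, so that $\Psi^2_n\neq\varnothing$.

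Next I would build the recursions. A set $S\subseteq\mathcal{V}(\mathcal{G}_{n+1})$ is independent iff its restriction to each copy is independent and $S$ does not contain both $W_{n+1}$ and $Z_{n+1}$. Fixing the membership pattern $(x,w,z,y)\in\{0,1\}^4$ of the four junction vertices (with the single forbidden pattern $w=z=1$), the largest $S$ realizing that pattern has size
\[
\alpha^{x+w}_n+\alpha^{w+y}_n+\alpha^{x+z}_n+\alpha^{z+y}_n-(x+w+z+y),
\]
where each term is the maximum a single copy can contribute given the states of the two initial vertices it carries, and the last term corrects for each junction vertex lying in exactly two copies. Maximizing over the admissible patterns, split by how many of $x,y$ equal $1$, yields
\begin{align*}
\alpha^0_{n+1}&=\max\{4\alpha^0_n,\;2\alpha^0_n+2\alpha^1_n-1\},\\
\alpha^1_{n+1}&=\max\{2\alpha^0_n+2\alpha^1_n-1,\;\alpha^0_n+2\alpha^1_n+\alpha^2_n-2\},\\
\alpha^2_{n+1}&=\max\{4\alpha^1_n-2,\;2\alpha^1_n+2\alpha^2_n-3\}.
\end{align*}
Enumerating the sub-configurations graphically, in the spirit of Figs.~\ref{mfo0}--\ref{mfo2}, produces the same system (any extra terms are dominated); note that, in contrast with the matching case, the non-iterative edge never \emph{enlarges} an independent set, it only forbids the pattern $w=z=1$.

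Finally I would solve the system. Direct inspection of $\mathcal{G}_1$ (the diamond $\mathcal{K}_4$ with the edge $X_1Y_1$ deleted) gives the base values $\alpha^0_1=1$, $\alpha^1_1=1$, $\alpha^2_1=2$, and then $\alpha^0_2=4$, $\alpha^1_2=3$, $\alpha^2_2=3$. A short induction shows that for $n\ge2$ one has $\alpha^0_n\ge\alpha^1_n\ge\alpha^2_n$; consequently the first branch always wins in the recursion for $\alpha^0$, so $\alpha^0_n=4^{\,n-1}$, while $\alpha^1_n$ and $\alpha^2_n$ stay strictly below $\alpha^0_n$ for $n\ge2$ (concretely $\alpha^1_n=4^{\,n-1}-2^{\,n-1}+1$, and $\alpha^2_n=4^{\,n-1}-2^{\,n}+2$ once $n\ge3$). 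Hence $\alpha_n=\max\{\alpha^0_n,\alpha^1_n,\alpha^2_n\}=\alpha^0_n=4^{\,n-1}=2^{\,2n-2}$ for $n\ge2$, with the small case handled by hand.

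The routine part is the algebra and the induction; the real obstacle is making the configuration analysis airtight: checking that the list of admissible junction patterns is complete, justifying that ``each copy may be optimized independently'' (legitimate precisely because the copies overlap only in the junction vertices, whose states have been fixed), and confirming that no configuration exploiting the edge $W_{n+1}Z_{n+1}$ has been missed. A secondary subtlety worth stressing is that a maximum independent set ends up avoiding \emph{both} initial vertices ($\alpha_n=\alpha^0_n$), which is exactly why the clean power of four appears, and that the value at $n=1$ is exceptional.
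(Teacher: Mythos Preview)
Your proposal is correct and follows essentially the same approach as the paper: classify independent sets by the number of initial vertices they contain, derive the same three recursions via the self-similar decomposition into four copies, and solve them from the base values $\alpha^0_1=\alpha^1_1=1$, $\alpha^2_1=2$. You are in fact slightly more careful than the paper---you correctly flag the exceptional case $n=1$ (where $\alpha_1=2\neq 2^{0}$, so the stated range should be $n\ge2$), and your closed form $\alpha^2_n=4^{\,n-1}-2^{\,n}+2$ for $n\ge3$ is the one that actually satisfies $\alpha^2_{n+1}=4\alpha^1_n-2$, whereas the paper's expression $\alpha^2_n=2^{2n-2}-(n-1)2^{n-1}+1$ does not.
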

\begin{proof}
Since $\alpha_n={\rm max}\{\alpha^0_n, \alpha^1_n,\alpha^2_n\}$,  the problem of determining  $\alpha_n$  is reduced to evaluating  the three quantities $\alpha^0_n$, $\alpha^1_n$ and $\alpha^2_n$.  By using the self-similar structure,  it is not difficult to prove that quantities $\alpha^0_n$, $\alpha^1_n$ and $\alpha^2_n$ satisfy the following relations:
\begin{align} %\label{ifl}
\alpha^0_{n+1}&={\rm max}\{4\alpha^0_n ,2\alpha^0_n+2\alpha^1_n-1\}\,, \label{alpha0F}\\
\alpha^1_{n+1}&={\rm max}\{2\alpha^0_n+2\alpha^1_n-1 ,\alpha^0_n+2\alpha^1_n+\alpha^2_n-2\}\,, \label{alpha1F}\\
\alpha^2_{n+1}&={\rm max}\{2\alpha^1_n+2\alpha^2_n-3 ,4\alpha^1_n-2\}\,.\label{alpha2F}
\end{align}

By definition, $\alpha_{n+1}^k$, $k = 0,1,2$, is the cardinality of an independent set in  $\Psi^k_n$. Below, we will show that  $\Psi^0_{n+1}$, $\Psi^1_{n+1}$, and $\Psi^2_{n+1}$ can be iteratively constructed from $\Psi^0_{n}$, $\Psi^1_{n}$, and $\Psi^2_{n}$. Thus, $\alpha_{n+1}^0$,  $\alpha_{n+1}^1$, and $\alpha_{n+1}^0$ can be expressed in terms of $\alpha_n^0$, $\alpha_n^1$, and $\alpha_n^2$. We now prove  graphically the above recursive relations given by Eqs.~\eqref{alpha0F},~\eqref{alpha1F}, and~\eqref{alpha2F}.

We first prove Eq.~\eqref{alpha0F}.  By the second construction,  $\mathcal{G}_{n+1}$ consists of four copies of $\mathcal{G}_n$, $\mathcal{G}_{n}^{(\theta)}$, $\theta=1,2,3,4$. By definition, for any independent  set $\chi$ in $\Psi_{n+1}^0$, the two initial  vertices $X_{n+1}$ and $Y_{n+1}$  of $\mathcal{G}_{n+1}$ do not belong to $\chi$,  implying  that the corresponding two pairs ($X_{n}^{(1)}$ and $X_{n}^{(3)}$,  $Y_{n}^{(2)}$ and $Y_{n}^{(4)}$) of the identified initial vertices of $\mathcal{G}_{n}^{(\theta)}$,  $\theta=1,2,3,4$, are not in $\chi$.  In addition, since the two hub vertices $W_{n+1}$ and $Z_{n+1}$  of $\mathcal{G}_{n+1}$ are adjacent, at most one of them is in  $\chi$, meaning that among the two pairs of vertices ($Y_{n}^{(1)}$ and $X_{n}^{(2)}$,  $Y_{n}^{(3)}$ and $X_{n}^{(4)}$), at most  one pair is in  $\chi$,  see Fig.~\ref{ifo0}. Therefore, we can construct set $\chi$ only from $\Psi_n^0$ and $\Psi_n^1$ by considering whether the initial vertices of $\mathcal{G}_{n}^{(\theta)}$, $\theta=1,2,3,4$, are in  $\chi$ or not.  Figure~\ref{ifo0} illustrates all possible configurations of independent  sets in $\Phi_{n+1}^0$ that include $\Psi_{n+1}^0$ as its subset. From Fig.~\ref{ifo0}, we obtain Eq.~\eqref{alpha0F}.

Similarly we can  prove Eqs.~\eqref{alpha1F} and~\eqref{alpha2F}, the graphical representations of which are shown in Figs.~\ref{ifo1} and~\ref{ifo2}, respectively.

Considering the initial conditions $\alpha^0_1=1$, $\alpha^1_1=1$, and $\alpha^2_1=2$, Eqs.~\eqref{alpha0F},~\eqref{alpha1F}, and~\eqref{alpha2F} are solved to yield $\alpha^0_n=2^{2n-2}$, $\alpha^1_n=2^{2n-2}-2^{n-1}+1$, and $\alpha^2_n=2^{2n-2}-(n-1)2^{n-1}+1$. Thus,  $\alpha_n={\rm max}\{\alpha^0_n, \alpha^1_n,\alpha^2_n\}=2^{2n-2}$.
\end{proof}

%%%%%%%%%%%%%%%%%%%%%%%%%%%%%%%%%%%%%%%%%%
\begin{figure}
\centering
    \begin{minipage}[c]{0.5\textwidth}
        \centering
        \includegraphics[width=0.27\textwidth]{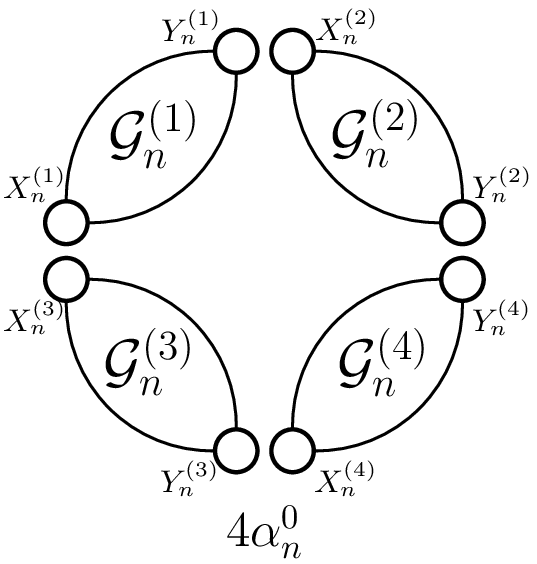}
        \includegraphics[width=0.3\textwidth]{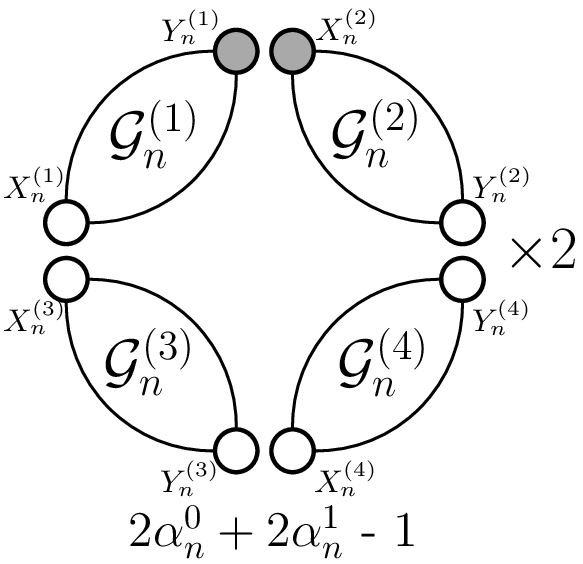}
    \end{minipage}
\caption{Illustration of all possible configurations and their sizes of independent sets $\Psi^0_{n+1}$ in graph $\mathcal{G}_{n+1}$,  which contain all independent sets in $\Phi^0_{n+1}$.}
\label{ifo0}
\end{figure}
%%%%%%%%%%%%%%%%%%%%%%%%%%%%%%%%%%%%%%%%%%

%%%%%%%%%%%%%%%%%%%%%%%%%%%%%%%%%%%%%%%%%%
\begin{figure}
\centering
    \begin{minipage}[c]{0.5\textwidth}
        \centering
        \includegraphics[width=0.3\textwidth]{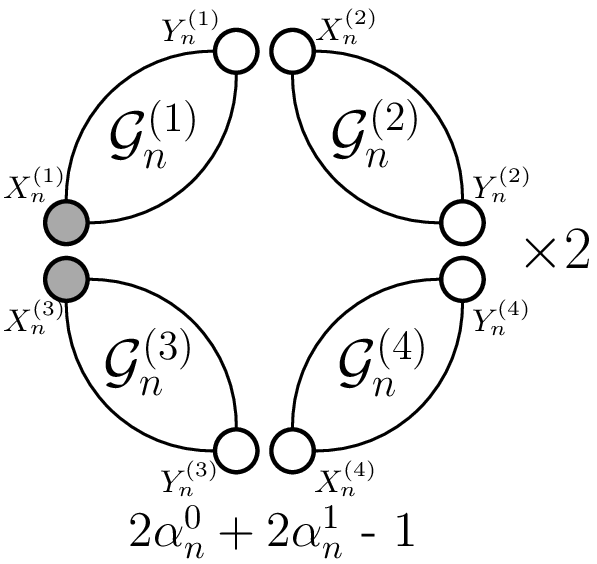}
        \includegraphics[width=0.3\textwidth]{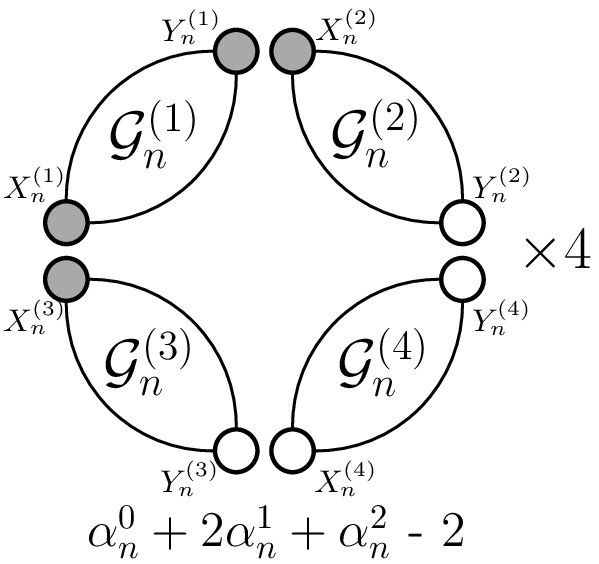}
    \end{minipage}
\caption{Illustration of all possible configurations and their sizes of independent sets $\Psi^1_{n+1}$ in graph $\mathcal{G}_{n+1}$,  which contain all independent sets in $\Phi^1_{n+1}$.}
\label{ifo1}
\end{figure}
%%%%%%%%%%%%%%%%%%%%%%%%%%%%%%%%%%%%%%%%%%

%%%%%%%%%%%%%%%%%%%%%%%%%%%%%%%%%%%%%%%%%%
\begin{figure}
\centering
    \begin{minipage}[c]{0.5\textwidth}
        \centering
        \includegraphics[width=0.27\textwidth]{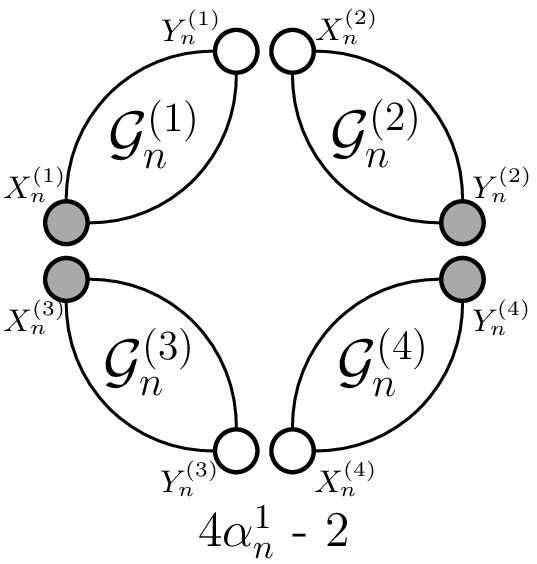}
        \includegraphics[width=0.3\textwidth]{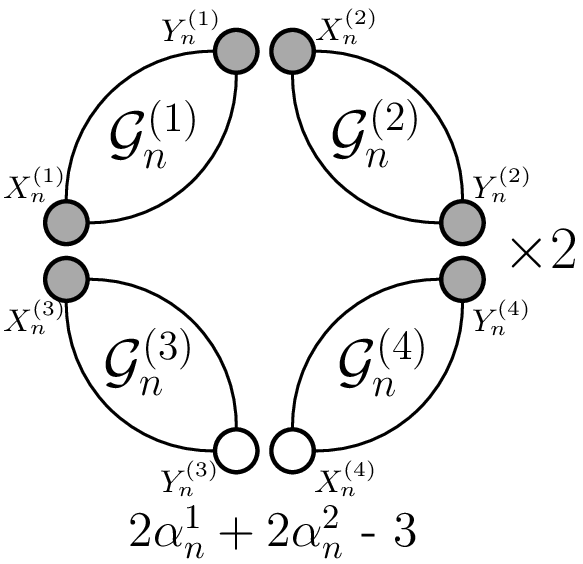}
    \end{minipage}
\caption{Illustration of all possible configurations and their sizes of independent sets $\Psi^2_{n+1}$ in graph $\mathcal{G}_{n+1}$,  which contain all independent sets in $\Phi^2_{n+1}$.}
\label{ifo2}
\end{figure}
%%%%%%%%%%%%%%%%%%%%%%%%%%%%%%%%%%%%%%%%%%

\subsubsection{Number of maximum independent sets}

In addition the  independence  number,  the number of MISs in graph $\mathcal{G}_n$ can also be determined exactly.

\begin{theorem}\label{NubMISF}
The number of maximum independent sets in graph $\mathcal{G}_n$, $n \geq 1$, is $2^{2^{2n-2}}$.
\end{theorem}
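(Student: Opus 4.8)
The plan is to mirror the strategy used for the matching number: introduce auxiliary counting quantities indexed by how the MIS behaves on the initial vertices, set up recursions via the self-similar decomposition of $\mathcal{G}_{n+1}$ into four copies of $\mathcal{G}_n$, and solve. Concretely, let $\eta_n$ denote the number of MISs of $\mathcal{G}_n$ (the quantity we want), and — guided by the proof of Theorem \ref{InnumF} — introduce $a_n$, the number of independent sets in $\Psi^0_n$ of cardinality $\alpha^0_n$ (i.e. maximum among those avoiding both initial vertices), and possibly $b_n$, the number of independent sets in $\Psi^1_n$ of the appropriate size. Since Theorem \ref{InnumF} tells us $\alpha^0_n = 2^{2n-2} = \alpha_n$ while $\alpha^1_n, \alpha^2_n < \alpha_n$ for $n \geq 1$, the key structural observation is that \emph{every} MIS of $\mathcal{G}_n$ lies in $\Psi^0_n$, so in fact $\eta_n = a_n$ and we only need a recursion for $a_n$ — but to close that recursion we may still need to track $b_n$.

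The main step is to read off from Fig.~\ref{ifo0} (the configurations realizing $\alpha^0_{n+1}$) which combinations of the four sub-copies' local independent-set types contribute. From Eq.~\eqref{alpha0F}, $\alpha^0_{n+1} = \max\{4\alpha^0_n,\, 2\alpha^0_n + 2\alpha^1_n - 1\}$, and since $\alpha^1_n = \alpha^0_n - 2^{n-1}+1 < \alpha^0_n$, the maximum is attained \emph{only} by $4\alpha^0_n$, i.e. only by the configuration in which all four copies contribute a maximum $\Psi^0$-set. Because distinct choices in the four copies yield distinct independent sets of $\mathcal{G}_{n+1}$ (the copies share only identified initial/hub vertices, all of which are vacant in this configuration), this gives the clean recursion
\begin{equation}
a_{n+1} = a_n^{\,4}.
\end{equation}
With the base case computed directly — for $n=1$, $\mathcal{G}_1 = \mathcal{K}_2$-derived graph on $4$ vertices with $\alpha_1 = \alpha^0_1 = 1$ and $a_1 = 1$... one must check the correct seed; the target $2^{2^{2n-2}}$ forces $a_1 = 2^{2^0} = 2$, so the base case to verify is $a_1 = \eta_1 = 2$ — solving $a_{n+1}=a_n^4$ gives $a_n = a_1^{4^{n-1}} = 2^{4^{n-1}} = 2^{2^{2n-2}}$, and hence $\eta_n = a_n = 2^{2^{2n-2}}$.

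The hard part will be two-fold. First, one must verify rigorously that the configuration $4\alpha^0_n$ is the \emph{unique} one achieving the maximum and that no MIS of $\mathcal{G}_{n+1}$ can cover an initial vertex — i.e. genuinely establishing $\eta_{n+1} = a_{n+1}$ and not merely $\eta_{n+1} \le$ something; this needs the strict inequalities $\alpha^1_n, \alpha^2_n < \alpha^0_n$, which hold for $n\ge 1$ but require care at the boundary $n=1$. Second, one must justify that the count truly factorizes as a fourth power: that every MIS restricted to each copy $\mathcal{G}^{(\theta)}_n$ is a maximum $\Psi^0_n$-set, that conversely any independent choice of such sets in the four copies glues to a valid independent set of $\mathcal{G}_{n+1}$ (no conflict across the identified vertices, which is automatic since all shared vertices are excluded), and that the gluing map is a bijection. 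If tracking $b_n$ turns out to be necessary because the $\Phi^0_{n+1}$ configurations in Fig.~\ref{ifo0} mix in $\Psi^1_n$-sets at sub-maximum size, the argument extends with a coupled system, but the degeneracy $\alpha^0_n = \alpha_n$ should make $a_{n+1}=a_n^4$ stand alone.
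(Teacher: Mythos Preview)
Your proposal is correct and follows essentially the same route as the paper: both derive the recursion $a_{n+1}=a_n^4$ (the paper writes it as $x_{n+1}=x_n^4$) from the fact that, since $\alpha^0_n>\alpha^1_n$, only the $4\alpha^0_n$ configuration in Fig.~\ref{ifo0} attains the maximum, then seed with the value $2$ at $n=1$ and solve. Your explicit distinction between $\eta_n$ and $a_n$ and your flagging of the $n=1$ boundary are, if anything, more careful than the paper's own argument, which simply asserts $x_1=2$ for the MIS count.
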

\begin{proof}
From the proof of Theorem~\ref{InnumF}, we have $\alpha^0_{n}> \alpha^1_n > \alpha^2_n$ when $n \geq 2$. Thus, according to Eq.~\eqref{alpha0F}, $\alpha^0_{n+1}={\rm max}\{4\alpha^0_n, 2\alpha^0_n+2\alpha^1_n-1\}=4\alpha^0_n$. Let $x_n$ denote the number of  MISs in graph $\mathcal{G}_n$. From Fig.~\ref{ifo0},  we obtain
\begin{align}\label{recd}
x_{n+1}=x_{n}^4,
\end{align}
which, under the initial value $x_1=2$, is solved  to yield $x_n=2^{2^{2n-2}}$.
\end{proof}

Theorem \ref{NubMISF} shows that number of  MISs in graph $\mathcal{G}_n$ grows exponentially with the number of vertices $N_n$ .

\subsection{Independence number and the number of maximum independent sets in non-fractal scale-free networks}

We continue to study the independence number and the number of  MISs in non-fractal scale-free graph $\mathcal{G}'_n$.

\subsubsection{Independence number}

We classify  all the independent sets of $\mathcal{G}'_n$  into two types: $\Psi^0_n$ and $\Psi^1_n$, where $\Psi^k_n$, $k=0,1,2$, represent the set of independent sets, each  including exactly $k$ hub vertices  of $\mathcal{G}'_n$. Let $\Phi^k_n$, $k=0,1,2$, be the subset of $\Psi^k_n$, where each independent set has the largest cardinality, denoted by $\alpha^k_n$, $k=0,1,2$. Since there is an edge connecting the two hub vertices $X'_n$ and $Y'_n$, set  $\Psi^2_n$ is empty, implying $\alpha^2_n=0$. Let  $\alpha_n$ denote the independence number of $\mathcal{G}'_n$. Then,  $\alpha_n$  can  be expressed  as $\alpha_n={\rm max}\{\alpha^0_n, \alpha^1_n\}$.

\begin{theorem}
The independence number of graph $\mathcal{G}'_n$, $n\geq 1$, is $\alpha_n=2^{2n-1}$.
\end{theorem}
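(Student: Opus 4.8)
The plan is to mimic the structure of the proof of Theorem~\ref{InnumF}, but adapted to the non-fractal construction $\mathcal{G}'_n$. Since $\alpha_n=\max\{\alpha^0_n,\alpha^1_n\}$ (recall $\Psi^2_n=\emptyset$ because $X'_n$ and $Y'_n$ are adjacent), it suffices to set up recursions for $\alpha^0_{n+1}$ and $\alpha^1_{n+1}$ in terms of $\alpha^0_n$ and $\alpha^1_n$, solve them, and take the maximum. First I would recall that $\mathcal{G}'_{n+1}$ is built by merging four copies $\mathcal{G}_n^{'(\theta)}$, $\theta=1,2,3,4$, at their hub vertices according to the identification pattern in Fig.~\ref{nfcons2}, with $X_{n}^{'(1)}$ ($Y_{n}^{'(2)}$) and $X_{n}^{'(3)}$ ($Y_{n}^{'(4)}$) becoming the hub vertices $X'_{n+1}$ ($Y'_{n+1}$), while $Y_{n}^{'(1)},X_{n}^{'(2)}$ merge into the border vertex $W'_{n+1}$ and $Y_{n}^{'(3)},X_{n}^{'(4)}$ merge into $Z'_{n+1}$, and finally a non-iterative edge joins $X'_{n+1}$ and $Y'_{n+1}$.

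Next I would enumerate, exactly as in Figs.~\ref{ifo0}--\ref{ifo2}, the possible occupation patterns of the four identified vertex-pairs in an independent set of $\mathcal{G}'_{n+1}$, keeping track of (i) how many hub vertices of $\mathcal{G}'_{n+1}$ are included — this determines whether we land in $\Psi^0_{n+1}$ or $\Psi^1_{n+1}$ — and (ii) the over-counting incurred because each merged vertex that is ``occupied'' is counted twice (once in each of the two copies sharing it), so a correction of $-1$ per shared occupied vertex appears, just as the $-1,-2,-3$ terms appear in Eqs.~\eqref{alpha0F}--\eqref{alpha2F}. Since $X'_{n+1}$ and $Y'_{n+1}$ are adjacent, at most one of them is occupied; likewise the border vertices $W'_{n+1},Z'_{n+1}$ are \emph{not} adjacent in $\mathcal{G}'_{n+1}$ (this is the key structural difference from the fractal case, where the two hubs were joined), so both may be occupied simultaneously. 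Carrying this bookkeeping through, I expect recursions of the shape
\begin{align}
\alpha^0_{n+1}&=\max\{4\alpha^0_n-2,\,3\alpha^0_n+\alpha^1_n-2,\,2\alpha^0_n+2\alpha^1_n-2\}\,,\notag\\
\alpha^1_{n+1}&=\max\{3\alpha^0_n+\alpha^1_n-2,\,2\alpha^0_n+2\alpha^1_n-2,\,\alpha^0_n+3\alpha^1_n-2\}\,,\notag
\end{align}
(the precise constants to be read off the figures), together with initial values $\alpha^0_1=0$, $\alpha^1_1=1$ coming from $\mathcal{G}'_1$. Solving the dominant branch and comparing $\alpha^0_n$ with $\alpha^1_n$ then yields $\alpha_n=2^{2n-1}$.

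The main obstacle is the case analysis in step two: one must be confident that the finite list of configurations drawn (analogous to Figs.~\ref{ifo0}--\ref{ifo2}) is genuinely exhaustive and that each drawn configuration is indeed realizable as an independent set of the merged graph — in particular that no adjacency created by the merging (the shared border vertices, or the new $X'_{n+1}Y'_{n+1}$ edge) is violated, and that the ``contains $\Phi^k_{n+1}$'' claim holds so that the maximum over configurations really computes $\alpha^k_{n+1}$. Once the recursion is pinned down, the arithmetic is routine: one checks by induction that $\alpha^0_n=2^{2n-1}$ is the dominant term and that $\alpha^1_n<\alpha^0_n$ for $n\ge 1$ (mirroring the relation $\alpha^0_n>\alpha^1_n>\alpha^2_n$ used later for counting MISs), giving $\alpha_n=\max\{\alpha^0_n,\alpha^1_n\}=2^{2n-1}$. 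A sanity check against $N_n=\tfrac{2}{3}(4^n+2)$ confirms $\alpha_n<N_n/2$, consistent with $\mathcal{G}'_n$ being far from bipartite-balanced.
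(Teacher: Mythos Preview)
Your overall strategy---reduce to $\alpha_n=\max\{\alpha^0_n,\alpha^1_n\}$, set up recursions via the four-copy self-similar decomposition, solve, and take the maximum---is exactly the paper's approach. However, several concrete details in your proposal are wrong and, as written, would not lead to the correct recursions or values.

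First, the initial condition $\alpha^0_1=0$ is incorrect. In $\mathcal{G}'_1$ the two border vertices $W'_1,Z'_1$ are non-adjacent and neither is a hub, so $\{W'_1,Z'_1\}\in\Psi^0_1$ and $\alpha^0_1=2$. (Your value $\alpha^1_1=1$ is right.)

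Second, your tentative recursions miss a constraint that is specific to the non-fractal graph: inside each copy $\mathcal{G}_n^{'(\theta)}$ the two hubs $X_n^{'(\theta)},Y_n^{'(\theta)}$ are already adjacent. You used this only to conclude $\Psi^2_n=\emptyset$, but it also propagates across the merging. If $X'_{n+1}$ is in an independent set, then $X_n^{'(1)}$ is in, hence $Y_n^{'(1)}$ is out, hence $W'_{n+1}$ is out; likewise $Z'_{n+1}$ is out. Thus for $\Psi^1_{n+1}$ there is \emph{only one} admissible configuration (copies $1,3$ in state~$1$, copies $2,4$ in state~$0$), not three, and one gets
\[
\alpha^1_{n+1}=2\alpha^0_n+2\alpha^1_n-1.
\]
For $\Psi^0_{n+1}$, the border vertices $W'_{n+1},Z'_{n+1}$ are free and the over-count correction equals the number of \emph{occupied} shared vertices ($0$, $1$, or $2$), not a uniform $-2$; this yields
\[
\alpha^0_{n+1}=\max\{4\alpha^0_n,\;2\alpha^0_n+2\alpha^1_n-1,\;4\alpha^1_n-2\}.
\]
Your list $\{4\alpha^0_n-2,\,3\alpha^0_n+\alpha^1_n-2,\,2\alpha^0_n+2\alpha^1_n-2\}$ contains a term ($3\alpha^0_n+\alpha^1_n$) that corresponds to no realizable configuration and omits the realizable one $4\alpha^1_n-2$. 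With the correct recursions and initial values one finds $\alpha^0_n=2^{2n-1}$ and $\alpha^1_n=2^{2n-1}-2^{n}+1$, whence $\alpha_n=\alpha^0_n=2^{2n-1}$.

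Finally, your sanity check is false: for $n\ge 2$ one has $\alpha_n=2^{2n-1}>N_n/2=(4^n+2)/3$, so an argument relying on $\alpha_n<N_n/2$ would in fact have flagged your recursions as suspect.
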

\begin{proof}
Considering $\alpha_n={\rm max}\{\alpha^0_n, \alpha^1_n\}$,  in order to determine, we  alternatively evaluate the two quantities $\alpha^0_n$ and $\alpha^1_n$ by using the self-similarity of the graph. First, we show that $\alpha^0_n$ and $\alpha^1_n$ obey the following recursion relations:
\begin{align} %\label{infl}
\alpha^0_{n+1}&={\rm max}\{4\alpha^0_n ,2\alpha^0_n+2\alpha^1_n-1,4\alpha^1_n-2\}\,, \label{alpha0G}\\
\alpha^1_{n+1}&=2\alpha^0_n+2\alpha^1_n-1\,. \label{alpha1G}
\end{align}
Equations~\eqref{alpha0G} and~\eqref{alpha1G} be proved graphically. Figures~\ref{info0}, and~\ref{info1} show the graphical representations  of Eqs.~\eqref{alpha0G} and~\eqref{alpha1G}, respectively.

Using the initial conditions $\alpha^0_1=2$ and $\alpha^1_1=1$, Eqs.~\eqref{alpha0G} and~\eqref{alpha1G} are solved to yield exact solutions for $\alpha^0_n$ and $\alpha^1_n$ as $\alpha^0_n=2^{2n-1}$ and $\alpha^1_n=2^{2n-1}-2^{n}+1$. Then, we have $\alpha_n={\rm max}\{\alpha^0_n, \alpha^1_n\}=2^{2n-1}$ for $n\geq 1$.
\end{proof}

%%%%%%%%%%%%%%%%%%%%%%%%%%%%%%%%%%%%%%%%%%%%%%%%
\begin{figure}
\centering
    \begin{minipage}[c]{0.5\textwidth}
        \centering
        \includegraphics[width=0.27\textwidth]{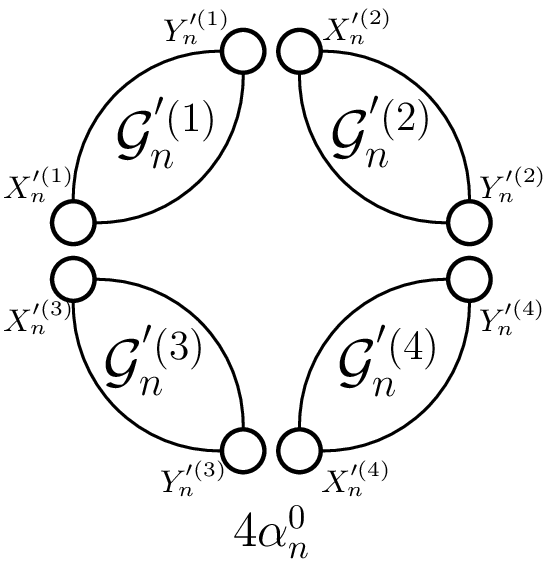}
        \includegraphics[width=0.3\textwidth]{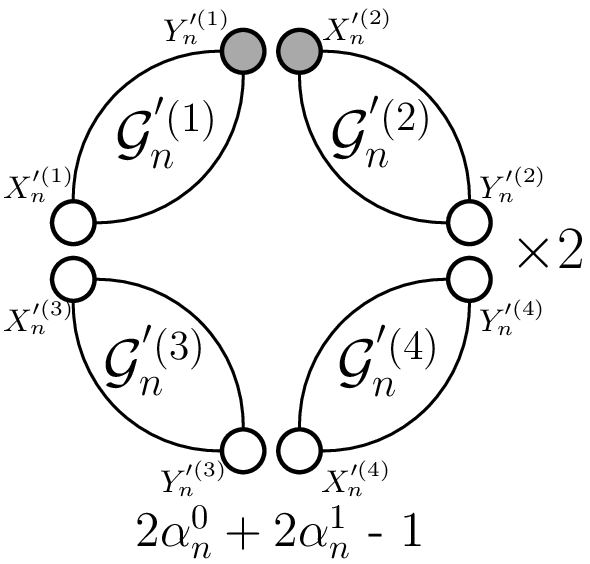}
        \includegraphics[width=0.3\textwidth]{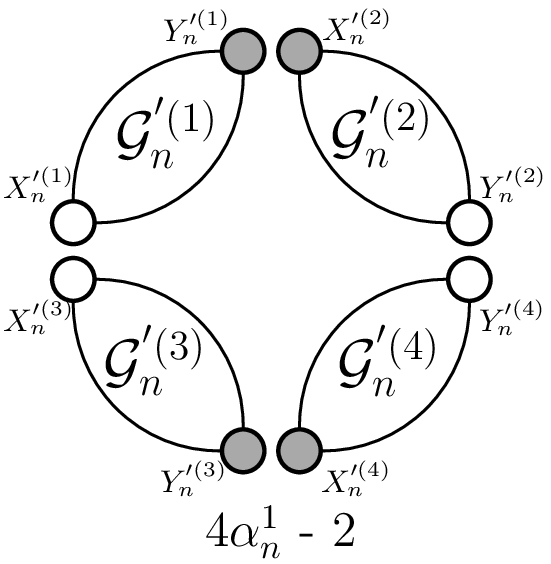}
    \end{minipage}
\caption{Illustration of all possible configurations and their sizes of independent sets $\Psi^0_{n+1}$ in graph $\mathcal{G}'_{n+1}$,  which contain all independent sets in $\Phi^0_{n+1}$.}
\label{info0}
\end{figure}
%%%%%%%%%%%%%%%%%%%%%%%%%%%%%%%%%%%%%%%%%%%%%%%%

%%%%%%%%%%%%%%%%%%%%%%%%%%%%%%%%%%%%%%%%%%%%%%%%
\begin{figure}
\centering
    \begin{minipage}[c]{0.5\textwidth}
        \centering
        \includegraphics[width=0.3\textwidth]{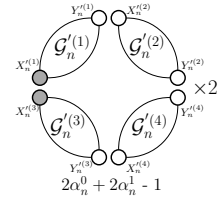}
    \end{minipage}
\caption{Illustration of all possible configurations and their sizes of independent sets $\Psi^1_{n+1}$ in graph $\mathcal{G}'_{n+1}$,  which contain all independent sets in $\Phi^1_{n+1}$.}
\label{info1}
\end{figure}
%%%%%%%%%%%%%%%%%%%%%%%%%%%%%%%%%%%%%%%%%%%%%%%%

\subsubsection{Number of maximum independence sets}

In contrast to the its fractal counterpart  $\mathcal{G}_{n}$,  the non-fractal scale-free graph $\mathcal{G'}_{n}$ has only one maximum independence set for all $n \geq 1$.

\begin{theorem}\label{NumMISg}
In the non-fractal  scale-free graph $\mathcal{G'}_{n}$, $n \geq 1$, there exists a unique maximum independence set.
\end{theorem}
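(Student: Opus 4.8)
The plan is to prove uniqueness by strengthening the statement into a simultaneous claim about both relevant classes of maximum independent sets, so that the self-similar recursion can carry the information forward. Recall from the previous theorem that $\alpha^0_n = 2^{2n-1}$ and $\alpha^1_n = 2^{2n-1}-2^n+1$, so for all $n\geq 1$ we have $\alpha_n = \alpha^0_n > \alpha^1_n$; in particular every maximum independent set of $\mathcal{G}'_n$ lies in $\Psi^0_n$ (it avoids both hub vertices). Let $x_n$ denote the number of MISs of $\mathcal{G}'_n$, equivalently the number of independent sets in $\Phi^0_n$, and let $y_n$ denote the number of independent sets in $\Phi^1_n$ (those of maximum size $\alpha^1_n$ among independent sets containing exactly one hub vertex). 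I would prove by induction on $n$ that $x_n = 1$ and, as an auxiliary bookkeeping fact, keep track of $y_n$ as well; the base case $n=1$ is checked directly on $\mathcal{G}'_1$.

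For the inductive step, I would use the second construction: $\mathcal{G}'_{n+1}$ is built from four copies $\mathcal{G}_n^{'(\theta)}$ glued at hub vertices, with the non-iterative edge added between $X'_{n+1}$ and $Y'_{n+1}$. Restricting any independent set of $\mathcal{G}'_{n+1}$ to each copy gives an independent set of $\mathcal{G}'_n$, and conversely one assembles independent sets of $\mathcal{G}'_{n+1}$ from compatible choices in the four copies, the compatibility constraints being exactly the identifications of hub vertices plus the edge $W'_{n+1}Z'_{n+1}$ (wait — in $\mathcal{G}'_n$ the added edge joins the two hub vertices $X'_{n+1},Y'_{n+1}$, which are themselves hub-identifications, so the constraint is that at most one of $X'_{n+1}, Y'_{n+1}$ is chosen). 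Since every MIS of $\mathcal{G}'_{n+1}$ omits both $X'_{n+1}$ and $Y'_{n+1}$, the relevant configuration from Fig.~\ref{info0} is the one realizing $\alpha^0_{n+1}=4\alpha^0_n$, namely all four copies contribute an independent set from $\Phi^0_n$ (the other configurations in that figure give strictly smaller totals once $\alpha^0_n>\alpha^1_n$ is used, so they produce no MIS). By the inductive hypothesis $x_n=1$, there is exactly one such choice in each copy, and the gluing is automatically consistent because the identified hub vertices are all vacant; hence $x_{n+1}=x_n^4=1$. The same figure-based analysis yields a recursion for $y_{n+1}$ in terms of $x_n$ and $y_n$ from Fig.~\ref{info1}, which I would record for completeness but which is not needed for the uniqueness conclusion.

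The main obstacle is the verification that no maximum independent set of $\mathcal{G}'_{n+1}$ arises from any configuration other than "$\Phi^0_n$ in all four copies" — that is, that the inequalities $4\alpha^0_n > 2\alpha^0_n+2\alpha^1_n-1$ and $4\alpha^0_n > 4\alpha^1_n - 2$ are strict for all $n\geq 1$, and that within the winning configuration the constraints genuinely force the unique sub-MIS in each copy rather than allowing a near-maximum set in some copy compensated elsewhere. Both reduce to the explicit formulas $\alpha^0_n - \alpha^1_n = 2^n - 1 \geq 1$, so the gap is strictly positive and no trade-off is possible; this is where the closed forms from the preceding theorem do the real work. Everything else is the routine exhaustive reading of the configurations in Figs.~\ref{info0} and~\ref{info1}, exactly parallel to the proof of Theorem~\ref{NubMISF}, so I would present the argument compactly by appealing to that analogy.
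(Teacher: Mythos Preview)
Your proposal is correct and follows essentially the same route as the paper: identify that every MIS lies in $\Phi^0_n$ since $\alpha^0_n>\alpha^1_n$, read off from Fig.~\ref{info0} that the only configuration achieving $\alpha^0_{n+1}=4\alpha^0_n$ takes a $\Phi^0_n$-set in each of the four copies, deduce $x_{n+1}=x_n^4$, and conclude from $x_1=1$. The auxiliary quantity $y_n$ you introduce is unnecessary (as you yourself note), and your momentary confusion about which edge is added in $\mathcal{G}'_{n+1}$ is self-corrected; your explicit verification of the strict inequalities and the ``no trade-off'' observation is, if anything, slightly more careful than the paper's brief appeal to Eq.~\eqref{alpha0G} and Fig.~\ref{info0}.
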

\begin{proof}
Let $x_n$ denote the number of MISs in $\mathcal{G}'_n$. Equation~\eqref{alpha0G}  and Fig.~\ref{info0}  show that for $n \geq 2$ any MIS of $\mathcal{G}'_{n+1}$ is in fact the union of maximum independent sets in $\Phi_n^0$, of the four copies of $\mathcal{G}'_{n}$ (i.e. $\mathcal{G}_{n}^{'(1)}$, $\mathcal{G}_{n}^{'(2)}$, $\mathcal{G}_{n}^{'(3)}$, and $\mathcal{G}_{n}^{'(4)}$) forming $\mathcal{G}'_{n+1}$. Thus, any MIS of $\mathcal{G}_{n+1}$ is determined by those of $\mathcal{G}_{n}^{'(1)}$, $\mathcal{G}_{n}^{'(2)}$, $\mathcal{G}_{n}^{'(3)}$, and $\mathcal{G}_{n}^{'(4)}$. Moreover, $x_{n+1}=x_{n}^4$. Since $x_{1}=1$, we have $x_{n}=1$ for  all $n \geq 1$.
\end{proof}

Theorem~\ref{NumMISg} indicates that  for all $n \geq 1$,  $\mathcal{G}'_n$  is a unique independence graph. Furthermore,  it is easy to see that the unique MIS of $\mathcal{G}'_n$, $n \geq 1$, contains exactly  all the  vertices with degree two that are generated at iteration  $n-1$.

\section{Domination number and the number of minimum dominating sets}

In this section, we study the domination number and the number of MDSs in two self-similar scale-free networks $\mathcal{G}_n$ and $\mathcal{G}'_n$.

\subsection{Domination number and the number of  MDSs in fractal scale-free networks}

We first study the domination number and the number of MDSs in the fractal scale-free network $\mathcal{G}_n$.

Let $\gamma_n$ denote the domination number of graph $\mathcal{G}_n$.  In order to determine  $\gamma_n$, we classify  into all the dominating sets of $\mathcal{G}_n$  into three groups: $\Gamma^0_n$, $\Gamma^1_n$ and $\Gamma^2_n$,  where $\Gamma^k_n$, $k=0,1,2$, represent the set of those dominating sets including exactly $k$ initial vertices  of $\mathcal{G}_n$. Moreover, let $\Upsilon^k_n$, $k=0,1,2$, be the subset of $\Upsilon^k_n$, where each independent set has the largest cardinality, denoted as $\gamma^k_n$, $k=0,1,2$.
By definition, we have $\gamma_n={\rm min}\{\gamma^0_n, \gamma^1_n, \gamma^2_n\}$.

\begin{theorem}
For $n \geq 2$, the domination number of graph $\mathcal{G}_n$ is  $\gamma_n=\frac{5\cdot2^{2n-4}+4}{3}$.
\end{theorem}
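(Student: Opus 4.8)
The plan is to mirror the structure already used for the matching number and the independence number: exploit the self-similar decomposition of $\mathcal{G}_n$ into four copies $\mathcal{G}_n^{(\theta)}$ of $\mathcal{G}_{n-1}$ glued at initial vertices, and set up a coupled system of recursions for the three refined quantities $\gamma^0_n,\gamma^1_n,\gamma^2_n$. The subtlety specific to domination (not present for matching or independence) is that a vertex outside the dominating set must still be \emph{dominated}; when we split $\mathcal{G}_{n+1}$ into its four sub-copies, a shared initial vertex may be dominated by a neighbour lying in a \emph{different} copy. So the bookkeeping must track not merely whether each initial vertex of a sub-copy is \emph{in} the dominating set, but whether it is required to be dominated \emph{from within} that sub-copy. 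The clean way to handle this is to introduce, for each copy, auxiliary ``partial-domination'' parameters $\hat\gamma^k_n$ analogous to $\gamma^k_n$ but where the $k$ included initial vertices are allowed to be un-dominated internally (they will be dominated by the merge). One then reads off from the figures (the analogue of Figs.~\ref{ifo0}--\ref{ifo2}) all admissible configurations at level $n+1$, each contributing a sum of four sub-copy costs, and minimises.

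The key steps, in order, would be: (1) formalise the case analysis for the hub edge $(W_{n+1},Z_{n+1})$ and the two pairs of merged initial vertices, listing which of $W_{n+1},Z_{n+1},X_{n+1},Y_{n+1}$ are placed in the dominating set and which sub-copies therefore owe internal domination of their glued endpoints; (2) write down the resulting $\min$-recursions for $\gamma^0_{n+1},\gamma^1_{n+1},\gamma^2_{n+1}$ (and the auxiliary $\hat\gamma$'s) in terms of the level-$n$ quantities, exactly paralleling Eqs.~\eqref{alpha0F}--\eqref{alpha2F}; (3) compute the initial data at $n=2$ (and possibly $n=1$) directly, since the claimed formula is only asserted for $n\ge 2$ — the small cases are where the recursion ``boundary effects'' from $\mathcal{G}_0=\mathcal{K}_2$ wash out; (4) solve the linear recursion. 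Since $N_n$ and $E_n$ scale like $4^n$ and the answer $\gamma_n=\tfrac{5\cdot 2^{2n-4}+4}{3}=\tfrac{5\cdot4^{n-2}+4}{3}$ also scales like $4^n$, the recursion for the dominant term should be $\gamma_{n+1}\approx 4\gamma_n$ with a small correction, so one expects a particular solution of the form $a\cdot4^n+b$; plugging in and matching the $n=2$ value pins down $a=5/48$, $b=4/3$, giving the stated closed form. (5) Finally, verify that the minimising branch is the one used, i.e.\ check a strict inequality among $\gamma^0_n,\gamma^1_n,\gamma^2_n$ so that the recursion ``stabilises'' for all $n\ge 2$, just as the earlier proofs checked $\alpha^0_n>\alpha^1_n>\alpha^2_n$.

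The main obstacle I anticipate is step (1): enumerating the domination configurations correctly. Unlike matchings and independent sets, where a glued vertex is simply ``covered or not'', here each glued initial vertex has four local states — in the set; not in the set but dominated internally; not in the set and dominated only via the merge; and (for the hub vertices) the interaction with the extra non-iterative edge $(W_{n+1},Z_{n+1})$, which can itself dominate a hub vertex for free. Getting the full, non-redundant list of configurations — and being sure none is missing, so that the $\min$ is over the true option set — is the delicate part; this is presumably why the paper devotes separate figures to it. A secondary, milder obstacle is that the formula holds only for $n\ge 2$, so one must treat $\mathcal{G}_1$ (and the transition $\mathcal{G}_1\to\mathcal{G}_2$) by hand rather than from the generic recursion, and confirm that from $n=2$ onward the ``interior'' branch of each $\min$ is the active one.
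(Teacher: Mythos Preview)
Your overall plan is exactly the paper's: classify dominating sets of $\mathcal{G}_n$ by how many of the two initial vertices they contain, set up coupled $\min$-recursions for $\gamma^0_n,\gamma^1_n,\gamma^2_n$ via the four-copy decomposition (case-splitting on which of $X_{n+1},Y_{n+1},W_{n+1},Z_{n+1}$ lie in the set), start from the values at $n=2$, and solve. The paper obtains precisely
\begin{align*}
\gamma^0_{n+1}&=\min\{4\gamma^0_n,\;2\gamma^0_n+2\gamma^1_n-1,\;4\gamma^1_n-2\},\\
\gamma^1_{n+1}&=\min\{2\gamma^0_n+2\gamma^1_n-1,\;\gamma^0_n+2\gamma^1_n+\gamma^2_n-2,\;2\gamma^1_n+2\gamma^2_n-3\},\\
\gamma^2_{n+1}&=\min\{4\gamma^1_n-2,\;2\gamma^1_n+2\gamma^2_n-3,\;4\gamma^2_n-4\},
\end{align*}
with $\gamma^0_2=4$, $\gamma^1_2=\gamma^2_2=3$, yielding $\gamma_n=\gamma^2_n=\frac{5\cdot 2^{2n-4}+4}{3}$.

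The one point where you depart from the paper is your proposed layer of auxiliary ``partial domination'' parameters $\hat\gamma^k_n$, meant to handle the possibility that a glued initial vertex is dominated only from the neighbouring copy. The paper does \emph{not} introduce these: it works directly with $\gamma^0_n,\gamma^1_n,\gamma^2_n$ and, just as in the independence-number proof, reads the configurations straight off figures, implicitly treating each sub-copy restriction as a bona fide dominating set of that copy. Your worry is legitimate for a fully rigorous lower bound, and your $\hat\gamma$ apparatus would close that gap cleanly; but it is more machinery than the paper itself deploys. If you aim to match the paper's level of detail you can drop the auxiliaries and write down the three-term recursion directly; if you want an airtight argument, your extra bookkeeping is the right addition.
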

\begin{proof}
Since the problem of determining $\gamma_n$ can be reduced to finding $\gamma^0_n$, $\gamma^1_n$ and $\gamma^2_n$, we now determine these three intermediate quantities. To this end, we provide the following recursion relation for $n\geq 2$ governing these quantities:
\begin{align} %\label{dfl}
\gamma^0_{n+1}&={\rm min}\{4\gamma^0_n,2\gamma^0_n+2\gamma^1_n-1,4\gamma^1_n-2\}\,, \label{gamma0F}\\
\gamma^1_{n+1}&={\rm min}\{2\gamma^0_n+2\gamma^1_n-1,\gamma^0_n+2\gamma^1_n+\gamma^1_n-2,\notag\\
&\quad \quad\quad\quad
2\gamma^1_n+2\gamma^2_n-3\}\,,  \label{gamma1F}\\
\gamma^2_{n+1}&={\rm min}\{4\gamma^1_n-2,2\gamma^1_n+2\gamma^2_n-3,4\gamma^2_n-4\}\,. \label{gamma2F}
\end{align}

Equations~\eqref{gamma0F},~\eqref{gamma1F}, and~\eqref{gamma2F}  can all be proved graphically.

We first prove Eq.~\eqref{gamma0F}.  According to Fig.~\ref{fcons2}, $\mathcal{G}_{n+1}$ is consist of four copies of $\mathcal{G}_n$, $\mathcal{G}_{n}^{(\theta)}$, $\theta=1,2,3,4$. By definition, for any dominating  set $\xi$ in $\Upsilon_{n+1}^0$, both of the two initial  vertices $X_{n+1}$ and $Y_{n+1}$  of $\mathcal{G}_{n+1}$ are not in $\xi$,  implying  that the corresponding two pairs ($X_{n}^{(1)}$ and $X_{n}^{(3)}$,  $Y_{n}^{(2)}$ and $Y_{n}^{(4)}$) of identified initial vertices of $\mathcal{G}_{n}^{(\theta)}$,  $\theta=1,2,3,4$, are not in $\xi$. In addition, according to the number of   hub vertices in a dominating  sets belonging  $\Upsilon_{n+1}^0$, the dominating  sets in $\Upsilon_{n+1}^0$ can be further sorted into three disjoint subsets.  Figure~\ref{dfo0} illustrates all possible configurations of dominating sets in $\Gamma_{n+1}^0$ that contains all dominating sets in $\Upsilon_{n+1}^0$. In Fig.~\ref{dfo0}, only the initial vertices of $\mathcal{G}_{n}^{(\theta)}$, $\theta=1,2,3,4$, are shown, with solid vertices being in the dominating sets, while open vertices not. From Fig.~\ref{dfo0}, we establish Eq.~\eqref{gamma0F}.

In a similar way, we can prove Eqs.~\eqref{gamma1F} and~\eqref{gamma2F}, the graphical representations of which are provided in Figs.~\ref{dfo1} and~\ref{dfo2},  respectively.

With initial condition $\gamma^0_2=4$, $\gamma^1_2=3$ and $\gamma^2_2=3$, Eqs.~\eqref{gamma0F},~\eqref{gamma1F}, and~\eqref{gamma2F} are solved to yield $\gamma^0_n=\frac{5\cdot2^{2n-4}+3\cdot2^{n-1}-2}{3}$, $\gamma^1_n=\frac{5\cdot2^{2n-4}+3\cdot2^{n-2}+1}{3}$, and $\gamma^2_n=\frac{5\cdot2^{2n-4}+4}{3}$.  Thus,  the domination number of graph $\mathcal{G}_n$ is  $\gamma_n={\rm min}\{\gamma^0_n, \gamma^1_n, \gamma^2_n\}=\gamma^2_n=\frac{5\cdot2^{2n-4}+4}{3}$ for all $n \geq 2$.
\end{proof}

%%%%%%%%%%%%%%%%%%%%%%%%%%%%%%%%%%%%%%%%%%%%%%
\begin{figure}
\centering
    \begin{minipage}[c]{0.5\textwidth}
        \centering
        \includegraphics[width=0.27\textwidth]{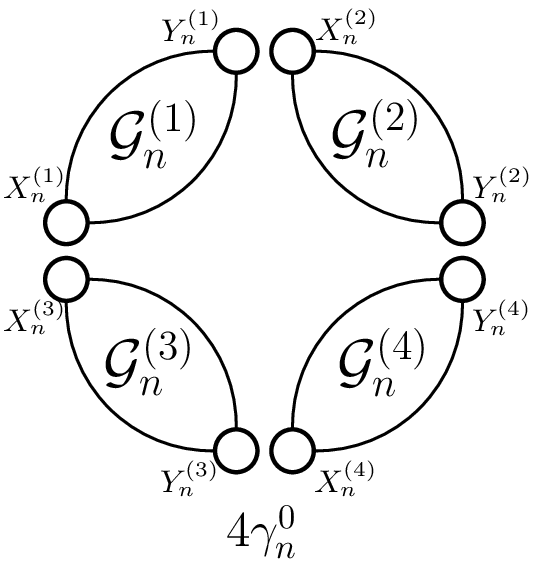}
        \includegraphics[width=0.3\textwidth]{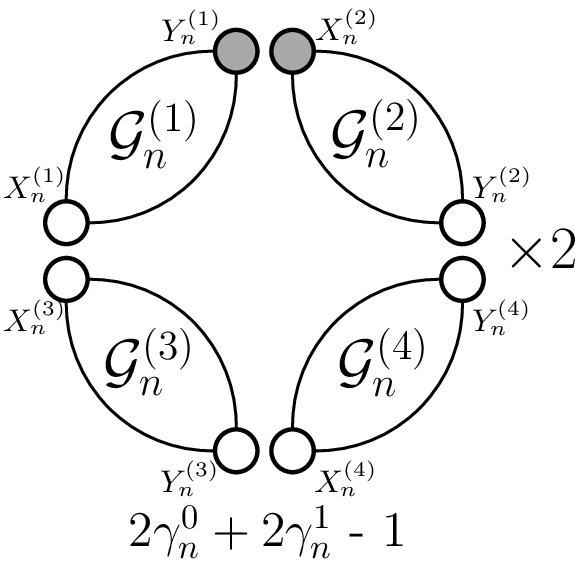}
        \includegraphics[width=0.27\textwidth]{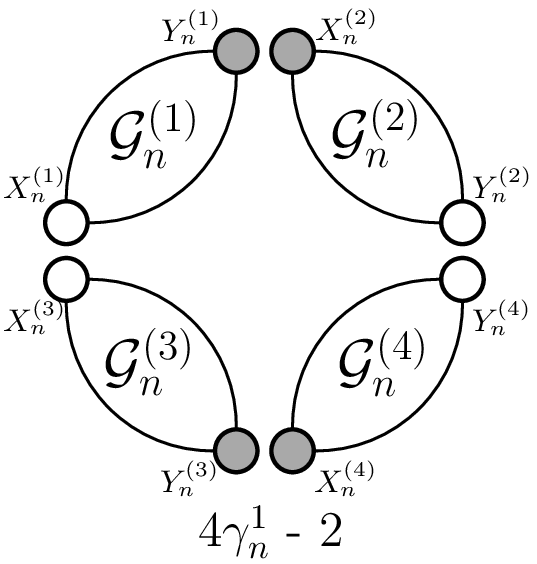}
    \end{minipage}
\caption{Illustration of all possible configurations and their sizes of dominating sets $\Gamma^0_{n+1}$ in graph $\mathcal{G}_{n+1}$ containing $\Upsilon^0_{n+1}$.}
\label{dfo0}
\end{figure}
%%%%%%%%%%%%%%%%%%%%%%%%%%%%%%%%%%%%%%%%%%%%%%

%%%%%%%%%%%%%%%%%%%%%%%%%%%%%%%%%%%%%%%%%%%%%%
\begin{figure}
\centering
    \begin{minipage}[c]{0.5\textwidth}
        \centering
        \includegraphics[width=0.3\textwidth]{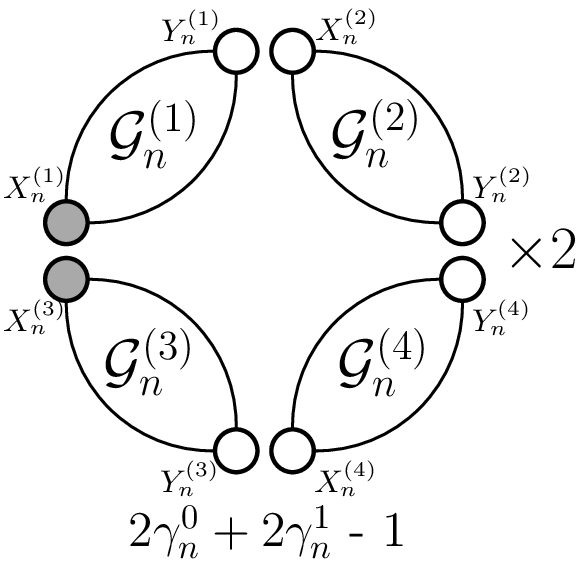}
        \includegraphics[width=0.3\textwidth]{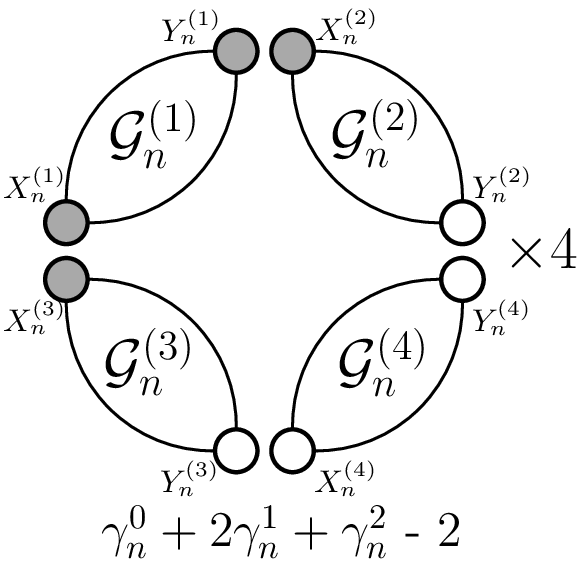}
        \includegraphics[width=0.3\textwidth]{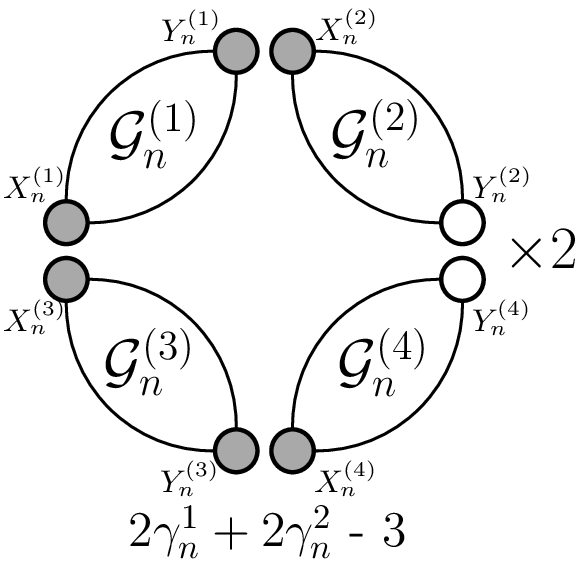}
    \end{minipage}
\caption{Illustration of all possible configurations and their sizes of dominating sets $\Gamma^1_{n+1}$ in graph $\mathcal{G}_{n+1}$ containing $\Upsilon^1_{n+1}$.}
\label{dfo1}
\end{figure}
%%%%%%%%%%%%%%%%%%%%%%%%%%%%%%%%%%%%%%%%%%%%%%

%%%%%%%%%%%%%%%%%%%%%%%%%%%%%%%%%%%%%%%%%%%%%%
\begin{figure}
\centering
    \begin{minipage}[c]{0.5\textwidth}
        \centering
        \includegraphics[width=0.27\textwidth]{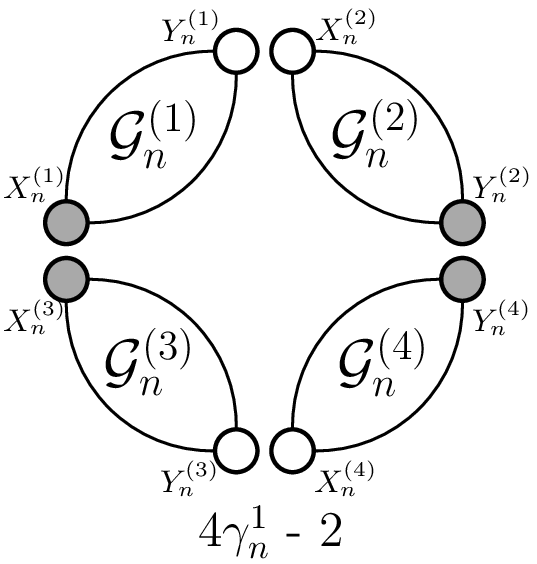}
        \includegraphics[width=0.3\textwidth]{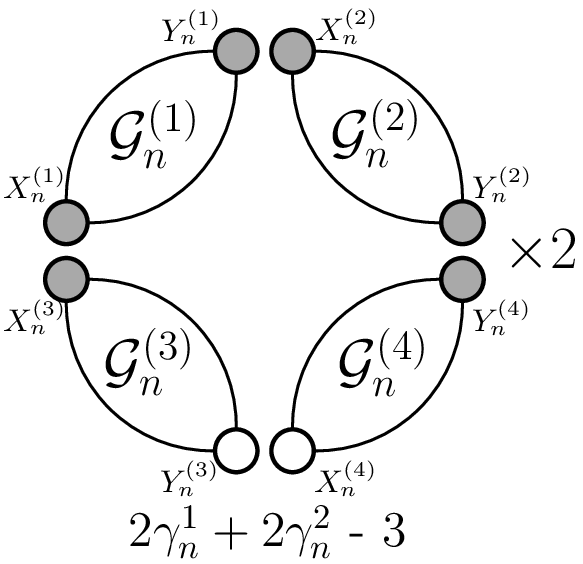}
        \includegraphics[width=0.27\textwidth]{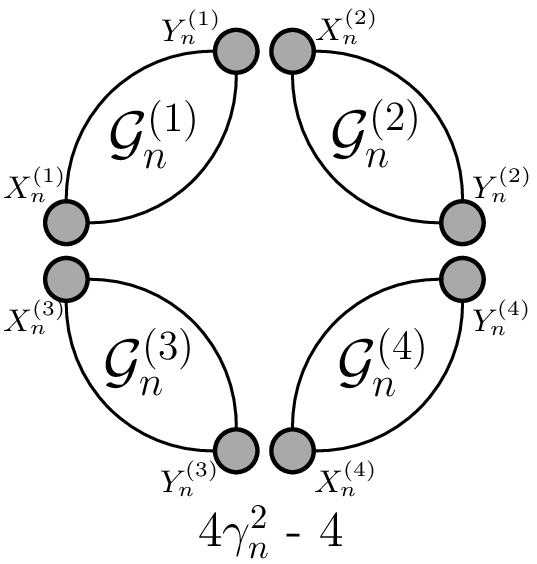}
    \end{minipage}
\caption{Illustration of all possible configurations and their sizes of dominating sets $\Gamma^2_{n+1}$ in graph $\mathcal{G}_{n+1}$ containing $\Upsilon^2_{n+1}$.}
\label{dfo2}
\end{figure}
%%%%%%%%%%%%%%%%%%%%%%%%%%%%%%%%%%%%%%%%%%%%%%

\subsubsection{Number of minimum dominating sets}

In addition the domination  number,  the number of MDSs in graph $\mathcal{G}_n$ can also be determined exactly.

\begin{theorem}\label{NubMDSF}
The number of maximum dominating sets in graph $\mathcal{G}_n$, $n \geq 2$, is $2^{2^{2n-4}}$.
\end{theorem}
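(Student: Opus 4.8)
The plan is to mirror the structure of the proof of Theorem~\ref{NubMISF}, using the recursion relations already established for the domination quantities $\gamma^0_n$, $\gamma^1_n$, $\gamma^2_n$ together with the configuration figures Figs.~\ref{dfo0},~\ref{dfo1}, and~\ref{dfo2}. First I would record the explicit closed forms proved in the previous theorem, namely $\gamma^0_n=\frac{5\cdot2^{2n-4}+3\cdot2^{n-1}-2}{3}$, $\gamma^1_n=\frac{5\cdot2^{2n-4}+3\cdot2^{n-2}+1}{3}$, and $\gamma^2_n=\frac{5\cdot2^{2n-4}+4}{3}$, and observe the strict ordering $\gamma^0_n>\gamma^1_n>\gamma^2_n$ for all $n\geq 2$ (a short check, since the $2^{2n-4}$ terms agree and the lower-order terms are strictly decreasing in the superscript). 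This ordering is what makes the recursion for the \emph{count} collapse to a single term.

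Next I would introduce $y_n$ for the number of MDSs of $\mathcal{G}_n$ (the domination number being realized by $\gamma^2_n$), and, paralleling the matching/independent-set arguments, an auxiliary quantity $z_n$ counting the minimum-size dominating sets in $\Upsilon^0_n$ (i.e.\ those of size $\gamma^0_n$ avoiding both initial vertices). Inspecting Fig.~\ref{dfo2}: since $\gamma_{n+1}=\gamma^2_{n+1}$ and the ${\rm min}$ in Eq.~\eqref{gamma2F} is attained only at the term $4\gamma^2_n-4$ (using $\gamma^2_n<\gamma^1_n$ so that $4\gamma^2_n-4<2\gamma^1_n+2\gamma^2_n-3<4\gamma^1_n-2$), every MDS of $\mathcal{G}_{n+1}$ decomposes uniquely as a union of four MDSs, one in each copy $\mathcal{G}_n^{(\theta)}$, each being a dominating set of $\mathcal{G}_n$ containing both of its initial vertices and of minimum size $\gamma^2_n$ — and moreover no cross-copy domination is needed at the identified vertices. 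Because the four copies interact only through the four identified vertices (each pair carrying the "both initial vertices in the set" condition) and the two hub vertices are then automatically dominated, the choices in the four copies are independent, giving the recursion
\begin{align}\label{recdom}
y_{n+1}=y_n^4.
\end{align}
With the base value $y_2$ read off directly (there are two MDSs of size $3$ in $\mathcal{G}_2$ containing both initial vertices, so $y_2=2$), Eq.~\eqref{recdom} solves to $y_n=2^{4^{\,n-2}}=2^{2^{2n-4}}$, which is the claimed value.

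The main obstacle I anticipate is justifying the \emph{uniqueness} of the decomposition and the \emph{independence} of the four copies at the level of counting, rather than just at the level of sizes: one must argue from Fig.~\ref{dfo2} that in a minimum configuration no vertex of one copy is used to dominate a vertex lying in an adjacent copy across a shared initial/hub vertex, and that the "exactly $\gamma^2_n$, both initial vertices included" constraint is the \emph{only} constraint the copy's dominating set must satisfy — equivalently, that every such dominating set of $\mathcal{G}_n$ extends to a valid global MDS. This requires checking that the other size-minimizing possibilities allowed by Eqs.~\eqref{gamma0F} and~\eqref{gamma1F} for the sub-copies are genuinely suboptimal at level $n+1$, which follows from the strict inequalities $\gamma^0_n>\gamma^1_n>\gamma^2_n$ but should be stated carefully. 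A secondary point worth a sentence is confirming $y_2=2$ by direct enumeration on the small graph $\mathcal{G}_2$, since the whole recursion is pinned to that base case.
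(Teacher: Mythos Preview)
Your proposal is correct and follows essentially the same route as the paper: use the strict ordering $\gamma^0_n>\gamma^1_n>\gamma^2_n$ to see that in Eq.~\eqref{gamma2F} the minimum is attained uniquely by $4\gamma^2_n-4$, deduce from Fig.~\ref{dfo2} the recursion $y_{n+1}=y_n^4$, and solve it with the base value $y_2=2$. Your write-up is in fact more careful than the paper's (which omits the strict-inequality check and the independence discussion); the auxiliary quantity $z_n$ you introduce is not needed and can be dropped.
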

\begin{proof}
    Let $y_n$ denote the number of  MDSs in graph $\mathcal{G}_n$. From Eq.~\eqref{gamma2F} and Fig.~\ref{ifo0}, we know that for $n \geq 2$, any  MDS of $\mathcal{G}_{n+1}$ is in fact the union of  MDSs in $\Upsilon_n^2$, of the four copies of $\mathcal{G}_{n}$ (i.e. $\mathcal{G}_{n}^{(1)}$, $\mathcal{G}_{n}^{(2)}$, $\mathcal{G}_{n}^{(3)}$, and $\mathcal{G}_{n}^{(4)}$) constituting $\mathcal{G}_{n+1}$. Therefore, we obtain
	\begin{align}\label{recd}
	y_{n+1}=y_{n}^4,
	\end{align}
	which, under the initial value $y_2=2$, is solved  to yield $x_n=2^{2^{2n-4}}$.
\end{proof}
Theorem \ref{NubMDSF} shows that the  number of  MDSs in graph $\mathcal{G}_n$ grows exponentially with the number of vertices $N_n$, which is similar to the number of  MISs.

\subsection{Domination number and the number of minimum dominating sets in non-fractal scale-free networks}

We finally study the domination number and the number of  MDSs in the  non-fractal scale-free network $\mathcal{G}'_n$.

Analogously to graph $\mathcal{G}'_n$,   all the  dominating sets in $\mathcal{G}'_n$ can be classified into three sets: $\Gamma^0_n$, $\Gamma^1_n$ and $\Gamma^2_n$, where $\Gamma^k_n$, $k=0,1,2$, represent the set of dominating sets, each including exactly $k$ hub vertices of $\mathcal{G}'_n$. Let $\Upsilon^k_n$, $k=0,1,2$, be the subset of $\Gamma^k_n$, where each independent set has the smallest cardinality, denoted by $\gamma^k_n$, $k=0,1,2$. Then,  the domination number $\gamma_n$ of $\mathcal{G}'_n$ can be expressed by $\gamma_n={\rm min}\{\gamma^0_n, \gamma^1_n, \gamma^2_n\}$.

\begin{theorem}
For $n \geq 3$, the domination number of non-fractal scale-free graph $\mathcal{G}'_n$ is  $\gamma_n=\frac{2^{2n-3}+4}{3}$.
\end{theorem}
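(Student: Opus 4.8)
The plan is to follow the decimation scheme already used above for the fractal graph $\mathcal{G}_n$. I would classify the dominating sets of $\mathcal{G}'_n$ according to how many of the two hub vertices $X'_n,Y'_n$ they contain, keeping the notation $\gamma^0_n,\gamma^1_n,\gamma^2_n$ for the minimum cardinalities in the classes $\Gamma^0_n,\Gamma^1_n,\Gamma^2_n$, so that $\gamma_n=\min\{\gamma^0_n,\gamma^1_n,\gamma^2_n\}$; the whole problem then reduces to finding and solving a self-similar recursion for these three quantities. Recall from Fig.~\ref{nfcons2} that $\mathcal{G}'_{n+1}$ is built by gluing four copies $\mathcal{G}_n^{'(\theta)}$ of $\mathcal{G}'_n$ at their hub vertices, with $X_n^{'(1)}=Y_n^{'(2)}=X'_{n+1}$, $X_n^{'(3)}=Y_n^{'(4)}=Y'_{n+1}$, $Y_n^{'(1)}=X_n^{'(2)}=W'_{n+1}$, $Y_n^{'(3)}=X_n^{'(4)}=Z'_{n+1}$, and then inserting the edge $X'_{n+1}Y'_{n+1}$. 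Consequently copies $1,2$ share exactly the pair $\{X'_{n+1},W'_{n+1}\}$, copies $3,4$ share $\{Y'_{n+1},Z'_{n+1}\}$, and the two resulting blocks communicate only through the new edge $X'_{n+1}Y'_{n+1}$.

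Given a dominating set $\xi\in\Gamma^k_{n+1}$, its intersection with each copy must dominate every vertex of that copy other than the (at most two) shared hub vertices, since all remaining vertices of a copy have no neighbours outside it; a shared vertex is dominated either by lying in $\xi$, or from within one of the two copies containing it, or — for $X'_{n+1}$ and $Y'_{n+1}$ — across the new edge. I would enumerate, separately for $k=0,1,2$, the membership of $X'_{n+1},Y'_{n+1},W'_{n+1},Z'_{n+1}$ in $\xi$ together with the way each shared vertex is dominated, draw the corresponding configuration diagrams (analogous to Figs.~\ref{dfo0}--\ref{dfo2}), and read off recursions of the form $\gamma^k_{n+1}=\min\{\,a\gamma^0_n+b\gamma^1_n+c\gamma^2_n-d\,\}$ with $a+b+c=4$ over a finite list of admissible configurations, the constant $d$ correcting for shared vertices counted in two copies; these should come out structurally parallel to Eqs.~\eqref{gamma0F}--\eqref{gamma2F}. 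Finally I would compute the base values $\gamma^0_n,\gamma^1_n,\gamma^2_n$ for small $n$ directly (this is also where the hypothesis $n\ge 3$ enters: the relative order of the three sequences, and hence which branch realizes the minimum, settles down only from $n=3$ onward), solve the linear recursions to obtain closed forms $\gamma^k_n=\frac{1}{3}\bigl(c_k\,2^{2n}+e_k\,2^{n}+f_k\bigr)$, and take the minimum — which, as for $\mathcal{G}_n$, I expect to be $\gamma^2_n$ — to arrive at $\gamma_n=\frac{2^{2n-3}+4}{3}$.

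The step I expect to be the main obstacle is the correct treatment of domination of the shared vertices. Unlike in the matching and independent-set arguments, a shared vertex may be covered ``from the other copy,'' so requiring each copy's restriction to be a genuine dominating set of that copy can be wasteful, and one is tempted to relax it, which would introduce auxiliary quantities such as ``the minimum size of a set containing $k$ hubs that dominates all vertices except possibly one prescribed hub.'' The plan is to show that for the two blocks occurring here such relaxed configurations are either never smaller than the ones already listed, or are exactly accounted for by the $-d$ corrections, so that the three sequences $\gamma^0_n,\gamma^1_n,\gamma^2_n$ genuinely suffice; carrying out this case analysis exhaustively over the (finite but larger-than-usual) list of configurations is the delicate part. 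A secondary check is to confirm that the minimizing branch is stable for every $n\ge 3$, so that the single closed form $\frac{2^{2n-3}+4}{3}$ is valid throughout and the $n=2$ transient is correctly excluded.
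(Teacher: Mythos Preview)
Your overall plan is exactly the paper's: classify dominating sets of $\mathcal{G}'_n$ by the number of hubs they contain, set up recursions for $\gamma^0_n,\gamma^1_n,\gamma^2_n$ from the four-copy decomposition, solve from initial data at $n=3$, and take the minimum; the paper obtains recursions identical in form to Eqs.~\eqref{gamma0F}--\eqref{gamma2F} and finds $\gamma^2_n=\frac{2^{2n-3}+4}{3}<\gamma^1_n<\gamma^0_n$, so $\gamma_n=\gamma^2_n$.

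There is, however, a concrete error in your reading of the construction that would derail the case analysis. In Fig.~\ref{nfcons2} the identifications are $X_n^{'(1)}=X_n^{'(3)}=X'_{n+1}$, $Y_n^{'(2)}=Y_n^{'(4)}=Y'_{n+1}$, $Y_n^{'(1)}=X_n^{'(2)}=W'_{n+1}$, $Y_n^{'(3)}=X_n^{'(4)}=Z'_{n+1}$: the four copies sit around a $4$-cycle $X'_{n+1}\,$--$\,W'_{n+1}\,$--$\,Y'_{n+1}\,$--$\,Z'_{n+1}\,$--$\,X'_{n+1}$, each consecutive pair of copies sharing \emph{one} vertex, and the non-iterative edge $X'_{n+1}Y'_{n+1}$ is a chord of this cycle. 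In your description copies $1,2$ share \emph{both} of their hubs and lie in parallel between $X'_{n+1}$ and $W'_{n+1}$, copies $3,4$ do the same between $Y'_{n+1}$ and $Z'_{n+1}$, and the two blocks meet only through the chord. That is a genuinely different graph, and the configuration list and recursions you would read off from it would be wrong (in particular, $W'_{n+1}$ and $Z'_{n+1}$ would each be shared by two copies at both ends, changing all the ``$-d$'' corrections). Once you fix the gluing, the diagrams are the direct analogues of Figs.~\ref{dfo0}--\ref{dfo2} and give Eqs.~\eqref{gamma0g}--\eqref{gamma2g}. Regarding your worry about ``almost-dominating'' auxiliary quantities: the paper does not introduce any either and simply relies on the figure enumeration, so your plan to check that $\gamma^0_n,\gamma^1_n,\gamma^2_n$ genuinely suffice is appropriate and is not a departure from the paper's argument.
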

\begin{proof}
Since $\gamma_n={\rm min}\{\gamma^0_n, \gamma^1_n, \gamma^2_n\}$, we first evaluate the quantities $\gamma^0_n$, $\gamma^1_n$ and $\gamma^2_n$.  In a way similar to the case of graph $\mathcal{G}_n$,  we establish the following recursive relations governing the three quantities $\gamma^0_n$, $\gamma^1_n$ and $\gamma^2_n$:
\begin{align} %\label{dnfl}
\gamma^0_{n+1}&={\rm min}\{4\gamma^0_n,2\gamma^0_n+2\gamma^1_n-1,4\gamma^1_n-2\}\,, \label{gamma0g}\\
\gamma^1_{n+1}&={\rm min}\{2\gamma^0_n+2\gamma^1_n-1,\gamma^0_n+2\gamma^1_n+\gamma^1_n-2,\notag\\
&\quad \quad\quad\quad
2\gamma^1_n+2\gamma^2_n-3\}\,, \label{gamma1g}\\
\gamma^2_{n+1}&={\rm min}\{4\gamma^1_n-2,2\gamma^1_n+2\gamma^2_n-3,4\gamma^2_n-4\}\,.\label{gamma2g}
\end{align}
Figures~\ref{dnfo0},~\ref{dnfo1}, and~\ref{dnfo2} show, respectively, all the possible  configurations of dominating sets $\Gamma^0_{n+1}$,  $\Gamma^1_{n+1}$, and $\Gamma^2_{n+1}$ for graph $\mathcal{G}'_{n+1}$. From these three figures,
we can establish Eqs.~\eqref{gamma0g},~\eqref{gamma1g}, and~\eqref{gamma2g}. By using the initial conditions $\gamma^0_3=8$, $\gamma^1_3=7$ and $\gamma^2_3=4$, Eqs.~\eqref{gamma0g},~\eqref{gamma1g}, and~\eqref{gamma2g} are solved to yield $\gamma^0_n=\frac{2^{2n-3}+3\cdot2^n-2}{3}$, $\gamma^1_n=\frac{2^{2n-3}+3\cdot2^{n-1}+1}{3}$ and $\gamma^2_n=\frac{2^{2n-3}+4}{3}$.  Hence, $\gamma_n={\rm min}\{\gamma^0_n, \gamma^1_n, \gamma^2_n\}=\gamma^2_n=\frac{2^{2n-3}+4}{3}$.
\end{proof}

%%%%%%%%%%%%%%%%%%%%%%%%%%%%%%%%%%%%%%%%%%%%%%
\begin{figure}
\centering
    \begin{minipage}[c]{0.5\textwidth}
        \centering
        \includegraphics[width=0.27\textwidth]{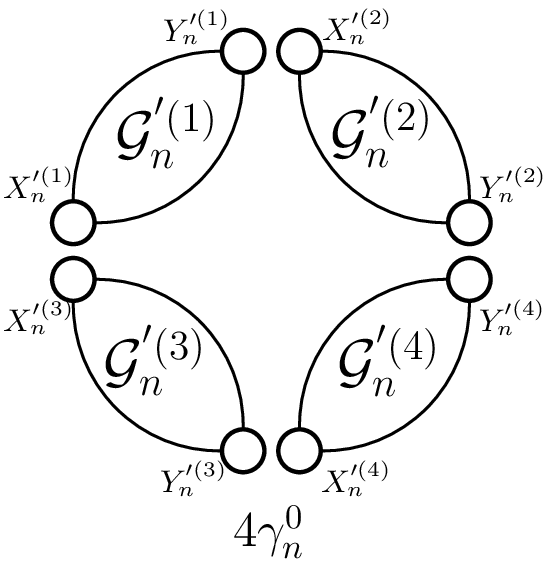}
        \includegraphics[width=0.3\textwidth]{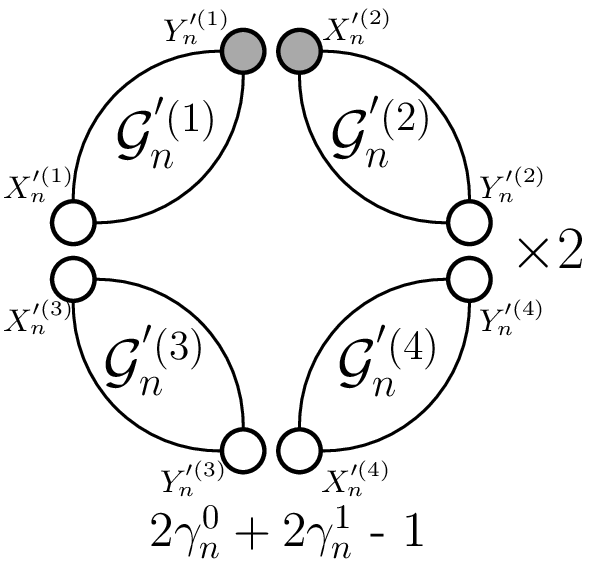}
        \includegraphics[width=0.27\textwidth]{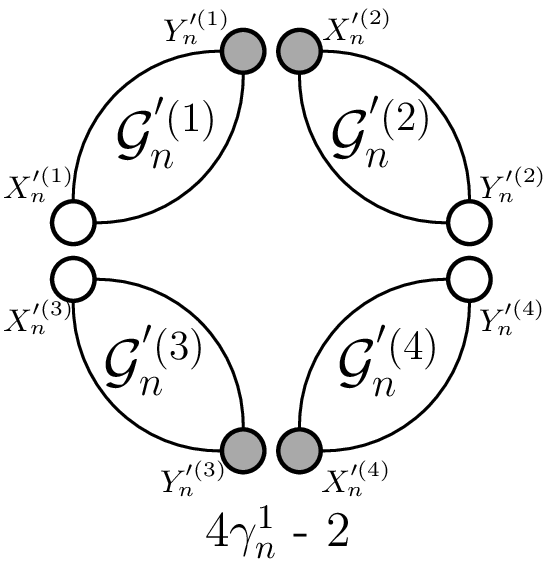}
    \end{minipage}
\caption{Illustration of all possible configurations and their sizes of dominating sets $\Gamma^0_{n+1}$ in graph $\mathcal{G}'_{n+1}$ containing $\Upsilon^0_{n+1}$.}
\label{dnfo0}
\end{figure}
%%%%%%%%%%%%%%%%%%%%%%%%%%%%%%%%%%%%%%%%%%%%%%

%%%%%%%%%%%%%%%%%%%%%%%%%%%%%%%%%%%%%%%%%%%%%%
\begin{figure}
\centering
    \begin{minipage}[c]{0.5\textwidth}
        \centering
        \includegraphics[width=0.3\textwidth]{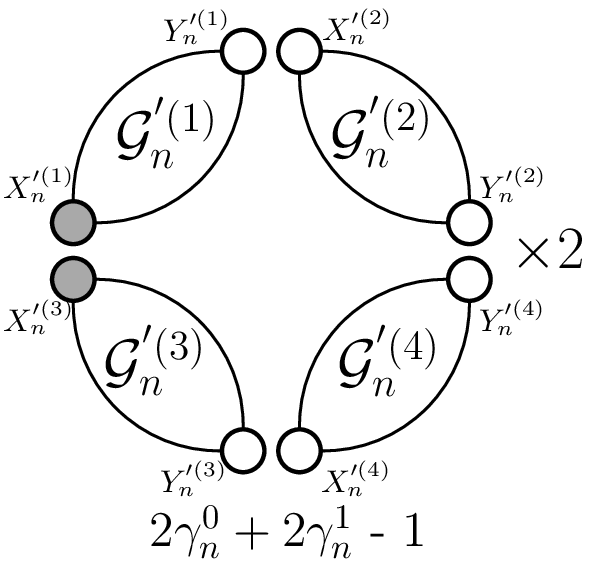}
        \includegraphics[width=0.3\textwidth]{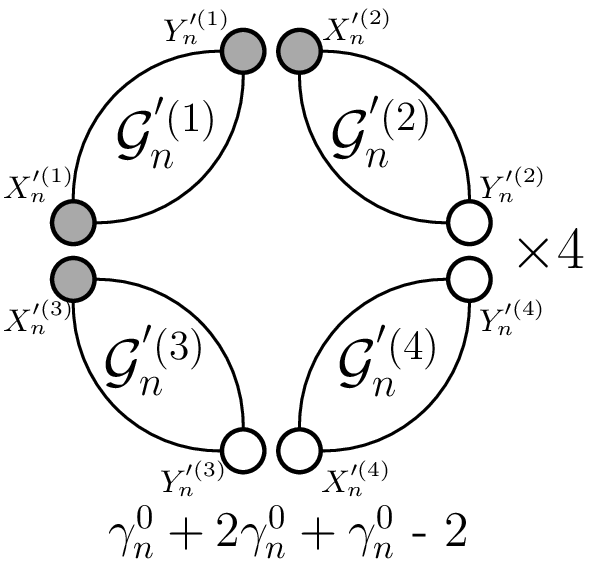}
        \includegraphics[width=0.3\textwidth]{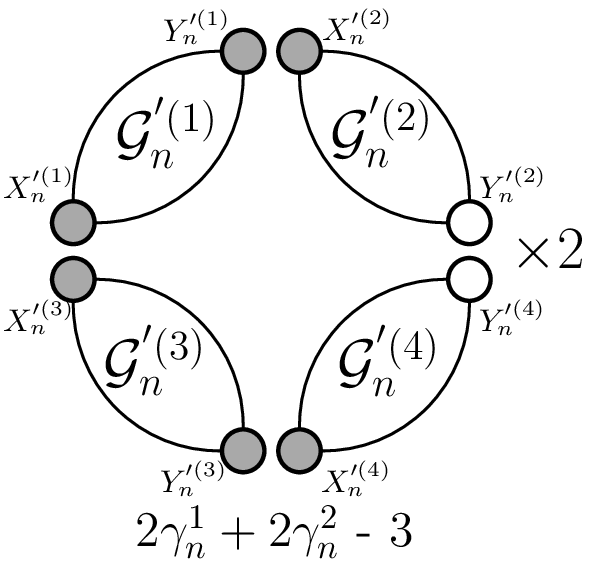}
    \end{minipage}
\caption{Illustration of all possible configurations and their sizes of dominating sets $\Gamma^1_{n+1}$ in graph $\mathcal{G}'_{n+1}$ containing $\Upsilon^1_{n+1}$.}
\label{dnfo1}
\end{figure}
%%%%%%%%%%%%%%%%%%%%%%%%%%%%%%%%%%%%%%%%%%%%%%

%%%%%%%%%%%%%%%%%%%%%%%%%%%%%%%%%%%%%%%%%%%%%%
\begin{figure}
\centering
    \begin{minipage}[c]{0.5\textwidth}
        \centering
        \includegraphics[width=0.27\textwidth]{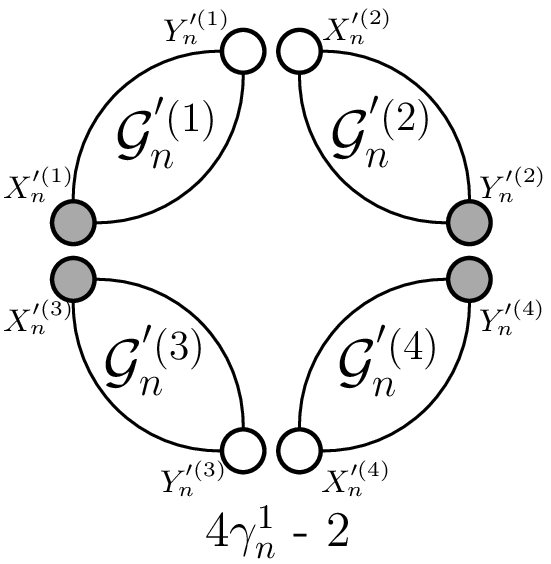}
        \includegraphics[width=0.3\textwidth]{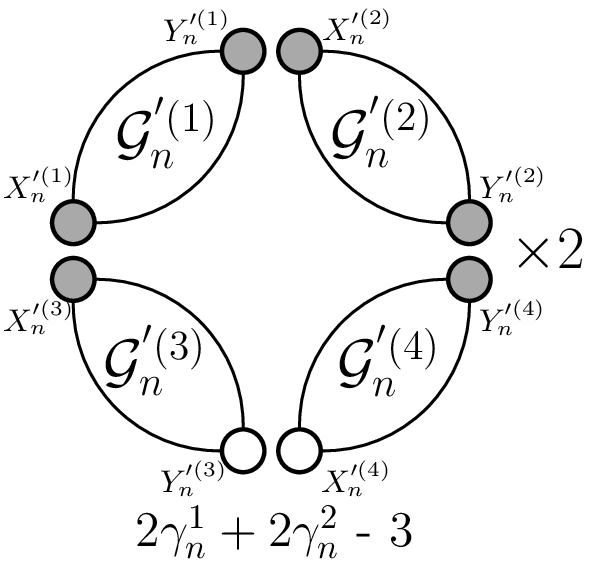}
        \includegraphics[width=0.27\textwidth]{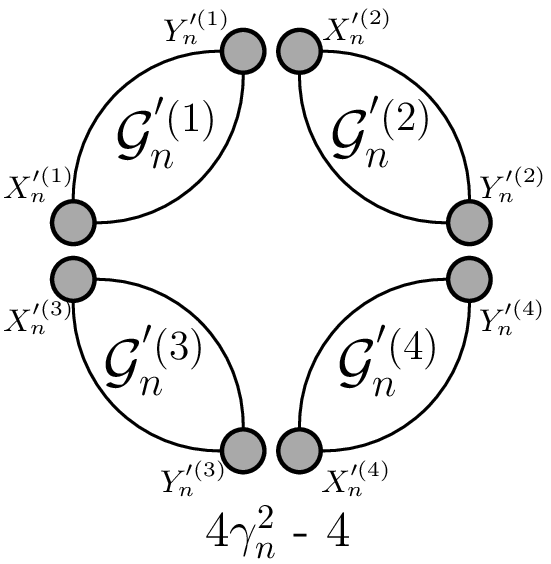}
    \end{minipage}
\caption{Illustration of all possible configurations and their sizes of dominating sets $\Gamma^2_{n+1}$ in graph $\mathcal{G}'_{n+1}$ containing $\Upsilon^2_{n+1}$.}
\label{dnfo2}
\end{figure}
%%%%%%%%%%%%%%%%%%%%%%%%%%%%%%%%%%%%%%%%%%%%%%

\subsubsection{Number of minimum dominating sets}
In contrast to the its fractal counterpart  $\mathcal{G}_{n}$,  the non-fractal scale-free graph $\mathcal{G}'_{n}$ has only one maximum independence set for all $n \geq 3$.

\begin{theorem}\label{NumMDSg}
In  the non-fractal  scale-free graph $\mathcal{G}'_{n}$, $n \geq 3$, there exists a unique minimum  dominating set.
\end{theorem}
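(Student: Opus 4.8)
The plan is to mimic exactly the argument used for Theorem~\ref{NumMISg} (uniqueness of the MIS in $\mathcal{G}'_n$), replacing independent sets by dominating sets and the recursion~\eqref{alpha0G} by~\eqref{gamma2g}. Let $y_n$ denote the number of MDSs in $\mathcal{G}'_n$. The base case is handled by direct (computer-aided) enumeration on the small graph $\mathcal{G}'_3$: one checks that $\gamma_3 = \gamma_3^2 = 4$ and that the minimizing configuration is unique, so $y_3 = 1$. Then I would prove the recursion $y_{n+1} = y_n^4$ for $n \geq 3$ and conclude $y_n = 1$ for all $n \geq 3$ by trivial induction.

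The heart of the argument is the recursion. From the solved formulas $\gamma^0_n=\frac{2^{2n-3}+3\cdot2^n-2}{3}$, $\gamma^1_n=\frac{2^{2n-3}+3\cdot2^{n-1}+1}{3}$, $\gamma^2_n=\frac{2^{2n-3}+4}{3}$ one sees that for $n \geq 3$ we have the strict ordering $\gamma^2_n < \gamma^1_n < \gamma^0_n$, and moreover the gaps $\gamma^1_n-\gamma^2_n$ and $\gamma^0_n-\gamma^1_n$ grow, so in each of the ${\rm min}$-expressions~\eqref{gamma0g},~\eqref{gamma1g},~\eqref{gamma2g} the term achieving the minimum is the one that uses the smaller superscripts as much as possible. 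In particular, inspecting~\eqref{gamma2g} (equivalently the configurations in Fig.~\ref{dnfo2}), for $n$ large enough the unique minimizing term for $\gamma^2_{n+1}$ is $4\gamma^2_n - 4$, i.e. every MDS of $\mathcal{G}'_{n+1}$ restricts, on each of the four copies $\mathcal{G}_n^{'(\theta)}$, to an MDS of that copy lying in $\Upsilon_n^2$ (a minimum dominating set of $\mathcal{G}'_n$ containing both hub vertices), with the four hub-identifications and the extra edge between $X'_{n+1}$ and $Y'_{n+1}$ automatically consistent because the shared vertices are already dominated and belong to all four sub-sets. Conversely any such union is a dominating set of $\mathcal{G}'_{n+1}$ of the minimum size $4\gamma_n^2-4$. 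Hence MDSs of $\mathcal{G}'_{n+1}$ are in bijection with $4$-tuples of MDSs of $\mathcal{G}'_n$ (each forced to be the $\Upsilon^2$-type, but since $\gamma_n = \gamma_n^2$ every MDS of $\mathcal{G}'_n$ is already of this type), giving $y_{n+1} = y_n^4$.

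The main obstacle is pinning down the base case and the ``for $n$ large enough'' threshold: I must verify that the strict inequalities and the gap-monotonicity that make $4\gamma_n^2-4$ the \emph{strict} minimizer in~\eqref{gamma2g} already hold starting at $n = 3$ (not just asymptotically), and that $y_3 = 1$ — the latter requires either a careful hand enumeration of dominating sets of $\mathcal{G}'_3$ or a short computer search, exactly as the paper does for MISs. One subtlety to check is that no MDS of $\mathcal{G}'_{n+1}$ can ``save'' a vertex by using the cross-copy adjacencies at the identified hub vertices (so that a union of a $\Upsilon^2$-MDS, a $\Upsilon^1$-set, etc., could beat $4\gamma_n^2-4$); this is ruled out precisely by the arithmetic $2\gamma^1_n + 2\gamma^2_n - 3 > 4\gamma^2_n - 4 \iff \gamma^1_n > \gamma^2_n - \tfrac12$, which holds for $n \geq 3$. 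Once this is in place, the uniqueness $y_n \equiv 1$ follows immediately, and as a remark one can identify the unique MDS explicitly (analogously to the remark after Theorem~\ref{NumMISg}, it consists of a canonical set of hub-type vertices from the appropriate generation).
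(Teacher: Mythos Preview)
Your proposal is correct and follows essentially the same route as the paper: establish that for $n\geq 3$ the term $4\gamma^2_n-4$ is the strict minimizer in~\eqref{gamma2g}, deduce $y_{n+1}=y_n^4$, and conclude from a small base case. Your write-up is in fact more careful than the paper's own proof, which asserts the recursion and invokes an initial value (written there as $y_2=1$, though $y_3=1$ is what is actually needed) without spelling out the strict-inequality check you perform.
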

\begin{proof}
	 Denote by $y_n$ the number of MDSs in  $\mathcal{G}'_n$. Eq.~\eqref{gamma2g} and   Fig.~\ref{dnfo2} show that for $n \geq 3$, any  MDS of $\mathcal{G}'_{n+1}$ is actually the union of  MDSs in $\Upsilon_n^2$, of the four copies of $\mathcal{G}'_{n}$ (i.e. $\mathcal{G}_{n}^{'(1)}$, $\mathcal{G}_{n}^{'(2)}$, $\mathcal{G}_{n}^{'(3)}$, and $\mathcal{G}_{n}^{'(4)}$) forming $\mathcal{G}'_{n+1}$. Thus, one obtains $y_{n+1}=y_{n}^4$, which with the initial value $y_{2}=1$ is solved to give $y_{n}=1$ for  all $n \geq 3$.
\end{proof}
Theorem~\ref{NumMDSg} implies that  for all $n \geq 3$,  $\mathcal{G}'_n$  has a unique  MDS. Moreover,  the unique  MDSs of $\mathcal{G}'_n$, $n \geq 3$, in fact contains exactly  all the hub and border vertices in  graph $\mathcal{G}'_{n-1}$.

\section{Conclusion}

Many real-world networks simultaneously display the striking scale-free and self-similar properties. Prior works have shown that the scale-free topology has an substantial effects on the various properties of graphs, e.g., combinatorial properties.  In this paper, we studied some combinatorial problems for two self-similar scale-free networks with  identical power exponent, both of which are constructed in an iterative manner. At any iteration, the two networks have the same number of vertices and the same number of edges. Although both networks bear some resemblance, they differ in some aspects. For example, the first one is ``large-world'' and fractal, while the second one is small-world and non-fractal. By using their self-similarity and decimation technique, we provide exact expressions for the maximum number, the matching number, the independence number, and the domination number for both networks. Moreover, we find exact or recursive solutions to the number of maximum matchings, the number of MISs, and the number of MDSs for both graphs.

For the maximum matching problem, the matching number of the fractal graph is about twice that of the non-fractal graph, but in both graphs the number of maximum matchings grows exponentially with the number of total edges in the graphs. With respect to the MIS problem, the independence number of the first network is exactly half of the second network. In addition, the number of the MISs in the first graph grows exponentially with the number of vertices in the graph. In contrast, the second graph has a unique MIS. Finally, as for the MDS problem, the domination number of the fractal graph is about twice as large as its non-fractal counterpart. Moreover, the number of the MDSs in the fractal graph grows exponentially with the vertex number, while there exists a unique MDS in the non-fractal graph. Thus, although both graphs are self-similar and scale-free with the same vertex number, edge number, and power exponent, they greatly differ in the studied combinatorial aspects. Our results show that scale-free topology itself is not sufficient to characterize combinatorial properties in power-law graphs. Given the relevance of combinatorial problems to various practical scenarios, our work sheds light on better understanding the applications of combinatorial properties for scale-free networks.

\section*{Acknowledgements}

This work was supported  in part by the National Natural Science Foundation of China (Nos. 61803248, U20B2051, 61872093, and U19A2066), the National Key R \& D Program of China
(No. 2018YFB1305104 and 2019YFB2101703), and the Innovation Action Plan of Shanghai Science and Technology (Nos. 20222420800 and 20511102200).  Che Jiang was also supported by Fudan Undergraduate Research
Opportunities Program (FDUROP).

%This work was supported  in part by the National Key R \& D Program of China (No.2018YFB1305104), the National Natural Science Foundation of China (No.61872093),  Shanghai Municipal Science and Technology Major Project  (No.2018SHZDZX01) and ZJLab.

\section*{Data Availability Statement}

No new data were generated or analysed in support of this research.

\bibliographystyle{compj}
\bibliography{CombinSFBib}

\end{document}